\newtheorem{lemma}{Lemma}
\newtheorem{theorem}{Theorem}
\newtheorem{corollary}{Corollary}
\newtheorem{proposition}{Proposition}
\newtheorem{conjecture}{Conjecture}
\newtheorem{definition}{Definition}
\newtheorem{observation}{Observation}
\newenvironment{proof}
{ \begin{IEEEproof} }
{ \end{IEEEproof} }
\newcommand{\norm}[1]{\left\lVert#1\right\rVert}
\newcounter{longaligned}
\newenvironment{longaligned}[1][]
 {%
  \stepcounter{longaligned}%
  \refstepcounter{equation}%
  \label{longaligned@\thelongaligned}%
  #1%
  \start@align\@ne\st@rredtrue\m@ne 
 }
 {\endalign}
\begin{document}
\title{Simplex Queues for Hot-Data Download}

\author{Mehmet Fatih Akta\c{s},
Elie Najm,
and Emina Soljanin,~\IEEEmembership{Fellow,~IEEE}%
\thanks{M. Aktas and E. Soljanin are with Rutgers University. E-mail: \{mehmet.aktas, emina.soljanin\}@rutgers.edu.}%
\thanks{E. Najm is with EPFL. E-mail: elie.najm@epfl.ch.}%
\thanks{This paper was presented in part at 2017 ACM SIGMETRICS/International Conference on Measurement and Modeling of Computer Systems.}}

\maketitle

\begin{abstract}
In cloud storage systems, hot data is usually replicated over multiple nodes in order to accommodate simultaneous access by multiple users as well as increase the fault tolerance of the system. Recent cloud storage research has proposed using availability codes, which is a special class of erasure codes, as a more storage efficient way to store hot data. These codes enable data recovery from multiple, small disjoint groups of servers. The number of the recovery groups is referred to as the \emph{availability} and the size of each group as the \emph{locality} of the code. Until now, we have very limited knowledge on how code locality and availability affect data access time. Data download from these systems involves multiple fork-join queues operating in-parallel, making the analysis of access time a very challenging problem.
In this paper, we present an approximate analysis of data access time in storage systems that employ simplex codes, which are an important and in certain sense optimal class of availability codes.
We consider and compare three strategies in assigning download requests to servers; first one aggressively exploits the storage availability for faster download, second one implements only load balancing, and the last one employs storage availability only for hot data download without incurring any negative impact on the cold data download.

\end{abstract}
\begin{IEEEkeywords}
Distributed storage, erasure coding, hot data access, queueing analysis.
\end{IEEEkeywords}

\section{Introduction}
In distributed systems, reliable data storage is accomplished through redundancy, which has traditionally been achieved by simple replication of data across multiple nodes \cite{GFS:GhemawatGL03, borthakur2007hadoop}. A special class of erasure codes, known as locally repairable codes (LRCs) \cite{gopalan2012locality}, has started to replace replication in practice \cite{huang2012erasure, sathiamoorthy2013xoring} as a more storage efficient way to provide a desired reliability.

A storage code has locality $r$ and availability $t$ if each data symbol can be recovered from $t$ disjoint groups of at most $r$ servers. Low code locality is desired to limit the number of nodes accessed while recovering from a node failure. Code availability provides a means to cope with node failures and skews in content popularity as follows. First, when data is available in multiple recovery groups, simultaneous node failures have lower chance of preventing the user from accessing the desired content.
Second, frequently requested \textit{hot data} can be simultaneously served by the node at which the data resides as well as multiple groups of servers that store the recovery groups the desired data. Popularity of the stored content in distributed systems is shown to exhibit significant variability. Data collected from a large Microsoft Bing cluster shows that 90\% of the stored content is not accessed by more than one task simultaneously while the remaining 10\% is observed to be frequently and simultaneously accessed \cite{CopingWithSkewedContentPopularityInMapreduce:AnanthanarayananAK11}.
LRCs with good locality and availability have been explored and several construction methods are presented in e.g., \cite{tamo2014family, rawat2014locality}.

It has recently been recognized that the storage redundancy can also provide fast data access \cite{joshi2012coding}. Idea is to simultaneously request data from both the original and the redundant storage, and wait for the fastest subset of the initiated downloads that are sufficient to reconstruct the desired content. Download with redundant requests is shown to help eliminating long queueing and/or service times (see e.g. \cite{joshi2012coding, Codes&Qs:HuangPZ12, gardner2015reducing, MeanFieldAnalysisCodingVsRep:LiRS16, LatencyAnalysisMDSorRep:Parag17, Rlog:JoshiSW17, DecouplingSlowdownJobsize:GardnerHS17, MDSn2:AktasS18} and references therein).
Most of these papers consider download of all jointly encoded pieces of data, i.e., the entire file. Very few papers have addressed download in low traffic regime of only some, possibly hot, pieces of data that are jointly encoded with those of less interest \cite{kadhe2015analyzing, kadhe2015availability, HD:ShuaiL16}.

In this paper, we are concerned with hot data download from systems that employ an LRC with availability for reliable storage \cite{tamo2014family, rawat2014locality}.
In particular, we consider simplex codes, which is a subclass of LRCs with availability. Three reasons for this choice are given as follows:
1) Simplex codes are optimal in several ways, e.g., they meet the upper bound on the distance of LRCs with a given locality \cite{BoundsOnSizeOfLRCs:CadambeM15}, they are shown to achieve the maximum rate among the binary linear codes with a given availability and locality two \cite{RateOptimalityOfSimplex:KadheC17}, they meet the Griesmer bound and are therefore linear codes with the lowest possible length given the code distance \cite{Klein2004griesmer},
2) Simplex codes are minimally different from replication, in that when data is simplex coded any single node failure can be recovered by accessing two nodes, while accessing a single node is sufficient to recover replicated data,
3) Simplex codes have recently been shown to achieve the maximum fair robustness for a given storage budget against the skews in content popularity \cite{ServiceCapacity:AktasAJ17}.

In a distributed system that employs a simplex code, hot data can be downloaded either from a single {\it systematic} node that stores the data or from any one of the $t$ pairs of nodes (i.e., {\it recovery groups}) where the desired data is encoded jointly with others, or redundantly by some subset of these options (see Fig.~\ref{fig:fig_access_schemes}).
We consider three strategies for scheduling the download requests:
1) \emph{Replicate-to-all} where each arriving request is simultaneously assigned to its systematic node and all its recovery groups,
2) \emph{Select-one} where each arriving request is assigned either to its systematic node or to one of its recovery groups,
3) \emph{Fairness-first} where each arriving request is primarily assigned to its systematic node while only hot data requests are opportunistically replicated to their recovery groups when they are idle.
The first two scheduling strategies are the two polarities between download with redundant requests and plain load balancing, while the third aims to exploit download with redundancy while incurring no negative impact on the cold data download time.

\begin{figure}[t]
  \centering
  \begin{tikzpicture}
    \node at (0,0) {\includegraphics[scale=1.2]{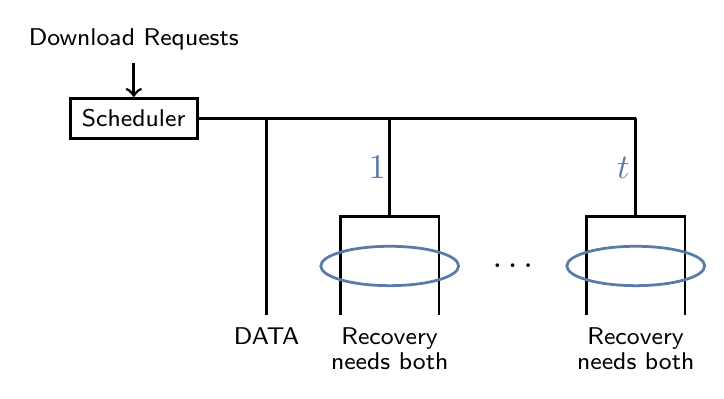}};
  \end{tikzpicture}
  \vspace*{-0.15cm}
  \caption{Data access model in a system that employs a simplex code with availability $t$. Desired data can be accessed by downloading either from the systematic node that stores the data or any of the $t$ recovery groups of two nodes or redundantly by some subset of these options. Download from a recovery group requires fetching the coded content from both servers and recovering the desired data.}
  \label{fig:fig_access_schemes}
\end{figure}

A download time analysis for storage systems that employ LRCs is given in \cite{kadhe2015analyzing, kadhe2015availability} by assuming a low traffic regime, in that no request arrives at the servers before the request in service departs, so the adopted system model does not involve any queues or any other type of resource sharing. When the low traffic assumption does not hold, a system that employs an LRC with availability consists of multiple inter-dependent fork-join queues. This renders the download time analysis very challenging even under Markovian arrival and service time models.

In this paper, we present a first attempt on analyzing the download time with no low traffic assumption in systems that employ LRCs with availability, simplex codes in particular, for reliable storage.
Under replicate-to-all scheduling, system implements multiple inter-dependent fork-join queues, hence the exact analysis is formidable due to the infamous state explosion. We outline a set of observations and heuristics that allow us to approximate the system as an $M/G/1$ queue. Using simulations, we show that the presented approximation gives an accurate prediction of the average hot data download time at all arrival regimes.
Under select-one scheduling, the results available in the literature for fork-join queues with two servers can be translated and used to get accurate approximations of the average download time.
At last, we study download time under fairness-first scheduling, and evaluate the pain and gain of replicate-to-all (aggressive exploitation of redundancy) compared to fairness-first (opportunistic exploitation of redundancy).

The remainder of the paper is organized as follows. In Sec.~\ref{sec:sec_sys_model} we define a simplex coded storage model, and explain the queueing and service model that is adopted to study the data access time. Sec.~\ref{sec:sec_reptoall} introduces the replicate-to-all scheduling, explains the difficulties it poses for analyzing data access time and presents an approximate method for analysis. Sec.~\ref{sec:sec_reptoall_t1} presents the analysis for systems with availability one and Sec.~\ref{sec:sec_reptoall_t_conj_approx} generalizes the analysis for systems with any degree of availability.
In Sec.~\ref{sec:sec_selectone} we present the select-one scheduling and compare its data access time performance with replicate-to-all scheduling. Sec.~\ref{sec:sec_fairnessfirst} introduces the fairness-first scheduler and shows the gain and pain of replicate-to-all over fairness-first in hot and cold data access time.


\section{System Model}
\label{sec:sec_sys_model}
A binary simplex code is a linear systematic code. With an $[n=2^k-1, k]$ simplex code, a data set $[d_1, \ldots, d_k]$ is expanded into $[d_1, \ldots, d_k, c_1, \ldots c_{n-k}]$ where $c_i$'s represent the added redundant data. Each data symbol in the resulting expanded data set is stored on a separate node.
For a given $k$ and $n$, simplex codes have optimal distance and a simplex coded storage can recover from simultaneous failure of any $2^{k-1} - 1$ nodes. Availability of a simplex code is also given as $t = 2^{k-1} - 1$ and its locality is equal to two, that is, any encoded data $d_i$ is available in a systematic node or can be recovered from $t$ other disjoint pairs of nodes. For instance, data set $[a, b, c]$ would be expanded into $[a, b, a+b, c, a+c, b+c, a+b+c]$ in a $[7, 3]$ simplex coded storage.


In the data download scenario we consider here, download requests arrive for the individual data symbol $d_i$'s rather than the complete data set.
We assume that one of the jointly encoded data symbols is more popular than the others at any time. The popular data symbol is referred as the hot data while the others are referred as cold data.
Download requests are dispatched to the storage servers at their arrival time to the system. We consider three request scheduling strategies.
The first one is {\it replicate-to-all} scheduling which aggressively implements download with redundancy for all arriving requests. Each request is assigned to its systematic server that stores the desired data, and simultaneously replicated to all its $t$ recovery groups at which the requested data can be reconstructed. A request is completed and departs the system as soon as the fastest one of its $t+1$ dispatched copies completes, and its remaining $t$ outstanding copies get immediately removed from the system.
The second one we consider is {\it select-one} scheduling which does not employ redundancy for download and simply implements load-balancing, in that each arriving request is assigned either to its systematic server or one of its recovery groups.
In the download time analysis of these two scheduling strategies, download traffic for cold data is assumed to be negligible and the goal of the system is to finish the arriving hot data requests as fast as possible. We compare the performance of these two polarities of scheduling in terms of the system stability and the average hot data download time.

At last, we consider the scenario in which cold data arrival traffic is non-negligible while the cold data requests are assumed to arrive at a slower rate than the hot data requests. Hot data download with redundancy can reduce the download time and allow the hot data requests to leave the system faster. However, redundant copies of the hot data requests occupy the cold data servers in the recovery groups, which causes additional waiting time for the cold data requests and may lead to dramatically higher cold data download times. This cannot maintain fairness across the hot and cold data requests, which is an important goal of scheduling in systems with redundancy \cite{SchingRedForFairness:GardnerHH17}.
We introduce and study {\it fairness-first} scheduling which exploits redundancy for hot data download opportunistically with no effect on the cold data download. Each arriving request is primarily assigned to its systematic server, and only the hot data requests are replicated to the recovery groups. A redundant copy of a hot data request  is accepted into service only if the cold data server in the assigned recovery group is idle. Even if a redundant hot data request copy is let to start service at a recovery group, it is immediately removed from service as soon as a cold data request arrives to the occupied cold data server.

We assume for tractable analysis that download requests arrive to system according to a Poisson process of rate $\lambda$.
Each server can serve only one request at a time and the requests assigned to a server wait for their turn in a FIFO queue. We refer to the random amount of time to fetch the data at a server and stream it to the user as the \textit{server service time}.
Download of hot data at a recovery group requires fetching content from both servers. Thus, redundant hot data requests assigned to a recovery group are {\it forked} into the two recovery servers, which then need to {\it join} to complete a download.

Initial or redundant data stored at the storage nodes have the same size. We assume independent service time variability for content download at each server. This allows us to model the service times with an identical and independent (i.i.d.) distribution across the requests and the servers.
We firstly study the system by assuming exponential service times for analytic tractability, then extend the analysis in some cases to general service times. Note that, exponential service times cannot model the effect of data size on the download time since a sample from it can be as small as zero. Therefore, it is an appropriate model only when the effect of data size on the service times is negligible \cite{DecouplingSlowdownJobsize:GardnerHS17}.


\section{Replicate-to-all Scheduling}
\label{sec:sec_reptoall}
In this section, we consider hot data download in a binary simplex coded storage under replicate-to-all scheduling.  Recall that in a Microsoft Bing data center, 90\% of the stored content is reported to be not accessed frequently or simultaneously \cite{CopingWithSkewedContentPopularityInMapreduce:AnanthanarayananAK11}. Adopting this observation in our analysis of the hot data access time under replicate-to-all scheduling, download traffic for cold data is assumed to be negligible. Subsec.~\ref{subsec:subsec_reptoall_t_mixed_arrival} relates the analysis to the scenario where cold data download traffic is non-negligible.
Later in Sec.~\ref{sec:sec_fairnessfirst}, we study the scenario with non-negligible cold data download traffic and discuss the impact of replicate-to-all scheduling on hot and cold data download time.

Replicate-to-all scheduler replicates each arriving hot data request to its systematic server as well as to all its $t$ recovery groups (see Fig.~\ref{fig:fig_access_schemes}).
A request is completed as soon as one of its copies completes at either the systematic server or one of the recovery groups. As soon as a request is completed, all its outstanding copies get removed from the system immediately.
Download from each recovery group implements a separate fork join queue with two-servers.

Analysis of fork-join queues is a notoriously hard problem and exact average response time is known only for a two-server fork join queue with exponential service times \cite{flatto1984two}.
Moreover, fork-join queues in the replicate-to-all system are inter-dependent since the completion of a request copy at the systematic server or at one of the fork-join queues triggers the cancellation of its outstanding copies that are either waiting or in service at other servers.
Given this inter-dependence, exact analysis of the download time in replicate-to-all system is a formidable task.

Derivation of the average hot data download time is presented in \cite{kadhe2015analyzing} under the low traffic assumption, that is, assuming no queues or resource sharing in the system. When low traffic assumption does not hold and queueing analysis is required, an upper bound on the download time is found in \cite{kadhe2015analyzing} by using the more restrictive split-merge service model. Split-merge model assumes that servers are not allowed to proceed with the requests in their queues until the request at the head of the system departs. Simulation results show that the upper bound obtained by the split-merge model is loose unless the arrival rate is very low.
We here present a framework to approximate the replicate-to-all system as an $M/G/1$ queue under any arrival rate regime, which proves to be an accurate estimator of the average hot data download time. The presented approximation provides a fairly simple interpretation of the system dynamics and employs heuristics that shed light on the performance of aggressive exploitation of storage availability.

\subsection{$M/G/1$ Approximation}
\label{subsec:subsec_reptoall_mg1_approx}
Each arriving request has one copy at the systematic server and $t$ redundant copies split across the recovery groups. A request departs from the system as soon as one of its copies completes.
Request copies at the systematic server wait and get served in a FIFO queue, hence departures from the systematic server can only be in the order of arrival.
Request copies that are assigned to recovery groups are forked into two siblings and both have to finish service for the request completion.
One of the recovery servers can be ahead of the other one in the same group in executing the requests in its queue. We refer to such servers as a {\it leading server}.
Although one of the forked copies may happen to be served by a leading server and finish service earlier than its sibling, the remaining copy has to wait for the other request copies in front of it. Thus, departures at the recovery groups can also be only in the order of arrival.
These imply that a request cannot exit the replicate-to-all system before the other requests in front of it, that is, requests depart the system in the order they arrive.


Exact analysis of download time requires keeping track of each copy of the requests in the system.
There can be multiple (at most $t$) leading recovery servers in the system simultaneously and each can be ahead of its slower peer by at most as many requests as in the system. These cause the infamous state explosion problem and renders the exact analysis intractable.
Due to the leading servers, copies of a single request can start service at different times and multiple requests can be in service (there can be at most $t+1$ different requests in service simultaneously), which complicates the analysis even further.
In the following, we redefine the service start time of the requests, which allows us to eliminate this last mentioned complication.
\begin{definition}
  Service start time of a request is defined as the first time epoch at which all its copies are in service, i.e., when none of its copies is left waiting in queue.
\label{def_req_servstart}
\end{definition}
Given this new definition of request service start times, there can be only one request in service while the system is busy.
For two requests to be simultaneously in service, all copies of each must be in service simultaneously, which is impossible given that requests depart the system in order.
This observation is crucial for the rest of the section and stated below to be able to refer to it later.
\begin{observation}
  Requests depart the replicate-to-all system in the order they arrive and there can be at most one request in service at any time given Def.~\ref{def_req_servstart}.
\label{obv_reptoall}
\end{observation}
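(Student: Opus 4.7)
The plan is to prove the two parts of the observation in sequence: first, FIFO departure from the overall system, and then, using mutual exclusion at the systematic server, the at-most-one-in-service claim.

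For FIFO departure, I would argue locally at each of the $t+1$ service locations and then combine. The systematic queue is FIFO by assumption, so its completions occur in arrival order. For a recovery group, a request's two forked siblings sit in two separate FIFO queues and the group-level completion is triggered only when both have finished; hence completion at a group is governed by the slower sibling and is therefore in arrival order as well. A request leaves the system at the first copy completion across all locations, with remaining copies cancelled. If some request $j$ could depart before a strictly earlier request $i$ had departed, then at the location $L$ where $j$ completes, $i$'s copy at $L$ would either still be outstanding (contradicting local FIFO at $L$) or have already completed or been cancelled (implying $i$ already departed, contradicting the hypothesis). Combining across locations gives system-wide FIFO departure.

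For the single-in-service claim, I would argue by contradiction. Suppose two distinct requests $i$ and $j$ are both in service at some instant $\tau$ under Def.~\ref{def_req_servstart}. Then every copy of $i$ and every copy of $j$ is simultaneously being served at $\tau$; in particular, both the systematic-server copy of $i$ and that of $j$ are in service, which is impossible since a single server processes one request at a time. Equivalently, if $i$ arrived before $j$, then FIFO departure implies $i$ must still have an outstanding copy that has not yet completed, but by Def.~\ref{def_req_servstart} that copy of $i$ cannot be merely queued either (it must be in service). These conditions are jointly incompatible with single-server capacity, so at most one request is in service at any time.

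The main technical delicacy lies in the FIFO step, since a leading recovery server can race ahead and complete a forked copy of a later request while its twin is still queued behind earlier requests' twins; I would make precise that such a completion merely triggers a join-wait, not a system departure, because the join requires the slower (still-FIFO) sibling to finish. Once FIFO departure is established, the single-in-service claim is essentially a one-line consequence of single-server exclusivity at the systematic node, so the bulk of the work is really in the first part.
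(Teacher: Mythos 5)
Your proposal is correct and follows essentially the same route as the paper: FIFO departures are established location by location (systematic server trivially, recovery groups because the join is governed by the slower sibling, which is itself FIFO), and the at-most-one-in-service claim then follows as an immediate consequence of single-server exclusivity under Def.~\ref{def_req_servstart}. The only cosmetic difference is that you ground the second part in the systematic server's capacity while the paper invokes ordered departures, but these are interchangeable one-line justifications.
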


Once a request starts service according to Def.~\ref{def_req_servstart}, the layout of its copies at the servers determines its service time distribution. For instance, if all copies of a request remain in the system at its service start time epoch, then the request will be served according to
\[ S_0 \sim \min\{\tilde{V}, \tilde{V}^1_{2:1}, \dots, \tilde{V}^t_{2:1}\} \]
where $\tilde{V}$ denotes the residual service time of the copy at the systematic server, and $\tilde{V}^i_{2:1}$'s denote the maximum of the two residual service times of the sibling copies at each recovery group.

A request copy can depart from a leading server before the request starts service. When $j$ copies of a request depart earlier, we denote the service start layout of the request as type-$j$, for which the service time distribution is given as
\[ S_j \sim \min\{\tilde{V}_0, \dots, \tilde{V}_j, \tilde{V}^1_{2:1}, \dots, \tilde{V}^{t-j}_{2:1}\} \]
for $j= 0, 1, \dots, t$. $\tilde{V}_0$ is the residual service time at the systematic server, $\tilde{V}_1, \dots, \tilde{V}_j$ denote the residual service times at the $j$ recovery groups with leading servers and $\tilde{V}^i_{2:1}$'s denote the maximum of the two residual service times at the remaining $t-j$ recovery groups without a leading server.
Notice that the previous example with no early departure of any request copy corresponds to type-$0$ service start.

The layout of the copies of a request at the servers, hence its service time distribution, is determined by the number of copies that depart before the request starts service. These early departures can only be from the leading servers so the service time distribution of a request depends on the difference between the queue lengths at the leading servers and their slow peers. In other words, service time distribution of a request is dictated by the system state at its service start time epoch.
Queue lengths carry memory between service starts of the subsequent requests. For instance, if a request makes a type-$j$ service start with all the leading servers being ahead of their slow peers by at least two, then the service start type of the subsequent request will be at least $j$. Therefore, in general, service time distributions are not independent across the subsequent requests.

When service times at the servers are not exponentially distributed, there are infinitely many possible distributions for the request service times. This is because some copies of a request may start earlier and stay in service until the request moves to head of the line and starts service. Residual service time of these early starters is differently distributed than the request copies that move to service at the request service start time epoch.
We can trim the number of possible request service time distributions down to $t+1$ by modifying the system such that each copy of a request remaining in service is restarted at the service start time of the request. This modification is essential for approximating the download time when the service times at the servers are not exponentially distributed, as further discussed in Sec.~\ref{subsec:subsec_reptoall_t_nonexp_servers}.
When service times at the servers are exponentially distributed, this modification is not required. Memoryless property allows us to treat the copies that start service early as if they move to service at the request service start time.
\begin{lemma}
  In replicate-to-all system, if service time $V_i$'s at the servers are exponentially distributed with rate $\mu$, an arbitrary request can be served with any of the $t+1$ different types of distributions. Type-$j$ distribution for $j = 0, \dots, t$ is given as
  \begin{equation}
  \begin{split}
    S_j \sim \min\{V, \ldots, V_j, V^1_{2:1}, \dots, V^{t-j}_{2:1}\}
  \end{split}
  \label{eq:eq_V_j}
  \end{equation}
  where $V^i_{2:1}$'s are distributed as the maximum of the two independent $V_i$'s.
  
  Then, first and second moments of type-$j$ request service time distribution are given as
  \begin{equation}
    \begin{split}
      E[S_j] &= \sum\limits_{k=0}^{t-j} {{t-j}\choose{k}} \frac{2^k (-1)^{t-j-k}}{\mu(2t+1-j-k)}, \\
      E[S_j^2] &= \sum\limits_{k=0}^{t-j} {{t-j}\choose{k}} \frac{2^{k+1} (-1)^{t-j-k}}{\bigl[\mu(2t+1-j-k )\bigr]^2}.
  \end{split}
  \label{eq:eq_serv_dist_moments}
  \end{equation}
  All moments of $S_j$ decrease monotonically in $j$.
\label{lm_reqservtime_dists}
\end{lemma}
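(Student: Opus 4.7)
The plan is to establish the three claims of the lemma in turn: (i) that at its service-start epoch a request's service time has exactly one of the $t+1$ forms given in \eqref{eq:eq_V_j}, (ii) the closed-form moments in \eqref{eq:eq_serv_dist_moments}, and (iii) monotonicity of all moments in $j$.

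For (i), I would invoke Def.~\ref{def_req_servstart} and Obs.~\ref{obv_reptoall}: at the service-start epoch all $t+1$ copies of the request are simultaneously in service and no other request is. The systematic copy has just entered service and contributes a fresh $\mathrm{Exp}(\mu)$. In each of the $t$ recovery groups, two configurations are possible. Either neither server is leading, in which case both sibling copies start fresh and the group contributes $V^i_{2:1}=\max(V_a,V_b)$. Or the leading server has already completed and discarded its copy of the request while the slow peer was still draining its queue, so that only the slow peer's copy remains at the service-start epoch, contributing a single $\mathrm{Exp}(\mu)$. Memorylessness is what justifies that this slow peer's residual time is still exponential even though it may have begun service earlier. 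Letting $j$ count the recovery groups in the second configuration yields exactly $t+1$ types and the form in \eqref{eq:eq_V_j}.

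For (ii), I would compute the tail directly. Using $\Pr[V>x]=e^{-\mu x}$ and $\Pr[V^i_{2:1}>x]=2e^{-\mu x}-e^{-2\mu x}$, independence across the min-constituents gives
\[\Pr[S_j>x] \;=\; e^{-(j+1)\mu x}\,\bigl(2e^{-\mu x}-e^{-2\mu x}\bigr)^{t-j}.\]
A binomial expansion of the second factor collapses the tail into a signed sum of pure exponentials with rates $\mu(2t+1-j-k)$ and weights $\binom{t-j}{k}2^k(-1)^{t-j-k}$ for $k=0,\dots,t-j$. The identities $E[S_j]=\int_0^\infty\Pr[S_j>x]\,dx$ and $E[S_j^2]=2\int_0^\infty x\,\Pr[S_j>x]\,dx$ then deliver \eqref{eq:eq_serv_dist_moments} term by term.

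For (iii), I would build a pointwise coupling: moving from type $j$ to type $j{+}1$ replaces one $V^i_{2:1}=\max(V_a,V_b)$ in the min by the single $V_a$. Since $V_a\le\max(V_a,V_b)$ surely, the new minimum can only be smaller, so $S_{j+1}\le S_j$ pointwise and hence stochastically; every moment $E[S_j^n]$ therefore decreases in $j$. The main obstacle is really (i): one must argue carefully that a slow peer's residual service time remains $\mathrm{Exp}(\mu)$ at the request's service-start epoch even though that copy may have entered service much earlier, which is precisely where memorylessness is essential and why the extension to non-exponential servers requires the service-restart device discussed later in the paper.
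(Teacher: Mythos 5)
Your proposal is correct and follows essentially the same route as the paper's (much terser) proof: memorylessness to justify the $t+1$ residual-time types, an explicit tail computation $\Pr\{S_j>x\}=e^{-(j+1)\mu x}\bigl(2e^{-\mu x}-e^{-2\mu x}\bigr)^{t-j}$ expanded binomially and integrated for the moments, and the "one remaining copy beats two" stochastic-dominance argument for monotonicity. Your pointwise coupling in (iii) and the careful accounting of leading-server configurations in (i) simply make explicit what the paper leaves implicit.
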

\begin{proof}
  \eqref{eq:eq_V_j} follows from the memoryless property of exponential service times at the servers, then by the law of total expectation \eqref{eq:eq_serv_dist_moments} is obtained.
  
  One can easily show that the tail probability of type-$j$ request service time $Pr\{S_j \geq s\}$ decreases monotonically in $j$, i.e., $S_j$ is stochastically dominated by $S_{j+1}$ for $j = 0, \dots, t-1$. To see this intuitively, type-$j$ request service start means that at exactly $j$ recovery groups, one of the forked copies departs from a leading server before the request starts service. For the completion of such requests, the single remaining copy has to finish at $j$ recovery groups with a leading server, while both of the remaining copies have to finish service at the other $t-j$ recovery groups. Waiting for one copy is stochastically better than having to wait for two, hence the larger the number of early departures the faster the request service time will be.
  Besides, since service times are non-negative we know
  \[ E[S^m_j - S^m_{j+1}] = \int_{0}^{\infty} m s^{m-1} \bigl(Pr\{S_j \geq s\} - Pr\{S_{j+1} \geq s\}\bigr) ds > 0, \]
  thus all moments of $S_j$ decrease monotonically in $j$.
\end{proof}


Even though service time distributions are not independent across the requests, they are only loosely coupled. For the request at the head of the line to make type-$j$ start, there needs to be a leading server in $j$ of the recovery groups, while the servers in the rest of the recovery groups should be leveled.
Every request departure from the system triggers cancellation of the copies at the slow recovery servers, which helps the slow servers to catch up with the leading servers. Therefore, it is ``hard'' for the leading servers to keep leading because they compete not just with their slow peers but with every other server in the system. Thus, the queues at all the servers are expected to frequently level up. A time epoch at which the queues level up breaks the dependence between the service times of the requests that start service before or after the levelling. Given that these time epochs occur frequently, request service times constitute a series of independent small-size batches where the dependence exists only within each batch.
\begin{observation}
  Replicate-to-all system experiences frequent time epochs across which the request service time distributions are independent.
\label{obv_freq_renewals}
\end{observation}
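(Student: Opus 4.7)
The plan is to formalize this observation as a regenerative structure on the replicate-to-all system. I would model the system under the exponential service assumption of Lemma~\ref{lm_reqservtime_dists} as a continuous-time Markov chain whose state is the vector of queue lengths at each of the $2t+1$ servers, and define a \emph{levelling epoch} as a time instant at which, within every recovery group, the two servers hold the same number of outstanding request copies (equivalently, no server is leading its peer). The precise form of the observation to be proved would be that the sequence of levelling epochs forms a regenerative point process with finite mean inter-renewal time, and that across consecutive levelling epochs the joint distribution of request service times is independent.

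The approach would be a Foster--Lyapunov drift argument. I would take as Lyapunov function the total imbalance $V = \sum_{i=1}^{t} |q_{i,1} - q_{i,2}|$, where $q_{i,j}$ denotes the queue length at the $j$-th server of recovery group $i$. New request arrivals place a copy on every server simultaneously and therefore leave $V$ unchanged. Request completions, on the other hand, trigger cancellations across the other recovery groups through Observation~\ref{obv_reptoall}: when the head-of-line request departs the system, its still-pending copies are purged from every server where they reside, and because these copies tend to lie deeper in slow-server queues than in leading-server queues, the purging biases $V$ downward. Combined with a uniform bound on how much $V$ can grow between cancellation events, this yields negative drift on the set $\{V \geq c\}$ for sufficiently large $c$, and hence positive recurrence of the levelled set by standard Foster--Lyapunov criteria.

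Once positive recurrence is established, I would invoke the strong Markov property at levelling epochs. From any levelled state the future evolution of the chain is determined by the Markov dynamics independently of the history, so successive \emph{batches} of requests that start service between consecutive levelling epochs carry jointly distributed service times that are independent across batches; finiteness of the expected inter-renewal time means each batch contains, on average, a bounded number of requests, which is exactly what is meant by \emph{frequent}. This regenerative decomposition is what will later support treating the request service time sequence as approximately i.i.d. for the $M/G/1$ fit.

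The main obstacle will be a clean drift computation. Arrivals are trivial, but service completions affect $V$ in a case-dependent manner: a completion at the systematic server triggers simultaneous cancellations across all $t$ groups with an effect on $V$ that depends on the position of each cancelled copy in its queue, while a completion at a leading server of a single recovery group affects only that group's contribution. Handling all these cases to establish a strictly negative drift uniformly over states with large $V$ is combinatorially delicate and grows in complexity with $t$. A secondary subtlety is that although within-batch service times remain correlated, the heuristic force of Observation~\ref{obv_freq_renewals} depends on batches being short, so the argument should be supplemented by a tail bound on batch length derived from the same drift analysis.
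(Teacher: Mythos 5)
First, note that the paper does not actually prove this statement: it is an \emph{Observation}, supported only by the informal paragraph that precedes it (every departure cancels copies at the slow recovery servers, so a leading server competes against every other server to keep its lead, and the queues are therefore expected to level up frequently) and by simulation. You are therefore attempting something strictly stronger than what the authors do, and the first half of your plan is a faithful formalization of their reasoning: your remarks that arrivals leave the imbalance $V=\sum_i |q_{i,1}-q_{i,2}|$ unchanged while departures purge copies only from the slow servers' queues (the leading server has already cleared its copy of the departing request, so $V$ drops by the number of currently leading groups) capture exactly the mechanism the paper invokes, and a Foster--Lyapunov argument along these lines is the natural way to make ``frequent'' precise.

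The genuine gap is in the second half. Positive recurrence of the levelled set $\{V=0\}$ does not make levelling epochs regeneration points, because $\{V=0\}$ is a set of states rather than an atom: the state at a levelling epoch still records the total number of requests $N$ in the system, and $N$ carries information from the past that influences the future sequence of service-time types (whether a recovery server can pull ahead again depends on whether its queue is nonempty, hence on $N$). The strong Markov property therefore gives you conditional independence of successive batches given the state at each levelling epoch, not the unconditional independence the Observation asserts. The claim becomes exact only under the high-traffic assumption of Sec.~\ref{subsec:subsec_reptoall_t1_hightraff}, where queues never empty and the join-queue process $\bm{n}(t)$ is an autonomous Markov chain whose returns to the origin are true renewals; without that assumption the only exact regeneration points are arrivals that find the system empty, which is precisely the renewal process the paper uses in its renewal-reward expression for $f_j$, and these are far less frequent than levelling epochs. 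To close the gap you would have to either state the result as an approximation (as the paper implicitly does) or control the residual dependence through $N$, e.g., by a coupling of two copies of the chain started from different levelled states. Two smaller caveats: the downward drift requires $\gamma>0$ and stability so that departures occur at a sufficient rate, and for general $t$ a single departure decreases $V$ only by the number of groups whose leading server has already cleared the departing request, so the uniform negativity of the drift on $\{V\geq c\}$ is exactly the delicate case analysis you flag, not mere bookkeeping.
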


Lastly, request arrivals see time averages in terms of the probabilities over different service time distributions. This holds even when the service times at the servers are not exponentially distributed.
\begin{lemma}
  In replicate-to-all system, let $J_i$ be the type of service time distribution for the $i$th request. Then,
  \[ \lim_{i \to \infty} Pr\{J_i = j\} = f_j \]
  where $f_j$'s are the time average values of the probabilities over the possible request service time distributions.
\label{lm_serv_time_probs_are_time_avgs}
\end{lemma}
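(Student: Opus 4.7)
The plan is to invoke the Poisson Arrivals See Time Averages (PASTA) principle. By assumption, the arrival process is Poisson with rate $\lambda$, and its future is independent of the system's history and service dynamics, so the Lack of Anticipation condition required for PASTA is satisfied. I would \emph{not} rely on the service time distribution at the individual servers being exponential, since the lemma is stated to hold in general; all the memoryless structure is coming from the arrival process.

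First, I would introduce a joint state descriptor $Y(t)$ consisting of the queue length at each of the $2t+1$ servers together with the residual service time at each busy server and a flag identifying the currently in-service request. Under the stability assumption, $Y(t)$ is a positive-recurrent piecewise-deterministic Markov process with a unique stationary distribution $\pi$. Let $g(y) \in \{0,1,\dots,t\}$ be the type that the current in-service request carries when the state is $y$ (well-defined by Observation~\ref{obv_reptoall}, since at most one request is in service at any time), and set
\[
 f_j \;=\; \lim_{T\to\infty} \frac{1}{T}\int_0^T \mathbf{1}\{g(Y(t))=j\}\,dt,
\]
which exists $\pi$-a.s.\ by the ergodic theorem.

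Next, by PASTA the distribution of $Y(T_i^-)$ at the arrival epoch $T_i$ of the $i$-th request converges to $\pi$ as $i\to\infty$. To convert this into a statement about the request's \emph{own} type $J_i$ (which is determined at the service start epoch, not at arrival), I would use Observation~\ref{obv_reptoall}: requests start service and depart in the arrival order, so arrivals, service starts and departures are in one-to-one correspondence. A sample-path / renewal-reward argument then gives that the long-run fraction of requests with type $j$ equals the long-run fraction of time spent with $g(Y(t))=j$, namely $f_j$. Averaging the Cesàro limit back to the individual probabilities via bounded convergence yields $\lim_{i\to\infty} P(J_i=j)=f_j$.

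The main obstacle is precisely this last bridge: PASTA delivers the state seen at the instant of arrival, whereas $J_i$ is a function of the state at the (later, state-dependent) service start epoch. I would handle this by avoiding any direct analysis of the service-start state and instead counting type-$j$ service starts over $[0,T]$, dividing by the total arrival count, and using stability to equate the arrival rate with the service-start rate. This converts the question into a standard ergodic-average statement that follows cleanly from PASTA combined with the FIFO property of Observation~\ref{obv_reptoall}.
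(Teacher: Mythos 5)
Your starting point (PASTA plus the FIFO/ordered-departure structure of Observation~\ref{obv_reptoall}) is the same as the paper's, but the bridge you build from the arrival epoch to the type $J_i$ is where the argument breaks. You define $f_j$ as the long-run fraction of \emph{time} during which the in-service request has type $j$, and then claim that a renewal-reward argument shows this equals the long-run fraction of \emph{requests} that receive type $j$. These two quantities are not equal: over a horizon $T$ the occupation time of ``in-service request has type $j$'' is roughly $N_j(T)\,E[S_j]$ while the request fraction is $N_j(T)/N(T)$, so the time average is a length-biased version of the customer average (the types have different mean durations, $E[S_0]>\dots>E[S_t]$ by Lemma~\ref{lm_reqservtime_dists}, so the bias does not cancel). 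There is also a boundary mismatch: when the system is empty your indicator $\mathbf{1}\{g(Y(t))=j\}$ vanishes for every $j$, whereas a request arriving to an empty system is served with type $0$ with probability one, so your $f_j$'s do not even sum to one. Finally, the last step runs the implication in the wrong direction: bounded convergence lets you pass from convergence of $Pr\{J_i=j\}$ to convergence of its Ces\`aro means, not from a Ces\`aro limit back to the individual probabilities.

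The paper avoids all of this by never touching the occupation time of the in-service type. Its observation is that the system state seen by a request \emph{at its arrival epoch} already determines the conditional distribution of that request's eventual type, i.e., $Pr\{J_i=j\}=E\bigl[h_j(Y(T_i^-))\bigr]$ for a fixed measurable function $h_j$ of the arrival-epoch state. PASTA (Wolff) then says the distribution of $Y(T_i^-)$ converges to the time-stationary distribution $\pi$, so $Pr\{J_i=j\}\to E_\pi[h_j]$, and $f_j$ in the lemma is precisely this time average of the \emph{probability function} $h_j$ --- not the fraction of time the head-of-line request carries type $j$. If you replace your occupation-time identification with this one-step conditioning on the arrival state, the rest of your setup (the Markov state descriptor, stability, lack of anticipation) goes through and you recover the paper's proof.
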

\begin{proof}
  Given that two requests find the system in the same state at their arrival epochs, the same stochastic process generates their life cycle under stability. This implies that the probabilities over the possible service time distributions for a request are completely determined by the system state seen by the request at its arrival. Since Poisson arrivals see time averages in system state under stability \cite{wolff1982poisson}, they also see time averages in the probabilities over the possible service time distributions.
\end{proof}

Observations we have made so far, which are also validated by the simulations, allow us to develop an approximate method for analyzing the replicate-to-all system.
Requests depart in the order they arrive (Obv.~\ref{obv_reptoall}), hence the system as a whole acts like a first-in first-out queue. Thinking about the trimmed down space of possibilities or exponential service times at the servers, $t+1$ possible distributions for the request service times are given in Lm.~\ref{lm_reqservtime_dists}. Although request service times are not independent, they exhibit loose coupling (Obv.~\ref{obv_freq_renewals}). Putting all these together and relying on their accuracy gives us an $M/G/1$ approximation for the actual system.
\begin{proposition}
  Replicate-to-all system can be approximated as an $M/G/1$ queue, hence the PK formula gives an approximate value for the average hot data download time as
  \begin{equation}
    E[T] \approx E[S] + \frac{\lambda E[S^2]}{2(1 - \lambda E[S])}.
  \label{eq:eq_reptoall_PK}
  \end{equation}
  The moments of $S$ are given as
  \begin{equation}
	E[S] = \sum_{j=0}^{t} f_j E[S_j], \qquad E[S^2] = \sum_{j=0}^{t} f_j E[S_j^2],
  \label{eq:eq_reptoall_servmoments}
  \end{equation}
  where $f_j$ is the probability that an arbitrary request has type-$j$ service time distribution $S_j$ (see Lm.~\ref{lm_serv_time_probs_are_time_avgs}). When service times at the servers are exponentially distributed, moments of $S_j$ are given in \eqref{eq:eq_serv_dist_moments}.
\label{prop_reptoall_mg1}
\end{proposition}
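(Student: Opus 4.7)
The plan is to assemble the proposition as a direct consequence of the structural observations and lemmas already established, identifying where each ingredient contributes and where the M/G/1 model becomes an approximation rather than an identity.

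First I would use Observation~\ref{obv_reptoall} to justify treating the replicate-to-all system as a single FIFO server: since requests depart in arrival order under the redefined service start time of Definition~\ref{def_req_servstart}, there is never more than one request in service, and the aggregate system behaves as a single-server queue with a well-defined per-request service time. Second, I would invoke Lemma~\ref{lm_reqservtime_dists} to enumerate the $t+1$ possible per-request service-time distributions $S_0,\ldots,S_t$, so that the per-request service time is a mixture random variable $S$ whose conditional laws are exactly those listed, with moments in \eqref{eq:eq_serv_dist_moments} in the exponential case.

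Third, I would determine the mixture weights. By Lemma~\ref{lm_serv_time_probs_are_time_avgs}, Poisson arrivals see time averages of the probabilities over the service-type distributions, so the long-run probability that a request is of type~$j$ is exactly $f_j$. Unconditioning over $j$ then gives the moment expressions \eqref{eq:eq_reptoall_servmoments} by the law of total expectation. Finally, I would appeal to Observation~\ref{obv_freq_renewals} to argue that successive request service times are only loosely coupled: the frequent levelling epochs at the recovery servers act as approximate regeneration points that wipe out memory between batches of requests, so the sequence $\{S_i\}$ is well approximated by an i.i.d.\ sequence drawn from the mixture. Combining a Poisson arrival stream, a single FIFO server, and an (approximately) i.i.d.\ general service-time sequence yields an M/G/1 model, to which the Pollaczek--Khinchine formula \eqref{eq:eq_reptoall_PK} applies.

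The main obstacle is the i.i.d.\ step: Observation~\ref{obv_freq_renewals} is explicitly a heuristic based on the self-correcting competition among leading servers and not a theorem, so there is no way to upgrade this reduction to an exact equality. Accordingly I would frame the argument as producing an approximation and defer quantitative justification to the simulation comparisons promised later in the paper, while noting that the other three ingredients (FIFO order, the enumeration of $S_j$, and PASTA-type identification of $f_j$) are rigorous and hold for any arrival rate at which the system is stable.
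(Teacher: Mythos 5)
Your proposal matches the paper's own justification essentially point for point: the paper derives the proposition from exactly the same four ingredients (FIFO departure order via Observation~\ref{obv_reptoall}, the $t+1$ mixture components from Lemma~\ref{lm_reqservtime_dists}, the PASTA identification of the weights $f_j$ from Lemma~\ref{lm_serv_time_probs_are_time_avgs}, and the loose-coupling heuristic of Observation~\ref{obv_freq_renewals}) and then applies the Pollaczek--Khinchine formula. Your explicit flagging of the approximate-i.i.d.\ step as the only non-rigorous link is consistent with how the paper itself frames the result as an approximation validated by simulation.
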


\subsection{An exact upper and lower bound}
An upper bound on the hot data download time in replicate-to-all system is given in \cite{kadhe2015availability}. The bound is derived by assuming a more restricted service model, which is known as split-merge. One of the servers in a recovery group can execute the request copies in its queue faster than its sibling, i.e., a recovery server can lead its peer. Split-merge model does not allow this; servers in the recovery groups are blocked until the request at the head of the line departs. Thus, under the split-merge model requests are accepted into service one by one, hence all requests are restricted to have type-$0$ (slowest) service time distribution.

Employing the same idea that leads to the split-merge model, we next find a lower bound in the following.
\begin{theorem}
  An $M/G/1$ queue with service times distributed as $S_t$ in Lm.~\ref{lm_reqservtime_dists} is a lower bound on the replicate-to-all system in hot data download time.
\label{thm_reptoall_t_lb}
\end{theorem}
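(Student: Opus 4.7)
The plan is to construct a pathwise coupling between the replicate-to-all system and a fictitious $M/G/1$ queue in which every request is served with distribution $S_t$, and to show that no request can depart later in the $M/G/1$ queue than in the replicate-to-all system. First I would invoke Obv.~\ref{obv_reptoall}: with the service-start epoch defined as in Def.~\ref{def_req_servstart}, the replicate-to-all system behaves as a FIFO single-server queue in which at most one request is in service at a time, requests depart in arrival order, and the $i$-th request begins service exactly when the $(i-1)$-th departs or upon its own arrival, whichever is later. Its service time is distributed as $S_{J_i}$ for some random type $J_i \in \{0,\dots,t\}$ determined by the system state at the service-start epoch. Letting $A_i$ denote the arrival time of the $i$-th request, its departure time $D_i$ then satisfies the Lindley recursion $D_i = \max(D_{i-1}, A_i) + S_{J_i}$.

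Next I would build the fictitious $M/G/1$ queue on the same probability space. The arrival process is the same Poisson stream $\{A_i\}$. For the service times, I would draw i.i.d.\ uniforms $U_i$ on $[0,1]$ and set the $M/G/1$ service time of the $i$-th request to $\tilde S_i = F_{S_t}^{-1}(U_i)$, while the replicate-to-all service time of the same request, conditioned on its realized type $J_i$, is coupled as $F_{S_{J_i}}^{-1}(U_i)$. Because $S_t$ is stochastically dominated by every $S_j$ (Lm.~\ref{lm_reqservtime_dists}), quantile coupling yields $\tilde S_i \leq F_{S_{J_i}}^{-1}(U_i)$ almost surely for every realization of $J_i$. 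A straightforward induction on $i$ using the Lindley recursion in both queues then gives $\tilde D_i \leq D_i$ almost surely, so the sojourn time $\tilde T_i = \tilde D_i - A_i$ in the $M/G/1$ queue is dominated pathwise by the download time $T_i = D_i - A_i$ in the replicate-to-all system. Taking expectations and letting $i \to \infty$ delivers the desired lower bound.

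The main subtlety to handle carefully is that $J_i$ is not independent of the past, so one cannot simply replace $S_{J_i}$ with its marginal mixture and read off an $M/G/1$ bound. The coupling circumvents this issue by relying only on the pointwise stochastic ordering $S_t \leq_{\mathrm{st}} S_j$ for every $j$, which holds regardless of how $J_i$ is determined by the state history. Because the $\tilde S_i$ are i.i.d.\ samples of $S_t$ and the arrivals are Poisson, the fictitious queue is genuinely $M/G/1$, and its mean download time is given in closed form by the Pollaczek--Khinchine formula applied to the moments of $S_t$ computed in \eqref{eq:eq_serv_dist_moments}.
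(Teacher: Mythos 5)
Your proposal is correct and follows essentially the same route as the paper's proof, which rests on the stochastic ordering $S_0 > \dots > S_t$ from Lm.~\ref{lm_reqservtime_dists} and on replacing every request's service time by $S_t$ to obtain a faster $M/G/1$ queue. Your quantile coupling and Lindley-recursion induction simply make rigorous the ``performs faster'' claim that the paper states in one line, correctly identifying the dependence of the type $J_i$ on the past as the reason a pathwise argument (rather than a naive marginal substitution) is needed.
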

\begin{proof}
  In replicate-to-all system, there are $t+1$ possible request service time distributions which are stochastically ordered as $S_0 > \dots > S_t$. Restricting all request service time distributions to $S_t$ gives an $M/G/1$ queue that performs faster than the actual system.
\end{proof}

$M/G/1$ approximation given in Prop.~\ref{prop_reptoall_mg1} is an approximate intermediate point between these two polarities; split-merge upper bound and the lower bound given in Thm.~\ref{thm_reptoall_t_lb}.

\subsection{Probabilities for the request service time distributions}
For the $M/G/1$ approximation to be useful, we need to find the probabilities $f_j$'s (equivalently their time average values) for the possible request service time distribution $S_j$'s. An exact expression for $f_j$ is found as follows. The sub-sequence of request arrivals that see an empty system forms a renewal process \cite[Thm.~5.5.8]{gallager2013stochastic}. Let us define a renewal-reward function $R_j(t) = \mathbbm{1}\{J(t) = j\}$ where $\mathbbm{1}$ is the indicator function and $\{J(t) = j\}$ denotes the event that the request at the head of the line at time $t$ has type-$j$ service time distribution $S_j$.
\begin{equation*}
  \begin{split}
    f_j =& \lim_{t \to \infty} Pr\{R_j(t) = 1\}
    = \lim_{t \to \infty} E[R_j(t)] \\
    \stackrel{\text{(a)}}{=}& \lim_{t \to \infty} \frac{1}{t} \int_{-\infty}^t R_j(\tau) d\tau \stackrel{\text{(b)}}{=} \frac{E\Bigl[\int_{S_{n-1}^r}^{S_n^r}R_j(t)dt\Bigr]}{E[X]}.
  \end{split}
\end{equation*}
where $(a)$ and $(b)$ are due to the equality of the limiting time and ensemble averages of the renewal-reward function $R_j(t)$ \cite[Thm.~5.4.5]{gallager2013stochastic}, and $S_{n-1}^r$, $S_n^r$ are the $(n-1)$th, $n$th renewal epochs (i.e., consecutive arrival epochs that find the system empty), and $X$ is the i.i.d. inter-renewal interval.

As the expression above clearly indicates, deriving $f_j$'s requires an exact analysis of the system. However, we conjecture the following relation that allows us to find good estimates rather than the exact values of $f_j$'s which are presented in the following sections.
\begin{conjecture}
  In replicate-to-all system, probability $f_j$'s over the trimmed down space of the possible request service time distribution $S_j$'s (see Lm.~\ref{lm_reqservtime_dists}) hold the relation
  \[ f_{i-1} > f_i, \quad 1 \leq i \leq t. \]
  or equivalently, $f_i = \rho_i f_{i-1}$ for some $\rho_i < 1$.
\label{conj_reptoall_t_fjs}
\end{conjecture}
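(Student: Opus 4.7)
The plan is to recast the conjecture as a statement about the stationary joint distribution of per-recovery-group imbalances and then exploit the permutation symmetry of the $t$ recovery groups. For each group $i\in\{1,\ldots,t\}$, let $\Delta_i$ denote the queue-length gap between its two servers at the service-start epoch of the current head-of-line request; then the service type $J$ equals the number of nonzero $\Delta_i$'s, so by Lm.~\ref{lm_serv_time_probs_are_time_avgs} one has $f_j = \Pr\{J = j\}$ in stationarity. The arrival process, service distributions, and cancellation rule are all invariant under permutation of the recovery groups, so $(\Delta_1,\ldots,\Delta_t)$ is exchangeable and
\[
f_j \;=\; \binom{t}{j}\, q_j, \qquad q_j \;:=\; \Pr\{\Delta_1>0,\ldots,\Delta_j>0,\ \Delta_{j+1}=\cdots=\Delta_t=0\}.
\]
The conjecture $f_{j-1} > f_j$ is thus equivalent to $j\, q_{j-1} \;>\; (t-j+1)\, q_j$.

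For the main step I would set up a coupling between two copies of the system that differ only in the initial imbalance of one prescribed recovery group. In one copy, the group is kept leveled by accelerating its lagging server to match its sibling after every completion; in the other it evolves freely. Since every request departure cancels the outstanding copies at the slower recovery servers of every group (Obv.~\ref{obv_freq_renewals}), the coupling propagates: the leveled version stochastically shrinks the remaining $\Delta_i$'s as well, which should yield $q_{j-1} \geq q_j$ pointwise over configurations of leveled groups. To promote this to the strict inequality $j\, q_{j-1} > (t-j+1)\, q_j$ one has to be careful near $j \approx t/2$, where the combinatorial ratio $\binom{t}{j}/\binom{t}{j-1} = (t-j+1)/j$ is largest; here I would complement the coupling with a balance argument for the jump process $J(t)$, noting that creating a new leading server requires a server to outrun its sibling over an entire service interval (an event whose rate is bounded by the per-server completion rate), while a leading server can be destroyed either by its sibling catching up or by \emph{any} system exit, so downward transitions of $J$ systematically dominate upward ones under Poisson arrivals.

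The principal obstacle is the cross-group dependence induced by the systematic server and the cancellation rule, which together prevent the $\Delta_i$'s from being treated as independent one-dimensional walks: an arriving or departing request simultaneously affects all $t$ groups, and exchangeability alone gives only the combinatorial identity, not the ratio comparison. I expect the crux of the proof to be quantifying the rate at which the system regenerates into the fully-leveled state $(\Delta_1,\ldots,\Delta_t)=\mathbf{0}$ and using this regeneration (already anticipated by Obv.~\ref{obv_freq_renewals}) to obtain uniform bounds on $q_j/q_{j-1}$ for all $j$ without requiring a closed form for the stationary distribution.
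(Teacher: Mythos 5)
First, be aware that the paper does not prove this statement: it is posed as a conjecture, supported only by simulations and by the heuristic that leading servers must ``compete'' with every other server to keep leading (Obv.~\ref{obv_freq_renewals}). The strongest formal evidence the authors offer is Thm.~\ref{thm_necc_cond_for_conj_reptoall_t_fjs} in the appendix, namely $\Pr\{J_{i+1}>j \mid J_i=j\}<1/2$, and they explicitly note that this is neither necessary nor sufficient for the conjecture. So any complete proof would go beyond the paper. Your submission, however, is also not a proof: it is a plan in which the two steps that would actually establish the inequality are asserted rather than carried out. The part that is sound and genuinely new relative to the paper is the exchangeability reduction: when all server rates are equal (the setting of Lm.~\ref{lm_reqservtime_dists}), the $t$ recovery groups are interchangeable, so $f_j=\binom{t}{j}q_j$ and the conjecture is equivalent to $j\,q_{j-1}>(t-j+1)\,q_j$. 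That reformulation is correct and could be a useful starting point.

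The gaps are these. (1) The coupling is only described in outline (``should yield $q_{j-1}\geq q_j$ pointwise''); no coupling is actually constructed, and the cross-group dependence you yourself identify --- a single departure simultaneously cancels copies in all $t$ groups --- is exactly what makes a per-group monotone coupling nontrivial. (2) Even granting $q_{j-1}\geq q_j$, this is far weaker than $j\,q_{j-1}>(t-j+1)\,q_j$ whenever $j<(t+1)/2$. (3) Your localization of the hard case is wrong: the ratio $(t-j+1)/j$ is monotone \emph{decreasing} in $j$, so it is largest at $j=1$ (where it equals $t$), not near $j\approx t/2$; you have confused the peak of $\binom{t}{j}$ with the peak of the consecutive ratio $\binom{t}{j}/\binom{t}{j-1}$. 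The most demanding instance of your reduction is therefore $j=1$, where you must show $q_0>t\,q_1$, i.e., that the fully leveled configuration is more than $t$ times as likely as any one prescribed group leading. Neither the coupling nor the qualitative ``downward transitions dominate upward ones'' balance argument, as described, produces a quantitative bound of that strength; the latter is essentially the same drift statement as the paper's Thm.~\ref{thm_necc_cond_for_conj_reptoall_t_fjs}, which the authors already point out does not imply the conjecture because transitions of $J$ are not restricted to neighboring states. In short: the reformulation is a contribution, but the conjecture remains unproved by your argument.
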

We here briefly discuss the reasoning behind the conjecture. Obv.~\ref{obv_freq_renewals} states that the servers frequently level up because the leading servers in the recovery groups compete with every other server to keep leading or to possibly advance further ahead. For a request to be served according to type-$j$ distribution, one of the forked copies of the request has to depart at exactly $j$ recovery groups before the request starts service. This requires one server in each of the $j$ recovery groups to be leading, which is harder for larger values of $j$.
We validated the conjecture with the simulations but could not prove it. However, we derive a strong pointer for the conjecture in Appendix~\ref{subsec:subsec_simplex_t_conjecture}; given a request is served according to type-$j$ distribution, the next request in line is more likely to be served according to type-$(<j)$ distribution.

\section{Replicate-to-all with Availability One}
\label{sec:sec_reptoall_t1}
In this section, we study replicate-to-all system with availability one, e.g., the storage system $[a, b, a+b]$ as illustrated in Fig.~\ref{fig:fig_reptoall_t1_servstart}. We initially assume that service times at the servers are exponentially distributed and let the service rates at the systematic server and the two recovery servers be respectively $\gamma$, $\alpha$ and $\beta$. Then, there are two possible request service time distributions $S_0$ and $S_1$ as given in Lm.~\ref{lm_reqservtime_dists}. $M/G/1$  approximation in Prop.~\ref{prop_reptoall_mg1} requires finding the probabilities $f_0$ and $f_1$ for an arbitrary request to be served according to $S_0$ and $S_1$ respectively. Although an exact solution proves to be hard, we find good estimates for $f_0$ and $f_1$ in the following.

\subsection{Markov process for the system state}
Imagine a join queue attached at the tail of the system that enqueues the request copies that depart from the servers. Since a copy departing from the systematic server completes the associated request immediately, a copy waiting for synchronization in the join queue must be from a leading recovery server. As soon as a request completes, copies of the request waiting in the join queue are removed. State of the join queue at time $t$ is the tuple $\bm{n(t)} = (n_1(t), n_2(t))$ where $n_i$'s denote the number of request copies in the join queue that departed from the two recovery servers respectively.

\begin{figure}[t]
  \begin{center}
    \includegraphics[width=0.5\textwidth, keepaspectratio=true]{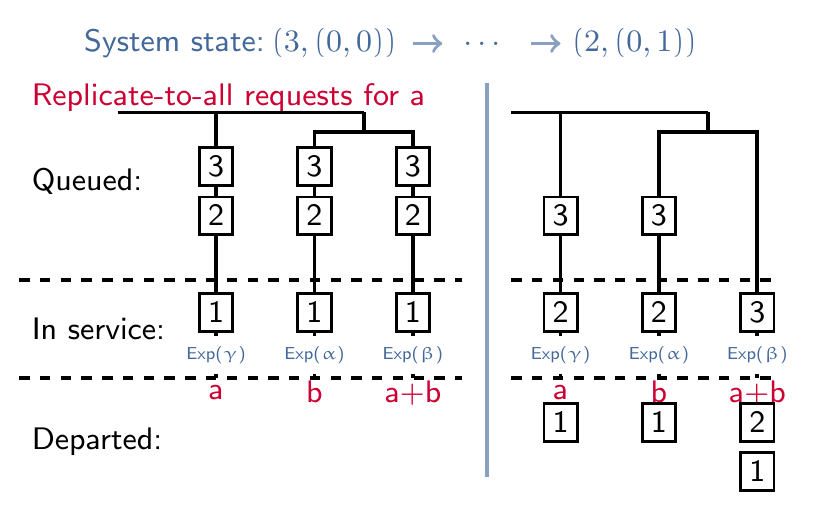}
  \end{center}
  \vspace*{-0.35cm}
  \caption{Replicate-to-all system with availability one. Requests $1$-$3$ are replicated to the systematic server and the recovery servers at arrival. Two system snapshots are illustrated for time epochs at which the request-$1$ (Left) and the request-$2$ (Right) starts service. Request-$1$ has service time distributed as $S_0$ while request-$2$ is served with $S_1$. Request service time distributions $S_0$ and $S_1$ are defined in Lm.~\ref{lm_reqservtime_dists}.}
  \label{fig:fig_reptoall_t1_servstart}
\end{figure}

Ordered departure of the requests (see Obv.~\ref{obv_reptoall}) and cancellation of the outstanding copies upon a request completion imply $n_1(t)n_2(t) = 0$ for all $t$. Together with the total number of requests $N(t)$ in the system; $\bm{s(t)} = (N(t), n_1(t), n_2(t))$ constitutes the state of the replicate-to-all system at time $t$ (see Fig.~\ref{fig:fig_reptoall_t1_servstart}). System state $\bm{s(t)}$ is a Markov process as illustrated in Fig.~\ref{fig:fig_reptoall_t1_mp__high_traff_approx}. Let us define $Pr\{\bm{s(t)} = (k, i, j)\} = p_{k,i,j}(t)$, and suppose that stability is imposed and $\lim_{t \to \infty} p_{k,i,j}(t) = p_{k, i, j}$. Then, the system balance equations are summarized for $k,i,j \geq 0$ as
\begin{equation}
\begin{split}
  \bigl[\gamma \mathbbm{1}(k \geq 1) + \alpha \mathbbm{1}(i \geq 1) + \beta \mathbbm{1}(j \geq 1)\bigr]p_{k,i,j} &=
          \lambda \mathbbm{1}(k \geq 1, i \geq 1, j \geq 1)p_{k-1,i-1,j-1} \\
          &\quad + \gamma p_{k+1,i+1,j+1} + (\gamma + \alpha)p_{k+1,i+1,j} + (\gamma + \beta)p_{k+1,i,j+1}.
\end{split}
\label{eq:eq_reptoall_t1_balance_eqs}
\end{equation}
Computing the generating function
\[ P_{w,x,y} = \sum_{k, i, j > 0} p_{k,i,j}w^k x^i y^j \]
from the above balance equations is intractable, so the exact solution of the system's steady state behavior is. This is because the state space is infinite in two dimensions and its analysis is tedious (see Appendix~\ref{subsec:subsec_reptoall_t1_pyramid_mp_analysis} for a guessing based analysis with local balance method). As discussed next, state space of the system can be made much simpler by employing a high traffic assumption.
\begin{figure}[htb]
  \centering
  \begin{tikzpicture}
    \node at (0,2.5) {\includegraphics[scale=0.85]{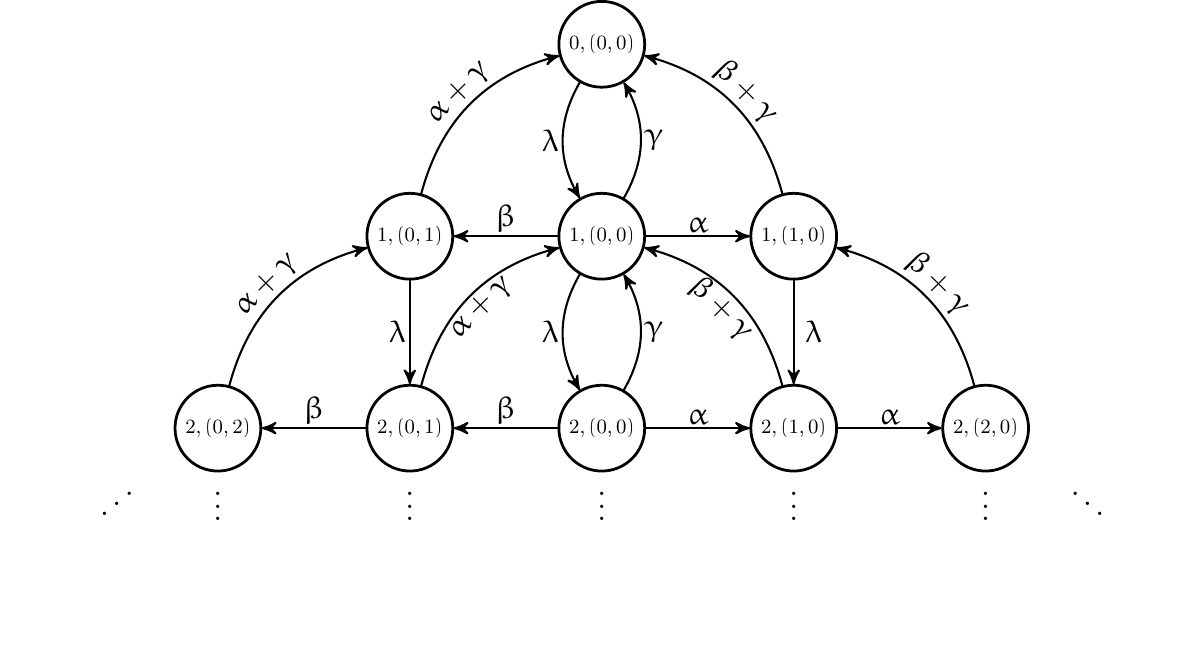}};
    \node at (0,0) {\includegraphics[scale=0.85]{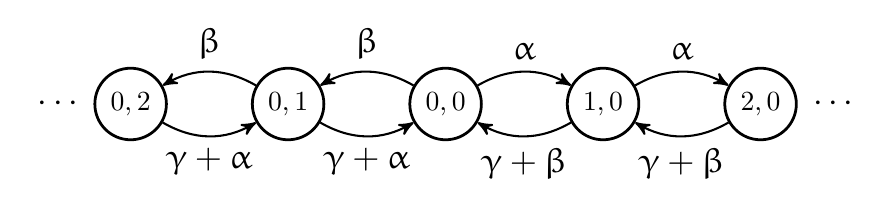}};
  \end{tikzpicture}
  \caption{Markov state process for the replicate-to-all system with availability one (Top), and its high traffic approximation (Bottom).}
  \label{fig:fig_reptoall_t1_mp__high_traff_approx}
\end{figure}

\subsection{Analysis under high traffic assumption}
\label{subsec:subsec_reptoall_t1_hightraff}
Suppose that the arrival rate $\lambda$ is very close to its critical value for stability and the queues at the servers are always nonempty. This is a rather crude assumption and holds only when the system is unstable, however it yields a good approximation for the system and makes the analysis much easier. We refer to this set of working conditions as \textit{high traffic assumption}. Under this assumption, servers are always busy and the state of the join queue $\bm{n(t)}$ represents the state of the whole system. Then, the system state follows a birth-death process as shown in Fig.~\ref{fig:fig_reptoall_t1_mp__high_traff_approx}, for which the steady state balance equations are given as
\begin{equation}
  \alpha p_{i,0} = (\gamma + \beta) p_{i+1,0}, \quad\quad
  \beta p_{0,i} = (\gamma + \alpha) p_{0,i+1}, ~i \geq 0.
\label{eq:eq_simplest_simplex_join_balance_high_traffic}
\end{equation}
where $p_{i,j} = \lim_{t \to \infty} Pr\{\bm{n(t)} = (i, j)\}$. Solving the balance equations, we find
\begin{equation}
  p_{i,0} = \left(\frac{\alpha}{\beta + \gamma}\right)^i p_{0,0}, \qquad
  p_{0,i} = \left(\frac{\beta}{\alpha + \gamma}\right)^i p_{0,0}, ~i \geq 1.
\label{eq:eq_simplest_simplex_steady_state}
\end{equation}
By the axiom of probability, we have
\[ \left[1 + \sum_{i=1}^{\infty} \left(\frac{\beta}{\alpha + \gamma}\right)^i + \left(\frac{\alpha}{\beta + \gamma}\right)^i \right] p_{0,0} = 1. \]
Assuming $\beta < \alpha + \gamma$ and $\alpha < \beta + \gamma$,
\[ p_{0,0} = \left(1 +  \frac{\beta}{\alpha + \gamma - \beta} + \frac{\alpha}{\beta + \gamma - \alpha}\right)^{-1} = \frac{\gamma^2-(\alpha-\beta)^2}{\gamma(\alpha+\beta+\gamma)} \]
from which $p_{i,0}$ and $p_{0,i}$ are found by substituting $p_{0,0}$ in \eqref{eq:eq_simplest_simplex_steady_state}.
Other useful quantities are given as
\[ \sum_{i=1}^\infty p_{i,0} = \frac{\alpha(\alpha+\gamma-\beta)}{\gamma(\alpha+\beta+\gamma)}, \quad\quad \sum_{i=1}^\infty p_{0,i} = \frac{\beta(\beta+\gamma-\alpha)}{\gamma(\alpha+\beta+\gamma)}. \]
For tractability, we continue by assuming $\alpha = \beta = \mu$, thus
\[ p_{0,0} = \frac{\gamma}{\gamma+2\mu}, \quad \sum_{i=1}^{\infty} p_{i,0} = \sum_{i=1}^{\infty} p_{0,i} = \frac{\mu}{\gamma+2\mu}. \]
Next, we use the steady state probabilities under the high traffic assumption to obtain estimates for the request service time probabilities $f_0$ and $f_1$, and some other quantities that give insight into the system behavior.

\subsection{Request completions at the servers}
\label{subsec:subsec_reptoall_t1_winning_freqs}
A request completes as soon as either its copy at the systematic server departs or both copies that it has at the recovery group. Here we find bounds on the fraction of the requests completed by the systematic server or the recovery group.
\begin{theorem}
  In replicate-to-all system with availability one, let $w_s$ and $w_r$ be the fraction of requests completed by respectively the systematic server and the recovery group. Then the following inequalities hold
  \begin{equation}
    w_s \geq \frac{\gamma\nu}{\gamma\nu+2\mu^2}, \quad
    w_r \leq \frac{\mu^2}{\gamma\nu+2\mu^2}; \quad \nu = \gamma+2\mu.
  \label{eq:eq_reptoall_t1_winning_freqs}
  \end{equation}
  \label{thm_reptoall_t1_winning_freqs}
\end{theorem}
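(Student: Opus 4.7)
My plan is to express the two winning fractions $w_s$ and $w_r$ in terms of steady-state probabilities of the full Markov chain on $(N, n_1, n_2)$, reduce the claimed bounds to a single conditional probability estimate, and then justify that estimate using the high-traffic analysis of \S\ref{subsec:subsec_reptoall_t1_hightraff} as the extremal case.

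First, I would note two rate identities in steady state. Since every request is replicated to the systematic server and that server is FIFO, it is busy exactly on the event $\{N \geq 1\}$; hence $w_s\,\lambda = \gamma\,Pr\{N \geq 1\}$. A recovery-group completion occurs when one recovery server fires while its sibling's head-request copy already waits in the join queue, so the aggregate rate of recovery completions equals $\mu\,Pr\{n_1 + n_2 \geq 1\}$, yielding the companion identity for $w_r\,\lambda$ (up to a symmetry-based factor of $\tfrac{1}{2}$ if the two recovery servers are counted separately). Flow conservation reads $\lambda = \gamma\,Pr\{N \geq 1\} + \mu\,Pr\{n_1 + n_2 \geq 1\}$, and a short algebraic rearrangement using this identity shows that both claimed bounds are equivalent to the single estimate
\[
Pr\{n_1 + n_2 \geq 1 \mid N \geq 1\} \;\leq\; \frac{2\mu}{\gamma + 2\mu} \;=\; Pr_{\mathrm{HT}}\{n_1 + n_2 \geq 1\},
\]
whose right-hand side is the stationary value computed in \S\ref{subsec:subsec_reptoall_t1_hightraff}.

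The crux is to establish this reduced inequality by coupling the $(n_1, n_2)$ dynamics inside each busy period of the actual system with the high-traffic birth-death chain started from $(0,0)$. Whenever $N \geq 1$ all three servers remain continuously busy, so the transition rates of $(n_1, n_2)$ coincide with those of the high-traffic chain, and every busy period begins at $(n_1, n_2) = (0, 0)$. Reversibility of the high-traffic chain (a birth-death chain on the two half-lines glued at the origin) gives the pointwise bound $P_t\bigl((0,0),(0,0)\bigr) \geq \pi_{\mathrm{HT}}(0, 0) = \gamma/\nu$ for every $t \geq 0$ via the nonnegative spectral expansion of reversible generators; integrating this along busy-period sample paths and averaging through the renewal-reward theorem produces the desired conditional-probability estimate.

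I expect the main obstacle to lie in this last step: passing from the clean pointwise reversibility bound to a time average over random busy-period lengths $T$ requires care, because $T$ and the $(n_1, n_2)$ sample path are coupled through recovery completions. The coupling operates in the favorable direction, however, since longer busy periods correspond to sample paths on which $(n_1, n_2)$ spends more time at $(0, 0)$ (recovery completions drain $N$ less frequently in that regime); a renewal-reward accounting that exploits this monotone dependence closes the gap.
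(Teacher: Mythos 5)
Your reduction is sound and is a cleaner packaging of what the paper does. The rate identities $w_s\lambda=\gamma\,Pr\{N\geq 1\}$ and $w_r\lambda=\mu\,Pr\{n_1+n_2\geq 1\}$ (the events $\{n_1\geq 1\}$ and $\{n_2\geq 1\}$ are disjoint since $n_1n_2=0$, so no factor of $\tfrac12$ arises), together with $\{n_1+n_2\geq 1\}\subseteq\{N\geq 1\}$, give $w_s=\gamma/(\gamma+\mu q)$ with $q=Pr\{n_1+n_2\geq 1\mid N\geq 1\}$, and the claimed bound is indeed equivalent to $q\leq 2\mu/\nu$, the stationary mass of the high-traffic chain off the origin. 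The paper instead computes the completion fractions from the embedded chain of the high-traffic process (using that every state has total exit rate $\nu$, so $\pi_i=p_i$); the two computations agree, and both make clear that the printed $w_r$ bound should read $2\mu^2/(\gamma\nu+2\mu^2)$.

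The gap is in your justification of $q\leq 2\mu/\nu$, in two places. First, the premise of your coupling is false: it is not true that all three servers are busy whenever $N\geq 1$. A leading recovery server is idle exactly when $n_i=N$, i.e., when it has already finished a copy of every request currently in the system, and at such states the join-queue process lacks the outward rate-$\mu$ transition that the high-traffic birth--death chain has. This is precisely the difference between the two processes in Fig.~\ref{fig:fig_reptoall_t1_mp__high_traff_approx}, and it is the content of Lm.~\ref{lm_reptoall_t1_lb} that the high-traffic chain is obtained by \emph{adding} transitions at these boundary states. So inside a busy period $(n_1,n_2)$ does not evolve as the high-traffic chain; what you need is a stochastic comparison between the boundary-constrained process and the unconstrained one --- plausible, since the missing transitions only push the join queue toward $(0,0)$, but not supplied by your argument. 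Second, even granting high-traffic dynamics within a busy period, the step from the reversibility bound $P_t\bigl((0,0),(0,0)\bigr)\geq\pi_{\mathrm{HT}}(0,0)$ to $E\bigl[\int_0^T \mathbbm{1}\{X_s=(0,0)\}ds\bigr]/E[T]\geq\pi_{\mathrm{HT}}(0,0)$ does not follow: $T$ is a path-dependent stopping time (recovery completions drain $N$ and hence shorten $T$), and for such stopping times the occupation ratio can fall on either side of $\pi$ --- Kac's formula makes it exactly $\pi(0,0)$ for the return time to $(0,0)$, for instance. Your closing appeal to a favorable monotone dependence between $T$ and the occupation of $(0,0)$ is an unproved correlation claim. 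To be fair, the paper is no more rigorous here --- it disposes of the comparison with the one-line observation that under stability the recovery servers must regularly go idle, which can only lower the recovery-completion fraction --- so your attempt is more ambitious, but as written it does not close the argument.
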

\begin{proof}
  Under high traffic approximation, $w_s$ and $w_r$ can be found from the steady state probabilities of the Markov chain (MC) embedded in the state process (see Fig.~\ref{fig:fig_reptoall_t1_mp__high_traff_approx}). Recall that the system state under high traffic approximation is $\bm{n(t)} = (n_1(t), n_2(t))$ where $n_i$ is the number of request copies in the join queue that departed from the $i$th recovery server.
  
  System stays at each state for an exponential duration of rate $\nu = \gamma + 2\mu$. Therefore, the steady state probabilities $p_i$'s (i.e., limiting fraction of the time spent in state $i$) of $\bm{n(t)}$ and the steady state probabilities $\pi_i$'s (i.e., limiting fraction of the state transitions into state $i$) of the embedded MC are equal as seen by the equality
  \[ \pi_i = (p_i\nu) / \sum_{i \geq 0} p_i\nu = p_i. \]
  
  Let $f_s$ and $f_r$ be the limiting fraction of state transitions that represent request completions by respectively the systematic server and the recovery group. Then, we find
  \begin{align*}
    & f_s = \pi_{0,0}\frac{\gamma}{\nu} + \sum_{i=1}^{\infty} \pi_{i,0}\frac{\gamma}{\nu} + \sum_{i=1}^{\infty} \pi_{0,i}\frac{\gamma}{\nu} = \frac{\gamma}{\nu}, \\
    & f_r = \sum_{i=1}^{\infty} (\pi_{0,i} + \pi_{i,0})\frac{\mu}{\nu} = 2\left(\frac{\mu}{\nu}\right)^2.
  \end{align*}
  Limiting fraction of all state transitions that represent a request departure is then found as
  \[ f_d = f_s + f_r = (\gamma\nu+2\mu^2)/\nu^2. \]
  
  Finally, we find the fraction of request completions by the systematic server and by the recovery group as
  \[
    \hat{w}_s = \frac{f_s}{f_d} = \frac{\gamma\nu}{\gamma\nu+2\mu^2}, \quad
    \hat{w}_r = \frac{f_r}{f_d} = \frac{2\mu^2}{\gamma\nu+2\mu^2}.
  \]
  Recall that, these values are calculated using the high traffic approximation, in which the queues are never empty. However, recovery servers have to regularly go idle under stability. Therefore, the fraction of request completions at the recovery group is smaller under stability than what we find under high traffic, hence we conclude $w_r \leq \hat{w}_r$, which implies $w_s \geq \hat{w}_s$.
\end{proof}

Bounds on $w_s$ and $w_r$ become tighter as the arrival rate $\lambda$ increases as shown in Fig.~\ref{fig:plot_winning_freqs}, which is because the bounds are derived by studying the system under the high traffic assumption.
\begin{figure}[hbt]
  \centering
  \includegraphics[width=0.45\textwidth, keepaspectratio=true]{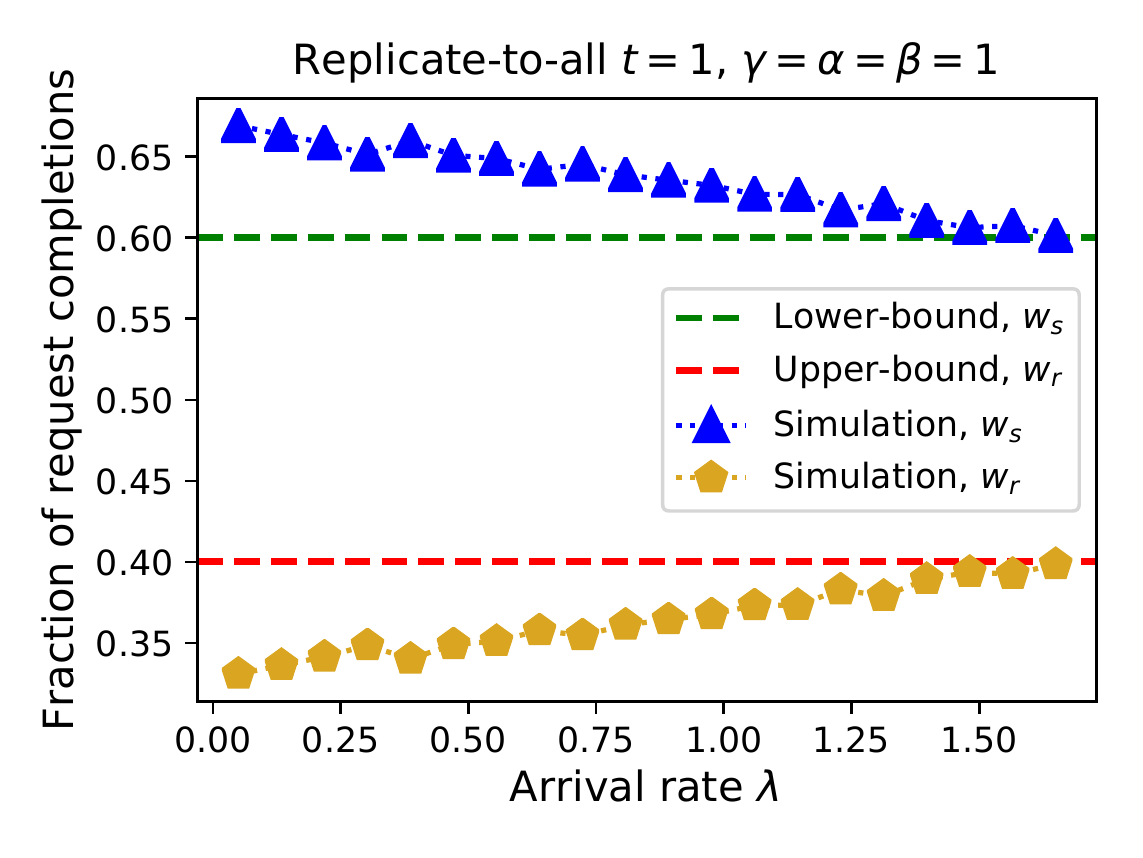}
  \caption{Simulated fraction of request completions by the systematic server ($w_s$) and by the recovery group ($w_r$). Bounds are as given in \eqref{eq:eq_reptoall_t1_winning_freqs}.}
\label{fig:plot_winning_freqs}
\end{figure}

\subsection{$M/G/1$ approximation}
\label{subsec:subsec_reptoall_t1_mg1approx}
Using the analysis under high traffic assumption given in Sec.~\ref{subsec:subsec_reptoall_t1_hightraff}, here we obtain estimates for the probabilities $f_0$ and $f_1$ over the request service time distributions $S_0$ and $S_1$. This enables us to state an analytical form of the $M/G/1$ approximation.

\begin{lemma}
  Hot data download time under high traffic assumption is a lower bound for the download time in the replicate-to-all system.
\label{lm_reptoall_t1_lb}
\end{lemma}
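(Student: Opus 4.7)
The plan is to reduce the lemma to a comparison of the moments of the request-service distribution $S$ that feeds the PK formula of Prop.~\ref{prop_reptoall_mg1}. Since the only difference between the ``high-traffic system'' and the actual system is the assumption that the server queues never drain, the argument will proceed by (i) showing that this assumption biases the service-type probabilities $f_0, f_1$ in a predictable direction, (ii) using Lm.~\ref{lm_reqservtime_dists} to turn that bias into an inequality on $E[S]$ and $E[S^2]$, and (iii) propagating that inequality through the PK expression.

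First I would show $f_1^{\text{HT}} \geq f_1^{\text{actual}}$, i.e.\ a type-$1$ (fast) service start is at least as likely under high traffic. The intuition is exactly the one already used in the proof of Thm.~\ref{thm_reptoall_t1_winning_freqs}: a recovery server can lead its peer only by having strictly more work completed, and advanced gaps between the two recovery servers are preserved only as long as the leading server stays busy. Under the high-traffic assumption the queues never empty so a leading server never loses its lead by idling; under stability, however, each recovery server must go idle a positive fraction of the time, and whenever the \emph{leading} server idles its slow peer keeps whittling down the gap. A clean way to make this rigorous is a sample-path coupling of the join-queue dynamics: drive both the high-traffic birth-death chain of Fig.~\ref{fig:fig_reptoall_t1_mp__high_traff_approx} and the actual state process $\bm{s(t)}$ with the same Poisson arrival stream and the same service clocks, and check event by event that the gap $|n_1 - n_2|$ in the actual system is stochastically dominated by its high-traffic counterpart. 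Equivalently, one can compare the embedded chains of departures and observe that the transitions coincide except at idle epochs of leading recovery servers, which always shrink the gap in the actual system and never occur in the high-traffic one.

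Second, I would convert this into moment inequalities via Lm.~\ref{lm_reqservtime_dists}. Writing $E[S^m] = E[S_0^m] - f_1 (E[S_0^m] - E[S_1^m])$ with $m \in \{1,2\}$ and recalling that $S_1$ is stochastically dominated by $S_0$ (so the bracket is strictly positive), a larger $f_1$ makes $E[S^m]$ strictly smaller. Hence $E[S]^{\text{HT}} \le E[S]^{\text{actual}}$ and $E[S^2]^{\text{HT}} \le E[S^2]^{\text{actual}}$. Finally I would plug these into the PK expression $E[T] = E[S] + \lambda E[S^2]/(2(1 - \lambda E[S]))$, which is monotone non-decreasing in both $E[S]$ and $E[S^2]$ at any fixed $\lambda$ below the stability threshold; substituting the smaller high-traffic moments therefore yields a smaller predicted $E[T]$, which is the lemma.

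The main obstacle is step~(i). The actual state process $\bm{s(t)}$ is a three-dimensional Markov chain whose generating function is intractable (precisely the reason the paper turns to approximation), so one cannot verify $f_1^{\text{HT}} \ge f_1^{\text{actual}}$ by direct computation. The sample-path coupling sketched above is the cleanest route, but care is required because a request completion simultaneously resets one coordinate of the join queue and cancels outstanding copies, so the gap $|n_1 - n_2|$ can change in non-monotone ways at cancellation epochs; the coupling must therefore be defined so that each cancellation in the actual system corresponds to a (possibly vacuous) event in the high-traffic chain that preserves the dominance relation.
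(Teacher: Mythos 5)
Your argument is correct in outline and rests on the same core observation as the paper's proof: the high traffic assumption only ever gives a leading recovery server \emph{more} work to advance on, so it shifts the service-start-type probabilities toward the stochastically faster types of Lm.~\ref{lm_reqservtime_dists}, and a faster service mix can only reduce the resulting download time. The execution differs, though. For availability one the paper argues directly on the transition diagrams of Fig.~\ref{fig:fig_reptoall_t1_mp__high_traff_approx}: the high-traffic chain is obtained from the actual chain by \emph{adding} transitions (of rate $\alpha$, resp.\ $\beta$) out of exactly those states where a leading server would otherwise run out of replicas, and then lumping states; extra transitions can only speed up request completions. For general $t$ the paper then asserts, as you do, that high traffic increases the fraction of requests served with a faster type. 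You instead route the conclusion through the $M/G/1$ machinery, writing $E[S^m]=E[S_0^m]-f_1\bigl(E[S_0^m]-E[S_1^m]\bigr)$ and invoking monotonicity of the PK expression in $E[S]$ and $E[S^2]$; this is a worthwhile clarification, since the saturated high-traffic system itself has infinite sojourn times and the ``download time under high traffic'' is only meaningful through the approximation of Prop.~\ref{prop_reptoall_mg1}. Neither you nor the paper completes the key dominance step: the paper's claim that the added transitions ``help the system serve the requests faster'' and your $f_1^{\mathrm{HT}}\geq f_1$ are the same unproven assertion (it reappears quantitatively as $\hat f_0 \leq f_0$ in Thm.~\ref{thm_reptoall_t1_hightraff_approx}), and your remark that cancellation epochs make the gap $|n_1-n_2|$ move non-monotonically identifies precisely why a clean sample-path coupling is not immediate. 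So the two arguments achieve the same thing at the same level of rigor; yours is more explicit about what quantity is actually being bounded, the paper's is more concrete about the mechanism for $t=1$.
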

\begin{proof}
  Let us first consider the system with availability one. Comparing the two Markov processes in Fig.~\ref{fig:fig_reptoall_t1_mp__high_traff_approx}, one can see that the process under high traffic assumption can be obtained from the actual state process as follows: 1) Introduce additional transitions of rate $\alpha$ from state $(i,(i,0))$ to $(i+1,(i+1,0))$ and additional transitions of rate $\beta$ from state $(i,(0,i))$ to $(i+1,(0,i+1))$ for each $i \geq 0$, 2) Gather the states $(i,(m,n))$ for $i \geq 0$ into a super state, (3) Observe that the process with the super states is the same as the process under high traffic assumption. Thus, employing the high traffic assumption has the effect of placing extra state transitions that help the system to serve the requests faster.
  
  The above rationale can be extended for the system with any degree of availability. Recall from Lm.~\ref{lm_reqservtime_dists} that request service time distributions are stochastically ordered as $S_0 > \dots > S_t$. High traffic assumption ensures that there is always a replica for the leading recovery servers to serve, hence increases the fraction of requests served with a faster request service time distribution. Thus, the analysis with high traffic assumption would yield a lower bound on the download time in the actual system.
\end{proof}

\begin{theorem}
  Under the assumptions in Prop.~\ref{prop_reptoall_mg1} and for replicate-to-all system with availability one, we have the following bounds on the probabilities $f_0$ and $f_1$ over the request service time distributions $S_0$ and $S_1$ (see Lm.~\ref{lm_reqservtime_dists})
  \[ f_0 \geq f^{lb}_0 = \frac{\gamma\nu}{\gamma\nu+2\mu^2}, \qquad f_1 \leq f^{ub}_1 = \frac{2\mu^2}{\gamma\nu+2\mu^2} \]
\label{thm_reptoall_t1_hightraff_approx}
\end{theorem}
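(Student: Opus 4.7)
The plan is to compute $f_0$ and $f_1$ exactly in the high-traffic Markov chain of \S\ref{subsec:subsec_reptoall_t1_hightraff} and then upgrade those equalities to the one-sided bounds claimed in the theorem by reusing the monotone coupling from the proof of Lemma~\ref{lm_reptoall_t1_lb}.

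The first and most delicate step is to classify every transition out of a state $(n_1, n_2)$ according to whether it completes a request and, if so, which service-start type the newly promoted head-of-line request has. The useful picture is: at state $(i, 0)$ the leading recovery server has already completed copies of the first $i$ head-of-line requests $r, \ldots, r+i-1$ (which sit in the join queue) and is currently serving $r+i$'s copy, while the slow recovery server and the systematic server are still working on $r$'s copy. Hence the next request $r+1$ has its leading-server copy already departed iff $i \geq 2$. Consequently, departures from $(i,0)$ or $(0,j)$ with $i,j \geq 2$---triggered either by the systematic server at rate $\gamma$ or by the slow recovery server at rate $\mu$---promote a type-$1$ next request, whereas departures from the boundary states $(0,0)$, $(1,0)$, $(0,1)$ promote a type-$0$ next request. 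The delicate boundary case is $(1,0)$: the leading server is only mid-service on $r+1$'s copy, so nothing of $r+1$ has yet departed when $r$ leaves, and $r+1$ starts as type-$0$.

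With this classification in hand, the stationary probabilities $p_{0,0} = \gamma/\nu$ and $p_{i,0} = p_{0,i} = \bigl(\mu/(\mu+\gamma)\bigr)^i p_{0,0}$ from \eqref{eq:eq_simplest_simplex_steady_state} (with $\alpha=\beta=\mu$) make the rate of type-$0$ service starts collapse to
\[
  p_{0,0}\gamma + (p_{1,0} + p_{0,1})(\gamma+\mu) \;=\; \gamma^2/\nu + 2\gamma\mu/\nu \;=\; \gamma,
\]
while the total departure rate is $(\gamma\nu + 2\mu^2)/\nu$, matching the throughput identity derived in the proof of Theorem~\ref{thm_reptoall_t1_winning_freqs}. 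Dividing gives high-traffic fractions $f_0^{ht} = \gamma\nu/(\gamma\nu+2\mu^2)$ and $f_1^{ht} = 2\mu^2/(\gamma\nu+2\mu^2)$, which are exactly $f^{lb}_0$ and $f^{ub}_1$.

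To finish, I would invoke the coupling in the proof of Lemma~\ref{lm_reptoall_t1_lb}: the high-traffic chain is obtained from the true state process by adding transitions that keep a leading recovery server perpetually advancing through a phantom queue, and those extra transitions can only lengthen the leading-server advantage and hence inflate the frequency of type-$1$ starts. Combined with the stochastic ordering $S_0 > S_1$ from Lemma~\ref{lm_reqservtime_dists}, this yields $f_1 \leq f_1^{ht}$ and, since $f_0 + f_1 = 1$, $f_0 \geq f_0^{ht}$, as claimed. The main obstacle I anticipate is the bookkeeping in the first step---particularly justifying why $(1,0)$ and $(0,1)$ sit on the type-$0$ side of the boundary---so I would write that case out in full; once the classification is settled, the arithmetic and the monotonicity appeal are both short.
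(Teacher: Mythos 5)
Your proposal is correct and follows essentially the same route as the paper: you classify departures by whether the chain lands in $(0,0)$ (type-$0$ start) or elsewhere (type-$1$ start), compute the type-$0$ start rate $\gamma$ against the total departure rate $(\gamma\nu+2\mu^2)/\nu$ in the high-traffic birth--death chain, and then argue monotonicity to turn the high-traffic equalities into one-sided bounds. The only cosmetic difference is the last step, where the paper justifies $\hat{f}_0 \leq f_0$ by noting that a stable system must empty regularly and every request arriving to an empty system makes a type-$0$ start, whereas you appeal to the added-transition coupling from Lemma~\ref{lm_reptoall_t1_lb}; both are the same informal monotonicity argument, and your aside about the stochastic ordering $S_0 > S_1$ is not actually needed for the frequency comparison.
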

\begin{proof}
  Consider the state process under high traffic assumption as given in Fig.~\ref{fig:fig_reptoall_t1_mp__high_traff_approx}. State transitions that are towards the center $(0,0)$ state represent request completions. Let us define $f_d$ as the fraction of such state transitions.
  
  Since queues are never empty under high traffic assumption, the next request always starts service as soon as the head of the line request departs. Therefore, among the state transitions that represent request completions, a request makes a type-$0$ service start per transition into state $(0,0)$ while a request makes a type-$1$ service start per transition to any other states. Let $f_{\to 0}$ and $f_{\to 1}$ be the fraction of state transitions that represent respectively type-$0$ and type-$1$ request service starts. We find
  \begin{align*}
    & f_d = \pi_{0,0}\frac{\gamma}{\nu} + \sum_{i=1}^{\infty} (\pi_{0,i} + \pi_{i,0})\frac{\mu+\gamma}{\nu} = \frac{2\mu^2+2\mu\gamma+\gamma^2}{\nu^2} \\
    & \begin{aligned}
        f_{\to 0} &= \pi_{0,0}\frac{\gamma}{\nu} + \pi_{1,0}\frac{\mu + \gamma}{\nu} + \pi_{0,1}\frac{\mu + \gamma}{\nu} \\
        &= \pi_{0,0}\left(\frac{\gamma}{\nu} + \frac{2\mu}{\mu + \gamma}\frac{\mu + \gamma}{\nu}\right) = \pi_{0,0} = \frac{\gamma}{\gamma+2\mu},
      \end{aligned}
  \end{align*}
  
  Then, under high traffic approximation, the limiting fraction of the requests that make type-$0$ ($\hat{f}_0$) or type-$1$ ($\hat{f}_1$) service start are found for $\nu = \gamma + 2\mu$ as
  \[ \hat{f}_0 = \frac{f_{\to 0}}{f_d} = \frac{\gamma\nu}{\gamma\nu+2\mu^2}, \qquad \hat{f}_1 = 1 - \hat{f}_0 = \frac{2\mu^2}{\gamma\nu+2\mu^2}. \]
  
  Under stability, system has to empty out regularly. A request that finds the system empty makes type-$0$ service start for sure. However under high traffic assumption, servers are always busy, which enforces less number of type-$0$ request service starts. Therefore, the fraction $\hat{f}_0$ of type-$0$ request service starts obtained under high traffic approximation presents a lower bound for its value under stability, i.e., $\hat{f}_0 \leq f_0$, from which $\hat{f}_1 \geq f_1$ follows immediately.
\end{proof}

Given that type-$0$ service is slower than type-$1$ service, substituting the bounds $f^{lb}_0$ and $f^{ub}_1$ given in Thm.~\ref{thm_reptoall_t1_hightraff_approx} in place of the actual probabilities $f_0$ and $f_1$ in \eqref{eq:eq_reptoall_servmoments} yields the lower bounds on the request service time moments as
\begin{equation}
\begin{split}
  E[S_{lb}] &= f^{lb}_0\left(\frac{2}{\gamma+\mu} - \frac{1}{\gamma+2\mu}\right) + f^{ub}_1\frac{1}{\gamma+\mu}, \\
  E[S_{lb}^2] &= f^{lb}_0\left(\frac{4}{(\gamma+\mu)^2} - \frac{2}{(\gamma+2\mu)^2}\right) + f^{ub}_1\frac{2}{(\gamma+\mu)^2}.
\end{split}
\label{eq:eq_simplex_t_1_serv_time_moments_values}
\end{equation}
As suggested by the $M/G/1$ approximation in Prop.~\ref{prop_reptoall_mg1}, substituting these lower bounds on the service time moments in the PK formula gives an \textit{approximate} lower bound on the hot data download time in replicate-to-all system with availability one. To emphasize, since the $M/G/1$ model is not exact but an approximation, the lower bound on the download time can only be claimed to be an approximate one. Simulated average hot data download times and the derived $M/G/1$ approximation are compared in Fig.~\ref{fig:fig_reptoall_t1_sim_vs_model}. Approximation seems to be doing very good at predicting the actual average download time, and especially better than the split-merge upper bound given in \cite{kadhe2015availability} and the lower bound we stated in Thm.~\ref{thm_reptoall_t_lb}.
\begin{figure}[t]
  \centering
  \includegraphics[width=0.45\textwidth, keepaspectratio=true]{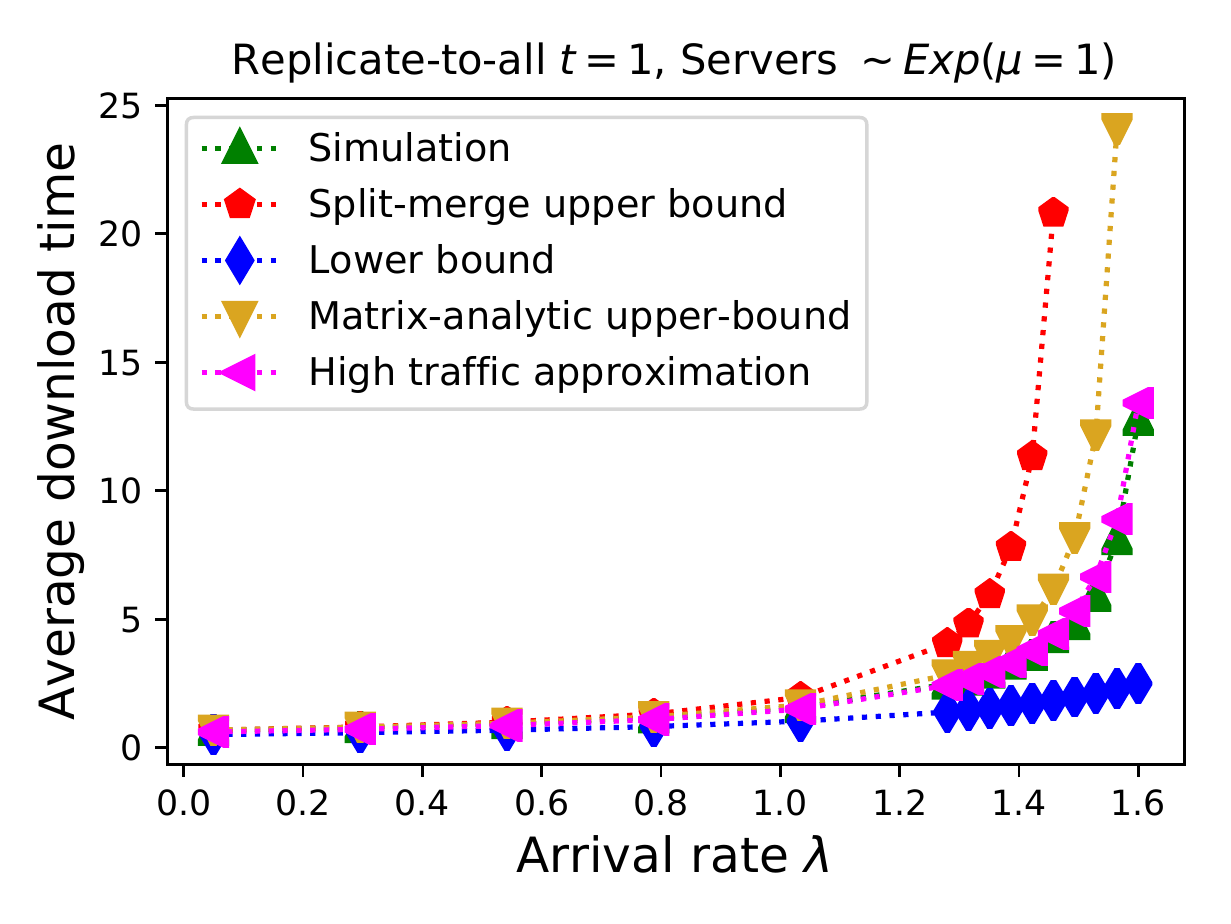}
  \caption{For replicate-to-all system with availability one and exponentially distributed service times at the servers, comparison of the simulated average hot data download time, the split-merge upper bound in \cite{kadhe2015availability}, the lower bound in Thm.\ref{thm_reptoall_t_lb}, matrix analytic upper bound in Thm.~\ref{thm_reptoall_t1_matrixanalytic_ub}, and the $M/G/1$ approximation given in Prop.~\ref{prop_reptoall_mg1} and Thm.~\ref{thm_reptoall_t1_hightraff_approx}.}
\label{fig:fig_reptoall_t1_sim_vs_model}
\end{figure}

\subsection{Service rate allocation}
\label{subsec:subsec_reptoall_t1_resalloc}
Suppose that the system is given a budget of service rate which can be arbitrarily allocated across the servers. When service times at the server are exponentially distributed, we show that allocating more service rate at the systematic server achieves smaller hot data download time (see Appendix~\ref{subsec:subsec_dE_T_dro_less_than_zero}). This suggests that in systems employing availability codes for reliable storage, systematic symbols shall be stored on the fast nodes while the slow nodes can be used for storing the redundant symbols for faster hot data download.

\subsection{Matrix analytic solution}
Here we find an upper bound on the average hot data download time that is provably tighter than the split-merge upper bound in \cite{kadhe2015availability}. Let us truncate the state process for the replicate-to-all system with availability one (see Fig.~\ref{fig:fig_reptoall_t1_mp__high_traff_approx}) such that the pyramid is limited to only the five central columns and infinite only in the vertical dimension. This choice is intuitive given Conj.~\ref{conj_reptoall_t_fjs}; system spends more time at the central states which implement type-$0$ request and it is hard for the leading servers to advance further ahead, which would have made the system state transit towards the wings of the pyramid Markov Chain.
Also the analysis in Appendix~\ref{subsec:subsec_reptoall_t1_pyramid_mp_analysis} suggests that the most frequently visited system states are located at the central columns.

\subsubsection{Computing the steady state probabilities}
Finding a closed form solution for the steady state probabilities of the states in the truncated process is as challenging as the original problem. However one can solve the truncated process numerically with an arbitrarily small error using the \textit{Matrix Analytics method} \cite[Chapter~21]{MorHB13}.
In the following, we denote the vectors and matrices in bold font. Let us first define the limiting steady state probability vector as
\begin{equation*}
  \begin{split}
    \bm{\pi} =& [\pi_{0,(0,0)}, \pi_{1,(0,1)}, \pi_{1,(0,0)}, \pi_{1,(1,0)}, \\
      & \quad \pi_{2,(0,2)}, \pi_{2,(0,1)}, \pi_{2,(0,0)}, \pi_{2,(0,1)}, \pi_{2,(2,0)}, \\
	    & \quad  \pi_{3,(0,2)}, \pi_{3,(0,1)}, \pi_{3,(0,0)}, \pi_{3,(1,0)} ,\pi_{3,(2,0)}, \cdots] \\
     =& [\bm{\pi}_0, \bm{\pi}_1, \bm{\pi}_2, \bm{\pi}_3, \bm{\pi}_4, \cdots].
  \end{split}
\end{equation*}
where $\pi_{k,(i,j)}$ is the steady state probability for state $(k,(i,j))$ and
\begin{equation*}
  \begin{split}
    & \bm{\pi}_0 = [\pi_{0,(0,0)}, \pi_{1,(0,1)}, \pi_{1,(0,0)}, \pi_{1,(1,0)}], \\
    & \bm{\pi}_i = [\pi_{i+1,(0,2)}, \pi_{i+1,(0,1)}, \pi_{i+1,(0,0)}, \pi_{i+1,(1,0)}, \pi_{i+1,(2,0)}].
  \end{split}
\end{equation*}
for $i \geq 1$. One can write the balance equations governing the limiting probabilities in the form below
\begin{equation}
  \label{eq:eq_balance_eq}
  \bm{\pi}\bm{Q} = \bm{0}
\end{equation}
For the truncated process,  $\bm{Q}$ has the following form
\begin{equation}
  \bm{Q}   =
  \left[\begin{array}{ccccccc}
  \bm{F}_0 & \bm{H}_0    &        &        &        &             & \\
  \bm{L}_0 & \bm{F}	     & \bm{H} &        &        & \bm{0}      & \\
           & \bm{L}	     & \bm{F} & \bm{H} &        &             & \\
           &             & \bm{L} & \bm{F} & \bm{H} &             & \\
           & \bm{0}      &  	    & \bm{L} & \bm{F} & \bm{H}      & \\
           &	           &	      &	     & \ddots   & \ddots      & \\
  \end{array} \right],
\label{eq:eq_Q}
\end{equation}
where the sub-matrices $\bm{F}_0$, $\bm{H}_0$, $\bm{L}_0$, $\bm{F}$, $\bm{L}$ and $\bm{H}$ are given in Appendix~\ref{subsec:subsec_reptoall_t1_matrix_analytic} in terms of server service rates $\gamma$, $\alpha$, $\beta$ and the request arrival rate $\lambda$. Using \eqref{eq:eq_balance_eq} and \eqref{eq:eq_Q}, we get the following system of equations in matrix form,
\begin{equation}
  \begin{split}
    & \bm{\pi}_0\bm{F}_0 + \bm{\pi}_1\bm{L}_0 = 0, \\
  	& \bm{\pi}_0\bm{H}_0 + \bm{\pi}_1\bm{F} + \bm{\pi}_2\bm{L} = 0, \\
  	& \bm{\pi}_i\bm{H} + \bm{\pi}_{i+1}\bm{F} + \bm{\pi}_{i+2}\bm{L} = 0, \quad i \geq 1.
  \end{split}
\label{eq:eq_system}
\end{equation}
In order to solve the system above, we assume the steady state probability vectors to be of the form
\begin{equation}
  \bm{\pi}_i = \bm{\pi}_1\bm{R}^{i-1}, \quad i \geq 1, \bm{R} \in R^{5 \times 5}.
\label{eq:eq_pi}
\end{equation}
Combining \eqref{eq:eq_system} and \eqref{eq:eq_pi}, we get
\begin{equation}
  \begin{split}
  & \bm{\pi}_0\bm{F}_0 + \bm{\pi}_1\bm{L}_0 = 0, \\
  & \bm{\pi}_0\bm{H}_0 + \bm{\pi}_1(\bm{F} + \bm{RL}) = 0, \\
  & \bm{\pi}_i(\bm{H} + \bm{RF} + \bm{R}^2\bm{L}) = 0, \quad i \geq 1.
  \end{split}
\label{eq:eq_system2}
\end{equation}
From \eqref{eq:eq_system2} we have common conditions for the system to hold
\begin{equation}
\bm{H}+\bm{RF}+\bm{R}^2\bm{L} = \bm{0},\qquad \bm{R} = -\left(\bm{R}^2\bm{L}+\bm{H}\right)\bm{F}^{-1}.
\label{eq:eq_R}
\end{equation}

The inverse of $\bm{F}$ in \eqref{eq:eq_R} exists since $\det(\bm{F}) = -\delta^3(\delta-\alpha)(\delta-\beta) \neq 0$ assuming $\delta=\alpha+\beta+\gamma+\lambda$ and $\lambda > 0$. Using \eqref{eq:eq_R}, an iterative algorithm to compute $\bm{R}$ is given in Algorithm~\ref{alg:alg_R}. The norm $\norm{\bm{R}_i-\bm{R}_{i-1}}$ corresponds to the absolute value of the largest element of the difference matrix $\bm{R}_i-\bm{R}_{i-1}$. Therefore, the algorithm terminates when the largest difference between the elements of the last two computed matrices is smaller than the threshold $\epsilon$. The initial matrix $\bm{R}_0$ could take any value, not necessarily $\bm{0}$. The error threshold $\epsilon$ could be fixed to any arbitrary value, but the lower this value the slower the convergence.
\begin{algorithm}
\caption{Computing matrix $\bm{R}$}
\label{alg:alg_R}
  \begin{algorithmic}[1]
  \Procedure{ComputingR}{}
    \State $\epsilon \gets 10^{-6}$, $\bm{R}_0 \gets \bm{0}$, ${i} \gets 1$
    \While{\textbf{true}}
      \State $\bm{R}_i \gets -\left(\bm{R}_{i-1}^2\bm{L}+\bm{H}\right)\bm{F}^{-1}$
      \If{$\norm{\bm{R}_i-\bm{R}_{i-1}} > \epsilon$}
        \State ${i} \gets {i+1}$
      \Else\ \Return $\bm{R}_i$
      \EndIf
    \EndWhile
  \EndProcedure
  \end{algorithmic}
\end{algorithm}
Computing $\bm{R}$, vectors $\bm{\pi}_0$ and $\bm{\pi}_1$ are remaining to be found in order to deduce the values of all limiting probabilities. Recall that in \eqref{eq:eq_system2}, the first two equations are yet to be used. Writing these two equations in matrix form
\begin{equation}
  \left[\begin{array}{cc}
    \bm{\pi}_0 & \bm{\pi}_1
    \end{array}\right]\left[\begin{array}{cc}
    \bm{F}_0 & \bm{H}_0\\
    \bm{L}_0 & \bm{RL}+\bm{F}
  \end{array}\right]=\bm{0},
\label{eq:eq_finding_pi0_pi1_1}
\end{equation}
where $\bm{0}$ is a $1 \times 9$ zeros vector and
\[ \Phi=\left[\begin{array}{cc}
\bm{F}_0 & \bm{H}_0 \\
\bm{L}_0 & \bm{RL}+\bm{F}
\end{array}\right] \in R^{9 \times 9}. \]
In addition, we have the normalization equation to take into account. Denoting $\bm{1}_0=[1, 1, 1, 1]$, $\bm{1}_1=[1, 1, 1, 1, 1]$ and using \eqref{eq:eq_pi}, we get
\begin{align}
  \bm{\pi}_0\bm{1}_0^D + \sum_{i=1}^\infty \bm{\pi}_i\bm{1}_1^D &= 1 \nonumber\\
  \bm{\pi}_0\bm{1}_0^D + \sum_{i=1}^\infty \bm{\pi}_1\bm{R}^{i-1}\bm{1}_1^D &= 1\nonumber \\
  \bm{\pi}_0\bm{1}_0^D + \bm{\pi}_1(\bm{I}-\bm{R})^{-1}\bm{1}_1^D &= 1\nonumber \\
  \left[\begin{array}{cc}
    \bm{\pi}_0 & \bm{\pi}_1
  \end{array}\right]\left[\begin{array}{c}
    \bm{1}_0^D \\
    \left(\bm{I}-\bm{R}\right)^{-1}.\bm{1}_1^D
  \end{array}\right] &= 1,
\label{eq:eq_norm}
\end{align}
where $\bm{I}$ is the $5 \times 5$ identity matrix. In order to find $\bm{\pi}_0$ and $\bm{\pi}_1$, we solve the following system
\begin{align}
  \left[\begin{array}{cc}
    \bm{\pi}_0 & \bm{\pi}_1
  \end{array}\right]\Psi &= \left[1, 0, 0, 0, 0, 0, 0, 0, 0\right].
\label{eq:eq_system_final}
\end{align}
where $\Psi$ is obtained by replacing the first column of $\Phi$ with $[\bm{1}_0, \bm{1}_1(\bm{I}-\bm{R}^D)^{-1}]^D$. Hence, \eqref{eq:eq_system_final} is a linear system of $9$ equations with $9$ unknowns. After solving \eqref{eq:eq_system_final}, we obtain the remaining limiting probabilities vector using \eqref{eq:eq_pi}.

\subsubsection{Bounding the average hot data download time}
Let $N_{ma}$ be the number of requests in the truncated system. First notice that
\begin{equation*}
  \begin{split}
    Pr\{N_{ma} = 0\} &= \pi_{0,(0,0)}, \\
	  Pr\{N_{ma} = 1\} &= \pi_{1,(0,0)} + \pi_{1,(0,1)} + \pi_{1,(1,0)} \\
	  &= \bm{\pi}_0\bm{1}_0^D-\pi_{0,(0,0)}, \\
	  Pr\{N_{ma} = i\} &= \pi_{i,(0,2)} + \pi_{i,(0,1)} + \pi_{i,(0,0)} \\
	  &\quad + \pi_{i,(1,0)} + \pi_{i,(2,0)} = \bm{\pi}_{i-1}\bm{1}_1^D, \quad i \geq 2.
  \end{split}
\end{equation*}
Then, the average number of requests in the truncated system is computed as
\begin{equation}
  \begin{split}
E[&N_{ma}] = \textstyle \sum_{i=0}^{\infty} i Pr\{N_{ma}=i\} \\
    &= \bm{\pi}_0\bm{1}_0^D-\pi_{0,(0,0)} + \sum_{i=2}^{\infty} i(\bm{\pi}_{i-1}\bm{1}_1^D) \\
    &= \bm{\pi}_0\bm{1}_0^D-\pi_{0,(0,0)} + \sum_{i=2}^{\infty} i(\bm{\pi}_1\bm{R}^{i-2}\bm{1}_1^D) \\
    &= \bm{\pi}_0\bm{1}_0^D-\pi_{0,(0,0)} + \bm{\pi}_1 \left(\sum_{i=2}^{\infty} (i-1)\bm{R}^{i-2} + \bm{R}^{i-2}\right)\bm{1}_1^D \\
    &= \bm{\pi}_0\bm{1}_0^D-\pi_{0,(0,0)} + \bm{\pi}_1 \left(\sum_{j=1}^{\infty} j\bm{R}^{j-1} + \sum_{i=0}^{\infty} \bm{R}^i\right)\bm{1}_1^D \\
    &= \bm{\pi}_0\bm{1}_0^D-\pi_{0,(0,0)} + \bm{\pi}_1 \left((\bm{I}-\bm{R})^{-2} + (\bm{I}-\bm{R})^{-1}\right)\bm{1}_1^D.
  \end{split}
\label{eq:eq_avgN}
\end{equation}
Equation \eqref{eq:eq_avgN} shows that we only need $\bm{\pi}_0$, $\bm{\pi}_1$ and $\bm{R}$, thus no need to calculate the infinite number of limiting probabilities.
\begin{theorem}
  A strict upper bound on the average hot data download time of replicate-to-all system with availability one is given as $E[N_{ma}] / \lambda$ where $E[N_{ma}]$ is given in \eqref{eq:eq_avgN}.
\label{thm_reptoall_t1_matrixanalytic_ub}
\end{theorem}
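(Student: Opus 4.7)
The plan is to interpret the truncation as defining a modified physical system whose download time dominates that of the original replicate-to-all system, and then to invoke Little's Law on the truncated Markov process. Since the truncated process is infinite only in the vertical dimension, its stationary distribution is what the matrix analytic procedure computes exactly (up to the chosen tolerance $\epsilon$), so $E[N_{ma}]$ itself is rigorously obtained from $\bm{\pi}_0$, $\bm{\pi}_1$, and $\bm{R}$ through \eqref{eq:eq_avgN}.

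First, I would identify the coupled system implicit in the truncation. Restricting the horizontal extent of the pyramid to the five central columns is operationally equivalent to capping the lead that any recovery server can build over its sibling at two requests. Whenever service by a leading recovery server would push the corresponding join-queue coordinate $n_i$ beyond $2$, that service completion is suppressed (the server is forced to idle until its slow peer catches up). All other transitions, in particular arrivals and the completions at the systematic server and at non-leading recovery servers, proceed exactly as in the original system.

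Second, I would argue that this modified system is strictly slower than the original in the stochastic sense needed for the bound. By Lemma~\ref{lm_reqservtime_dists}, the type-$j$ service time distributions are stochastically ordered with $S_0$ being the slowest. A request experiences type-$1$ service start precisely when, at the instant it becomes the head of the line, one of the recovery groups contains a leading server. Capping the lead restricts the set of trajectories along which type-$1$ starts can occur, so that under the modified dynamics the fraction of requests served with the faster $S_1$ distribution can only decrease. A pathwise coupling on common arrival and service epochs (using the standard trick of assigning i.i.d.\ exponential clocks to every server and suppressing only the excess completions at capped leading servers) makes this formal: the number of requests in the modified system pathwise dominates that in the original. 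Taking expectations, $E[N] \leq E[N_{ma}]$.

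Finally, since arrivals are Poisson at rate $\lambda$ and the modified system is stable whenever the original is, Little's Law applied to the modified system gives $E[T_{ma}] = E[N_{ma}]/\lambda$. Combining with the dominance from the previous step yields $E[T] \leq E[N_{ma}]/\lambda$, and the theorem follows. The main obstacle is the coupling step: one has to verify that forcing a leading recovery server to idle in the modified system does not perturb the independent exponential clocks at the other servers in a way that could accidentally accelerate a completion relative to the original. The memoryless property of the exponential service times is what makes this work, since suspending and later resuming a leading server has no effect on the residual distribution at any other server.
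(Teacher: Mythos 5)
Your proof takes essentially the same route as the paper: interpret the five-column truncation as blocking a leading recovery server once it is two requests ahead of its sibling, argue that this blocked system is slower than the original, and conclude via Little's law that $E[N_{ma}]/\lambda$ upper-bounds the download time. The paper's own proof is a two-sentence version of exactly this argument; your coupling discussion simply fills in the monotonicity step that the paper asserts without detail.
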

\begin{proof}
 Truncation of the state process is equivalent to imposing a blocking on the recovery group whenever one of the recovery server is ahead of its sibling by two replicas, which works slower than the actual system. Therefore, the average download time found for the truncated system is an upper bound on the actual download time and by Little's law it is expressed as $E[N_{ma}] / \lambda$.
\end{proof}

Fig.~\ref{fig:fig_reptoall_t1_sim_vs_model} shows that the upper bound which we find with matrix analytic procedure is a closer upper bound than the upper bound obtained by the split-merge model in \cite{kadhe2015availability}. This is because the split-merge model is equivalent to truncating the state process and keeping only the central column, while the truncation we consider here keeps five central columns of the state process which yields greater predictive accuracy as expected.

\section{Extension to the General Case}
\label{sec:sec_reptoall_t_conj_approx}
Given the probabilities $f_j$'s over the possible request service time distributions $S_j$'s, we have an $M/G/1$ approximation for the replicate-to-all system with any degree of availability (Prop.~\ref{prop_reptoall_mg1}). However even when the system availability is one ($t=1$), exact analysis is formidable and we could only find estimates for $f_j$'s in Sec.~\ref{sec:sec_reptoall_t1}. In this section, we derive bounds or estimates for $f_j$'s when the system availability is greater than one by building on the relation between $f_j$'s that we conjectured in Conj.~\ref{conj_reptoall_t_fjs}.

\begin{theorem}
  Under the assumptions in Conj.~\ref{conj_reptoall_t_fjs}, we have the following bound and approximations for the probabilities $f_j$'s over the request service time distributions $S_j$'s
 \begin{equation}
    f_0 \geq \frac{1-\rho}{1-\rho^{t+1}}, \quad f_j \approx \rho^j\frac{1-\rho}{1-\rho^{t+1}}, \quad 1 \leq j \leq t.
  \label{eq:eq_simplex_t_bounds_on_f_j}
  \end{equation}
  for $\rho \geq \max\set{\rho_j}_{j=1}^t$. Approximation for $f_j$'s tends to be an over estimator, especially as $j$ gets larger.
\label{thm_reptoall_t_fsj_by_conj}
\end{theorem}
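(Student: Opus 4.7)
The plan is to propagate the conjectured recursion $f_i = \rho_i f_{i-1}$ from Conjecture~\ref{conj_reptoall_t_fjs} into a closed-form expression and then close the loop with the normalization axiom. Telescoping the recursion gives $f_j = f_0 \prod_{i=1}^{j} \rho_i$ for $1 \le j \le t$, so every probability in the trimmed-down space is pinned down by $f_0$ together with the sequence $\{\rho_i\}_{i=1}^t$. Using $\rho_i < 1$ and $\rho \ge \max_i \rho_i$, termwise monotonicity of products yields the bound $\prod_{i=1}^j \rho_i \le \rho^j$.

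Next I would invoke $\sum_{j=0}^t f_j = 1$. Substituting the telescoped form and applying the product bound gives
\[
1 \;=\; f_0 \Bigl(1 + \sum_{j=1}^t \prod_{i=1}^j \rho_i\Bigr) \;\le\; f_0 \sum_{j=0}^t \rho^j \;=\; f_0\,\frac{1-\rho^{t+1}}{1-\rho},
\]
from which the lower bound $f_0 \ge (1-\rho)/(1-\rho^{t+1})$ follows by rearrangement. The approximate expressions for $f_j$ in \eqref{eq:eq_simplex_t_bounds_on_f_j} then arise by pretending this product bound is tight, equivalently by replacing the heterogeneous sequence $\{\rho_i\}$ with the single surrogate $\rho$; under that substitution the normalization becomes an equality, so $f_0 = (1-\rho)/(1-\rho^{t+1})$ and $f_j = \rho^j f_0$.

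For the qualitative tail of the statement, I would compare the true value $f_j^\star = f_0^\star \prod_{i=1}^j \rho_i$ with the surrogate $\tilde f_j = \rho^j (1-\rho)/(1-\rho^{t+1})$ through the ratio
\[
\frac{\tilde f_j}{f_j^\star} \;=\; \frac{1}{f_0^\star}\,\frac{1-\rho}{1-\rho^{t+1}} \,\cdot\, \prod_{i=1}^{j} \frac{\rho}{\rho_i}.
\]
The first two factors are independent of $j$, while the product is a growing geometric accumulation of terms $\rho/\rho_i \ge 1$ whenever some $\rho_i$ is strictly below $\rho$; hence each extra index contributes a factor $\rho/\rho_{j} \ge 1$ to the overestimation, capturing the claim that the surrogate tends to overshoot, more so as $j$ grows. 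The main obstacle is not inside this derivation, which is purely algebraic once the conjecture is granted, but in the conjecture itself; removing it would require a rigorous handle on the steady-state distribution of the multi-dimensional state process that Sec.~\ref{sec:sec_reptoall_t1} already identifies as intractable even for $t=1$, so Theorem~\ref{thm_reptoall_t_fsj_by_conj} is best read as a conditional statement whose proof reduces entirely to the two algebraic steps above.
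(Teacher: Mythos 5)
Your proposal is correct and follows essentially the same route as the paper: telescope the conjectured recursion $f_j=\rho_j f_{j-1}$, apply the normalization $\sum_{j=0}^t f_j=1$, and replace each $\rho_i$ by the common upper bound $\rho$ to obtain the lower bound on $f_0$ and the geometric approximations $\hat f_j=\rho^j\hat f_0$. Your ratio computation $\tilde f_j/f_j^\star=\bigl(\hat f_0/f_0^\star\bigr)\prod_{i=1}^{j}(\rho/\rho_i)$ is a slightly more explicit rendering of the paper's informal remark that the overestimation worsens with $j$, but it is the same argument in substance.
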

\begin{proof}
  Recall the relation in Conj.~\ref{conj_reptoall_t_fjs}
  \[ f_{j} = \rho_j f_{j-1}; \quad \rho_j < 1, \quad j = 1, \dots, t. \]
  Using the normalization requirement, we obtain
  \[ \sum_{i=0}^D f_i = f_0 \Bigl[1 + \sum_{j=1}^D \prod_{i=1}^j \rho_i \Bigr] = 1. \]
  Then, $f_j$'s are found in terms of $\rho_j$'s as
  \begin{equation}
      f_0 = \Bigl[1 + \sum_{j=1}^D \prod_{i=1}^j \rho_i \Bigr]^{-1}, \quad f_j = f_0 \prod_{i=1}^j \rho_i.
    \label{eq:eq_fj_intermsof_ro}
  \end{equation}
  Substituting each $\rho_i$ with the same $\rho$, which is an upper bound on all $\rho_i$'s, and solving for $f_j$'s gives the estimates
  \[ \hat{f}_0 = \frac{1-\rho}{1-\rho^{t+1}}, \quad \hat{f}_j = \rho^j f_0 \]
  Note that, the above estimates preserve the relation given in Conj.~\ref{conj_reptoall_t_fjs}. It is easy to see $f_0 \geq \hat{f}_0$. However, concluding that $f_j \leq \hat{f}_j$ for $j \neq 0$ requires further assumptions on $\rho_j$'s but it is also easy to see why $\hat{f}_j$'s tend to be an over estimator for most series of $\rho_j$'s, which is especially the case for larger $j$, e.g., showing that $\hat{f}_t/f_t > 1$ holds is straightforward.
\end{proof}

$M/G/1$ approximation in Prop.~\ref{prop_reptoall_mg1} relies on good estimates for the request service time probabilities $f_j$'s. The tighter the upper bound $\rho$ in Thm~\ref{thm_reptoall_t_fsj_by_conj}, the better the estimates $\hat{f}_j$'s are. The simplest way is to set $\rho$ to its naive maximum, $\rho = 1$, and obtain the estimates as $\hat{f}_j = 1/(t+1)$ for $j = 0, \dots, t$. Substituting these estimates in the $M/G/1$ approximation gives us a \textit{naive approximation} for the replicate-to-all system.

Next, we find an inequality for $\rho$ in Thm.~\ref{thm_reptoall_t_fsj_by_conj}, which leads us to better estimates for $f_j$'s.
\begin{corollary}
  Upper bound $\rho$ in Thm.~\ref{thm_reptoall_t_fsj_by_conj} holds the inequality
  \begin{equation}
    (1 - \lambda E[\hat{S}])\rho^{t+1} - \rho + \lambda E[\hat{S}] \geq 0.
  \label{eq:eq_reptoall_t_ineq_for_ro}
  \end{equation}
  where $E[\hat{S}] = \frac{1}{t+1}\sum_{j=0}^D E[S_j]$.
\label{cor_reptoall_t_ineq_for_ro}
\end{corollary}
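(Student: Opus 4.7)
The plan is to derive the inequality from the M/G/1 approximation in Prop.~\ref{prop_reptoall_mg1} by combining a PASTA argument with the structural observation that an empty system always produces a type-$0$ service start. Under the M/G/1 abstraction with mean service time $E[S]$, the stationary empty probability is $1-\lambda E[S]$, and by PASTA this coincides with the fraction of Poisson arrivals that find the system empty. Any such arrival has all of its $t+1$ copies begin service simultaneously at idle servers, so by Def.~\ref{def_req_servstart} it necessarily receives type-$0$ service. This yields the key bound
\[ f_0 \;\geq\; 1-\lambda E[S]. \]

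The next step is to turn this bound into a constraint on $\rho$. I would substitute the Thm.~\ref{thm_reptoall_t_fsj_by_conj} estimate $\hat f_0 = (1-\rho)/(1-\rho^{t+1})$ for $f_0$, and use the naive mean $E[\hat S] = \frac{1}{t+1}\sum_{j=0}^t E[S_j]$ as a tractable, $\rho$-independent surrogate for $E[S]$ on the right-hand side; this is precisely what the $\rho$-parametric formula $E[S] = \hat f_0\sum_j \rho^j E[S_j]$ collapses to at the naive point $\rho=1$, so using it is consistent with the approximation framework of Sec.~\ref{subsec:subsec_reptoall_mg1_approx}. Cross-multiplying $(1-\rho)/(1-\rho^{t+1}) \geq 1-\lambda E[\hat S]$ by $1-\rho^{t+1}>0$ and collecting the terms in $\rho^{t+1}$ and $\rho$ gives directly
\[ (1-\lambda E[\hat S])\rho^{t+1} - \rho + \lambda E[\hat S] \;\geq\; 0. \]

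The main obstacle is justifying the replacement of $E[S]$ by $E[\hat S]$. By Chebyshev's sum inequality applied to the co-decreasing sequences $\{\rho^j\}$ and $\{E[S_j]\}$ (the latter monotonicity coming from Lm.~\ref{lm_reqservtime_dists}), the $\rho$-parametric $E[S]$ actually satisfies $E[S]\geq E[\hat S]$, so the substitution does not follow from a strict implication and should be read as an approximate self-consistency condition on $\rho$ rather than a hard bound. Two sanity checks support this reading: at $\rho=1$ both sides of the target inequality collapse to $0$ (matching the naive approximation $\hat f_j=1/(t+1)$), while at $\rho=0$ it reduces to the trivially true $\lambda E[\hat S]\geq 0$; hence the constraint is informative only for intermediate $\rho$, which is exactly the regime in which an improved estimate is sought for use in Prop.~\ref{prop_reptoall_mg1}.
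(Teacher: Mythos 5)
Your first step --- arrivals that find the system empty necessarily make a type-$0$ service start, so $f_0 \geq 1-\lambda E[S]$ --- is exactly the paper's key ingredient; the paper phrases it through renewal theory ($f_0 \geq 1/E[J]$ with $E[J] = 1/(1-\lambda E[S])$ the expected number of arrivals per busy period) rather than PASTA, but the content is the same. The difference, and the gap, is in how you assemble the final inequality. You substitute $\hat f_0 = (1-\rho)/(1-\rho^{t+1})$ for $f_0$ and $E[\hat S]$ for $E[S]$ and cross-multiply. As you yourself observe, the second substitution goes the wrong way: $E[\hat S] \leq E[S]$ gives $1-\lambda E[\hat S] \geq 1-\lambda E[S]$, so the target is \emph{stronger} than what you established. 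In fact the first substitution goes the wrong way as well, since Thm.~\ref{thm_reptoall_t_fsj_by_conj} gives $f_0 \geq \hat f_0$, so replacing $f_0$ by $\hat f_0$ on the left also weakens your bound in the unfavorable direction. What you end up with is, by your own admission, a self-consistency heuristic rather than a derivation of \eqref{eq:eq_reptoall_t_ineq_for_ro}.

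The paper closes this gap by routing through the intermediate quantity $1/(t+1)$. First, $(1-\rho)/(1-\rho^{t+1}) = 1/(1+\rho+\cdots+\rho^t) \geq 1/(t+1)$ for $0 \leq \rho < 1$ is a standalone algebraic fact with no directionality issue. Second, the renewal bound is applied not to the actual system but to the reference system in which the naive estimate $\hat f_j = 1/(t+1)$ is exact; for that system $E[S]$ coincides with $E[\hat S]$, so the bound reads $1/(t+1) \geq 1-\lambda E[\hat S]$ with no substitution required. Chaining the two gives $(1-\rho)/(1-\rho^{t+1}) \geq 1/(t+1) \geq 1-\lambda E[\hat S]$, and your final cross-multiplication (which is correct) then yields \eqref{eq:eq_reptoall_t_ineq_for_ro}. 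The paper's step is itself heuristic --- it presumes the renewal bound transfers to a hypothetical system with the same $\lambda$ and the same $E[S_j]$'s --- but it avoids the two wrong-way substitutions your route requires. To repair your version, prove the algebraic inequality $(1-\rho)/(1-\rho^{t+1}) \geq 1/(t+1)$ separately and obtain $1/(t+1) \geq 1-\lambda E[\hat S]$ from the naive reference system, rather than estimating both sides of $f_0 \geq 1-\lambda E[S]$ simultaneously.
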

\begin{proof}
  Under stability, sub-sequence of request arrivals that find the system empty forms a renewal process \cite[Thm.~5.5.8]{gallager2013stochastic}. Assumptions in Conj.~\ref{conj_reptoall_t_fjs} imply that the replicate-to-all system is an $M/G/1$ queue with arrival rate $\lambda$ and service time $S$. Expected number of request arrivals $E[J]$ between successive renewal epochs (busy periods) is given as $E[J] = 1/(1 - \lambda E[S])$ \cite[Thm.~5.5.10]{gallager2013stochastic}.
  
  Requests that find the system empty at arrival are served with type-$0$ service time distribution. Requests that arrive within a busy period can be served with any type-$j$ service time distribution for $j = 0, \dots, t$. This observation reveals that $1/E[J]$ is a lower bound for $f_0$.
  
  Computing the value of $E[J]$ requires knowing $E[S]$, which we have been trying to estimate. An upper bound is given as $E[J_{ub}] = 1/(1 - \lambda E[S_{lb}])$ where $E[S_{lb}]$ is a lower bound for $E[S]$. One possibility is
  \[ E[S_{lb}] = \frac{1}{t+1}\sum_{j=0}^D E[S_j] \]
  which we previously obtained for the naive $M/G/1$ approximation. Thus, we have
  \[ f_0 \geq 1/E[J] \geq 1/E[J_{ub}] = 1 - \lambda E[S_{lb}]. \]
   
   In the system for which the estimate $\hat{f}_0 = 1/(t+1)$ becomes exact, the lower bound obtained from renewal theory (i.e., $1 - \lambda E[S]$) holds as well under stability. For this system, $E[S_{lb}]$ is exact, hence
   \[ \hat{f}_0 = 1/(t+1) \geq 1 - \lambda E[S_{lb}]. \]
  One can see that $(1-\rho)/(1-\rho^{t+1}) \geq 1/(t+1)$ for $0 \leq \rho < 1$, so we have
  \[ \frac{1-\rho}{1-\rho^{t+1}} \geq \frac{1}{t+1} \geq 1 - \lambda E[S_{lb}] \]
  from which \eqref{eq:eq_reptoall_t_ineq_for_ro} follows.
\end{proof}

Next we use inequality \eqref{eq:eq_reptoall_t_ineq_for_ro} to get a tighter value for the upper bound $\rho$ in Thm.~\ref{thm_reptoall_t_fsj_by_conj} as follows. Solving for $\rho$ in \eqref{eq:eq_reptoall_t_ineq_for_ro} does not yield a closed form solution, so to get one we take the limit as
\[ \lim_{t \to \infty} (1 - \lambda E[\hat{S}])\rho^{t+1} - \rho + \lambda E[\hat{S}] \geq 0 \iff \rho \leq \lambda E[\hat{S}]. \]

Using this new upper bound instead of the naive one (i.e., $\rho = 1$) gives us a \textit{better} $M/G/1$ approximation for replicate-to-all system. We next present the \textit{best} $M/G/1$ approximation we could obtain by estimating each $\rho_j$ separately rather than using a single upper bound $\rho$.

\begin{corollary}
   In replicate-to-all system, the service time probabilities $f_j$'s are well approximated as
  \begin{equation}
    \hat{f}_0 = \Bigl[1 + \sum_{i=1}^D \prod_{j=0}^{i-1} \hat{\rho}_j\Bigr]^{-1}, \quad \hat{f}_j = \hat{f}_0 \prod_{i=0}^{j-1} \hat{\rho}_i
  \end{equation}
  where $\hat{\rho}_j$'s are computed recursively as
  \begin{equation}
  \begin{split}
    \hat{\rho}_0 &= \frac{\lambda E[\hat{S}]}{t(1 - \lambda E[\hat{S}])}, \\
    \hat{\rho}_j &= \frac{1 - (1 - \lambda E[\hat{S}])(1 + \sum_{k=0}^{j-1} \prod_{l=0}^{k} \hat{\rho}_l)}{(1 - \lambda E[\hat{S}])(t-j)\prod_{k=0}^{j-1} \hat{\rho}_k}, \quad j = 1, \dots, t.
  \end{split}
  \label{eq:eq_reptoall_t_incremental_ros}
  \end{equation}
  where $E[\hat{S}] = \frac{1}{t+1}\sum_{j=0}^D E[S_j]$.
\label{cor_reptoall_t_approx_w_renewaltheo}
\end{corollary}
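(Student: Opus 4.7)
The plan is to obtain each $\hat{\rho}_j$ by combining the renewal-theoretic identity from Cor.~\ref{cor_reptoall_t_ineq_for_ro} with a greedy uniform heuristic applied to the still-undetermined tail of the distribution. First, I would treat the renewal lower bound $f_0 \geq 1 - \lambda E[\hat{S}]$ as an approximate equality and fix $\hat{f}_0 = 1 - \lambda E[\hat{S}]$, where $E[\hat{S}] = \frac{1}{t+1}\sum_{j=0}^{t} E[S_j]$ is the naive surrogate for $E[S]$ that was already used to motivate the naive approximation; this closes the circular dependence of $f_0$ on the unknown $E[S]$ without needing to know the target $f_j$'s.

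For the base case, the residual mass $1 - \hat{f}_0 = \lambda E[\hat{S}]$ has to be split across $\hat{f}_1, \ldots, \hat{f}_t$, and absent finer information I would spread it uniformly, giving $\hat{f}_j = \lambda E[\hat{S}]/t$ for $1 \leq j \leq t$ and hence $\hat{\rho}_0 = \hat{f}_1/\hat{f}_0 = (\lambda E[\hat{S}]/t)/(1 - \lambda E[\hat{S}])$, which matches the claim. For the inductive step, once $\hat{\rho}_0, \ldots, \hat{\rho}_{j-1}$ and consequently $\hat{f}_k = \hat{f}_0 \prod_{l=0}^{k-1} \hat{\rho}_l$ for $k \leq j$ are fixed, the still-to-be-placed mass is
\[ 1 - \sum_{k=0}^{j} \hat{f}_k = 1 - (1 - \lambda E[\hat{S}])\Bigl(1 + \sum_{k=0}^{j-1} \prod_{l=0}^{k} \hat{\rho}_l\Bigr). \]
Applying the same uniform heuristic to the $t-j$ remaining unknowns $\hat{f}_{j+1}, \ldots, \hat{f}_t$ and reading off $\hat{\rho}_j = \hat{f}_{j+1}/\hat{f}_j$ with $\hat{f}_j = \hat{f}_0 \prod_{k=0}^{j-1} \hat{\rho}_k$ in the denominator reproduces exactly \eqref{eq:eq_reptoall_t_incremental_ros}. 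The stated closed form for $\hat{f}_0$ then follows by normalizing $\sum_{j=0}^{t} \hat{f}_j = 1$ using $\hat{f}_j = \hat{f}_0 \prod_{l=0}^{j-1} \hat{\rho}_l$.

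The main obstacle is that this construction is heuristic rather than rigorous: each step replaces the unknown future $\hat{f}_k$'s with the only value consistent with the minimal information we have, namely the renewal lower bound and normalization. Conj.~\ref{conj_reptoall_t_fjs} enforces only monotone strict decay and does not pin down the rate, and I do not see how to sharpen the individual $\hat{\rho}_j$'s without a partial-sum analogue of Cor.~\ref{cor_reptoall_t_ineq_for_ro} that would control $\sum_{k \leq j} f_k$ for each $j$ via a refined renewal-reward argument on the embedded Markov chain. I would therefore present the result as the \emph{best} $M/G/1$ approximation we can extract from the available tools, and defer its validation to the same kind of simulation comparison shown in Fig.~\ref{fig:fig_reptoall_t1_sim_vs_model}.
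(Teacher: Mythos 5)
Your construction is correct and is essentially the paper's own: the paper also saturates the renewal lower bound $\hat{f}_0 = 1 - \lambda E[\hat{S}]$ and, at stage $j$, assumes the undetermined tail probabilities $\hat{f}_{j+1},\dots,\hat{f}_t$ are all equal (i.e., sets $\hat{f}_i = \hat{\rho}_j\hat{f}_j$ for $i>j$), which is exactly your uniform spreading of the residual mass, yielding the same recursion. The only discrepancy is a harmless typo in the paper's intermediate display for $\hat{\rho}_0$ (it drops the $1/t$ factor that both the corollary statement and your derivation correctly retain), and both treatments agree that the result is a heuristic approximation to be validated by simulation rather than a theorem.
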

\begin{proof}
  Setting $\hat{f}_j = \hat{\rho}_0\hat{f}_0$ for $j = 1, \dots, t$ and using the normalization requirement $\sum_{i=0}^D \hat{f}_i = 1$, we find $\hat{f}_0 = 1/(1+t\hat{\rho}_0)$. Using the inequality $1/(1+t\hat{\rho}_0) \geq 1/(t+1) \geq 1 - \lambda E[\hat{S}]$ (as found in the proof of Cor~\ref{cor_reptoall_t_ineq_for_ro}), we get the upper bound
  \[ \hat{\rho}_0 \leq \frac{\lambda E[\hat{S}]}{1 - \lambda E[\hat{S}]}. \]
  
  Fixing the value of $\hat{\rho}_0$ to the upper bound above and setting $\hat{f}_1 = \hat{\rho}_0\hat{f}_0$, $\hat{f}_i = \hat{\rho}_1\hat{\rho}_0\hat{f}_0$ for $2 \leq i \leq t$, we find an upper bound on $\hat{\rho}_1$ by executing the same steps that we took for finding the upper bound on $\hat{\rho}_0$. Normalization requirement gives $\hat{f}_0 = \bigl[1+\hat{\rho}_0(1+\hat{\rho}_1(t-1))\bigr]^{-1}$ and we have $\hat{f}_0 \geq 1/(1+t) \geq 1 - \lambda E[\hat{S}]$, which yields
  \[ \hat{\rho}_1 \leq \frac{1-(1 - \lambda E[\hat{S}])(1-\hat{\rho}_0)}{(t-1)(1 - \lambda E[\hat{S}])\hat{\rho}_0}. \]

  The same process can be repeated to find an upper bound on $\hat{\rho}_2$ by fixing $\hat{\rho}_0$ and $\hat{\rho}_1$ to the upper bounds above. Generalizing this, fixing $\hat{\rho}_0$, ..., $\hat{\rho}_{j-1}$ to their respective upper bounds, we find the upper bound
  \[ \hat{\rho}_j \leq \frac{1 - (1 - \lambda E[\hat{S}])\bigl(1 + \sum_{k=0}^{j-1} \prod_{l=0}^{k} \hat{\rho}_l\bigr)}{(1 - \lambda E[\hat{S}])(t-j)\prod_{k=0}^{j-1} \hat{\rho}_k} \]
  
  Finally, setting each $\hat{\rho}_j$ to its respective upper bound allows us to compute the estimates $\hat{f}_i$'s.
\end{proof}

To summarize, we obtained a naive $M/G/1$ approximation by setting $\rho = 1$ in Thm.~\ref{thm_reptoall_t_fsj_by_conj}. Next, using Cor.~\ref{cor_reptoall_t_ineq_for_ro} we obtained a tighter bound on $\rho$, which gave the better approximation. Finally, in Cor.~\ref{cor_reptoall_t_approx_w_renewaltheo} we find the best approximation that we could by computing the estimates for $\rho_j$'s recursively. Fig.~\ref{fig:plot_reptoall_t} gives a comparison of these naive, better and the best approximations with the simulated average hot data download time. Note that the approximations we derived in this section give close approximations system with availability one, for which we were able to obtain a very good approximation using the high traffic assumption in Sec.~\ref{sec:sec_reptoall_t1}.
\begin{figure}[htbp]
  \centering
  \begin{subfigure}[h]{.4\textwidth}
    \centering
    \includegraphics[width=1\textwidth, keepaspectratio=true]{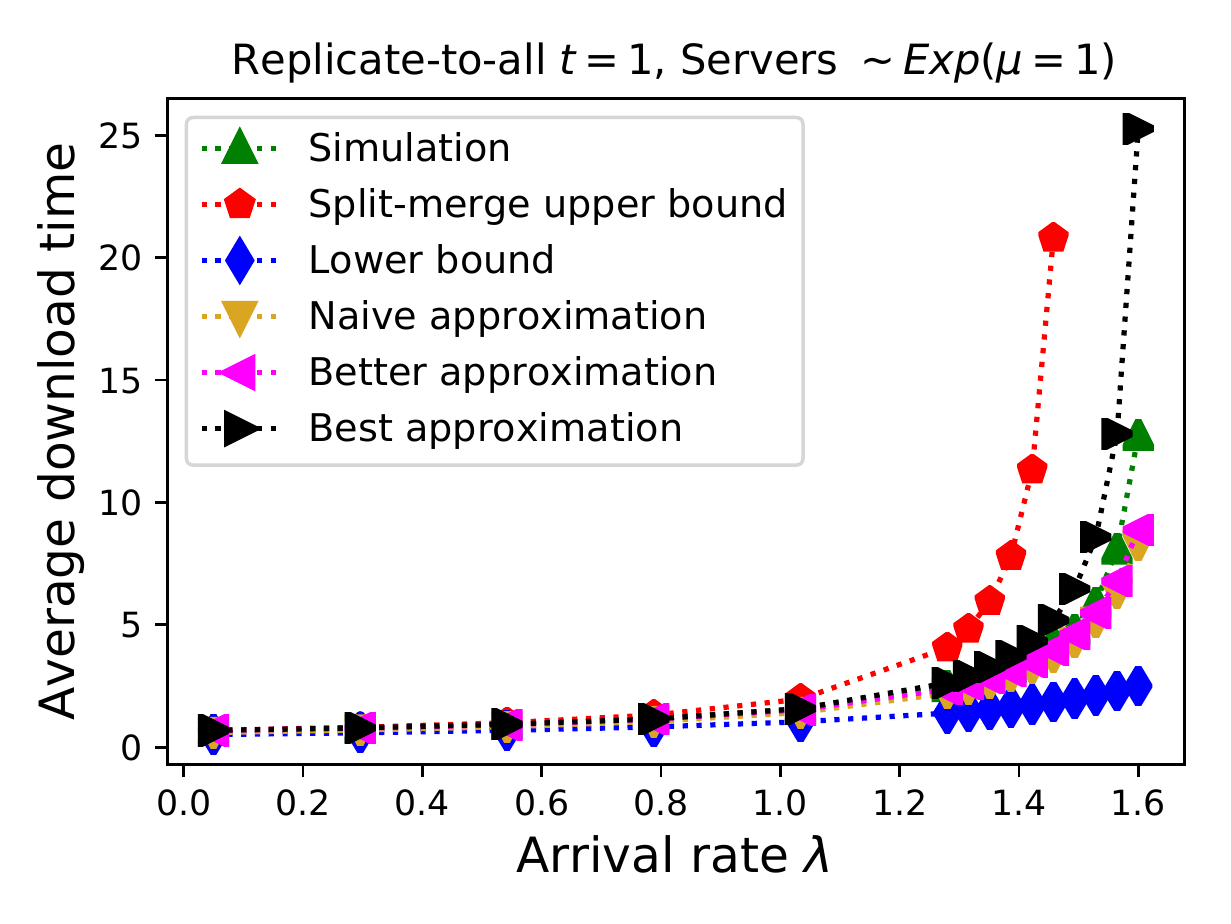}
  \end{subfigure}
  \begin{subfigure}[h]{.4\textwidth}
    \centering
    \includegraphics[width=1\textwidth, keepaspectratio=true]{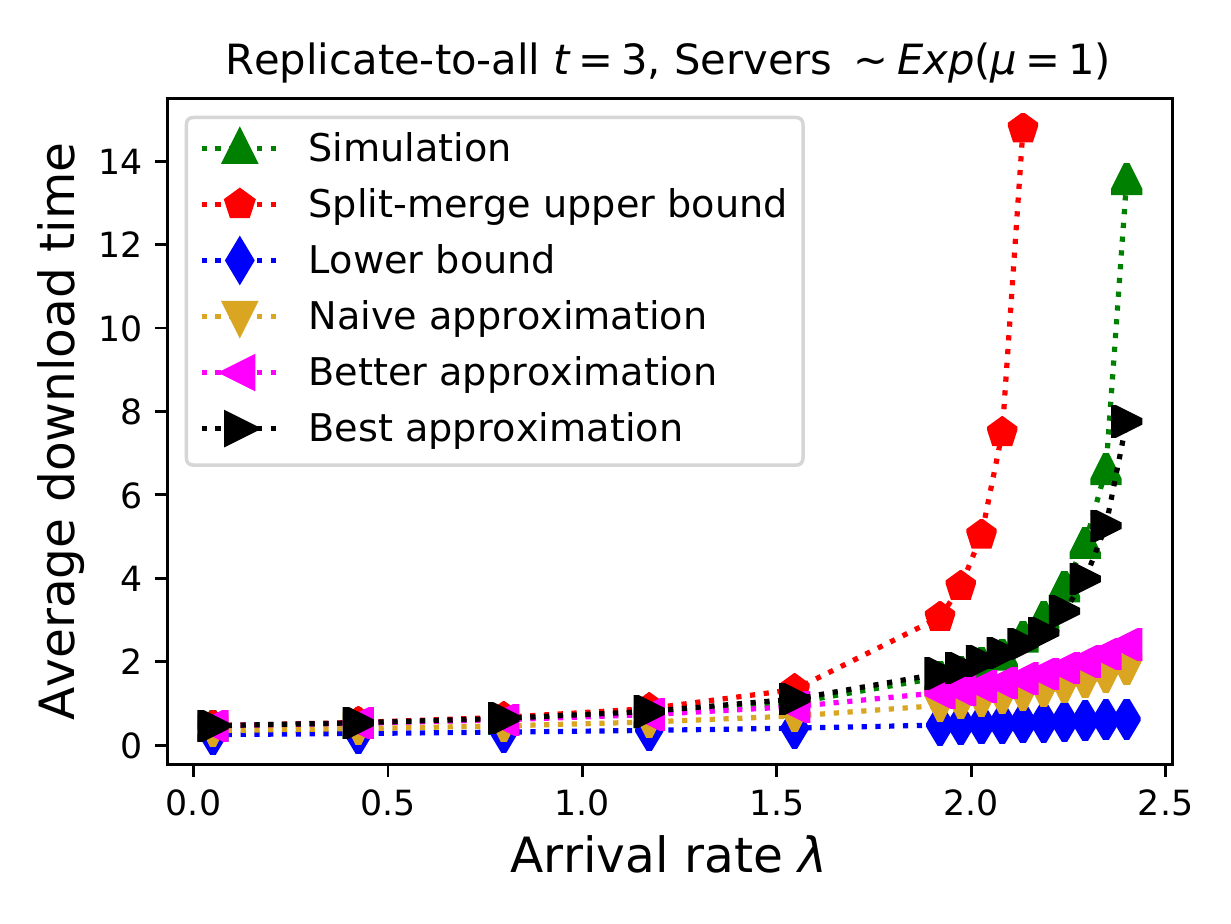}
  \end{subfigure}
  \caption{Comparison of the approximations given in Sec.~\ref{sec:sec_reptoall_t1} and the simulated average hot data download time for replicate-to-all system with availability $t=1$ (Left) and $t=3$ (Right).}
  \label{fig:plot_reptoall_t}
\end{figure}


\subsection{Non-exponential servers}
\label{subsec:subsec_reptoall_t_nonexp_servers}
Modern distributed storage systems have been shown to be susceptible to heavy tails in service time \cite{xu2013bobtail, TailAtScale:DeanB13, GoogleClusterData:ReissTG12}. Here we evaluate the $M/G/1$ approximation presented in Prop.~\ref{prop_reptoall_mg1} and Cor.~\ref{cor_reptoall_t_approx_w_renewaltheo} on server service times that are advocated in the literature to better model the server side variability in practice.

We consider two analytically tractable service time distributions. First one is $\textit{Pareto}(s, \alpha)$ with minimum value $s > 0$ and tail index $\alpha > 1$. Pareto is a canonical heavy tailed distribution (smaller $\alpha$ means heavier tail) that is observed to fit service times in real systems \cite{TailAtScale:DeanB13}. Second one is $\textit{Bernoulli}(U, L, p)$ that takes value $L$ (long i.e., $L > U$) w.p. $p < 0.5$ and value $U$ (usual) otherwise, which is also mentioned to be a proper model for the server side variability in practice \cite{TailAtScale:DeanB13}.

Two main observations about the replicate-to-all system that lead us to the $M/G/1$ approximation hold also when the server service times are non-exponential; requests depart in the order they arrive and the system experiences frequent time epochs that break the chain of dependence between the request service times. However, when the memoryless property of exponential distribution is missing, there are infinitely many request service time distributions possible. As discussed in Sec.~\ref{subsec:subsec_reptoall_mg1_approx}, the set of possible request service time distributions can be trimmed down to size $t+1$ by modifying the system and forcing a service restart for the request copies that start service earlier as soon as their associated request move to head of the line (i.e., once all the copies of a request moves in service).


Repeating the formulation in Lm.~\ref{lm_reqservtime_dists} for \textit{Pareto} server service times, moments for type-$j$ request service time distribution is found for $j = 0, \dots, t$ as
\begin{equation}
\begin{split}
  E[S_j] &= \sum\limits_{k=0}^{t-j} {{t-j}\choose{k}} 2^k (-1)^{t-j-k} \lambda\frac{\alpha(2t+1-j-k)}{\alpha(2t+1-j-k) - 1}, \\
  E[S_j^2] &= \sum\limits_{k=0}^{t-j} {{t-j}\choose{k}} 2^k (-1)^{t-j-k} \lambda^2\frac{\alpha(2t+1-j-k)}{\alpha(2t+1-j-k) - 2}.
  \end{split}
  \label{eq:eq_reqservmoments_Pareto_servers}
\end{equation}
and when server service times are \textit{Bernoulli}, we find
\begin{equation}
\begin{split}
  E[S_j] &= U + (L-U)p^{t+1}(2-p)^{t-j}, \\
  E[S_j^2] &= U^2 + (L^2-U^2)p^{t+1}(2-p)^{t-j}.
  \end{split}
  \label{eq:eq_reqservmoments_Bern_servers}
\end{equation}


Moments given above enable the $M/G/1$ approximation stated in Prop.~\ref{prop_reptoall_mg1} and Cor.~\ref{cor_reptoall_t_approx_w_renewaltheo}.
Fig.~\ref{fig:plot_reptoall_Gserv_sim_vs_approx} illustrates that the approximation yields a close estimate for the average hot data download time when service times at the servers are distributed as \textit{Pareto}~\footnote{Simulating queues with heavy tailed service time is difficult and numeric results given here are optimistic i.e., plotted values serve as a lower bound \cite{yucesan2002difficulties}.} or \textit{Bernoulli}. When the tail heaviness pronounced by the server service times is high (e.g., \textit{Pareto} distribution), system operates more like the restricted split-merge model. This is because under heavier tail it is harder for the leading servers in the recovery groups to keep leading, and it is more likely for a leading server to straggle and loose the lead. This effect is greater on the leading servers as the system availability increases since the leading servers compete with more servers to keep leading. This increases the fraction of requests that have type-$0$ (slowest) service time, hence split-merge bound gets tighter at higher availability.
\begin{figure}[htbp]
  \centering
  \begin{subfigure}[h]{.4\textwidth}
    \centering
    \includegraphics[width=1\textwidth, keepaspectratio=true]{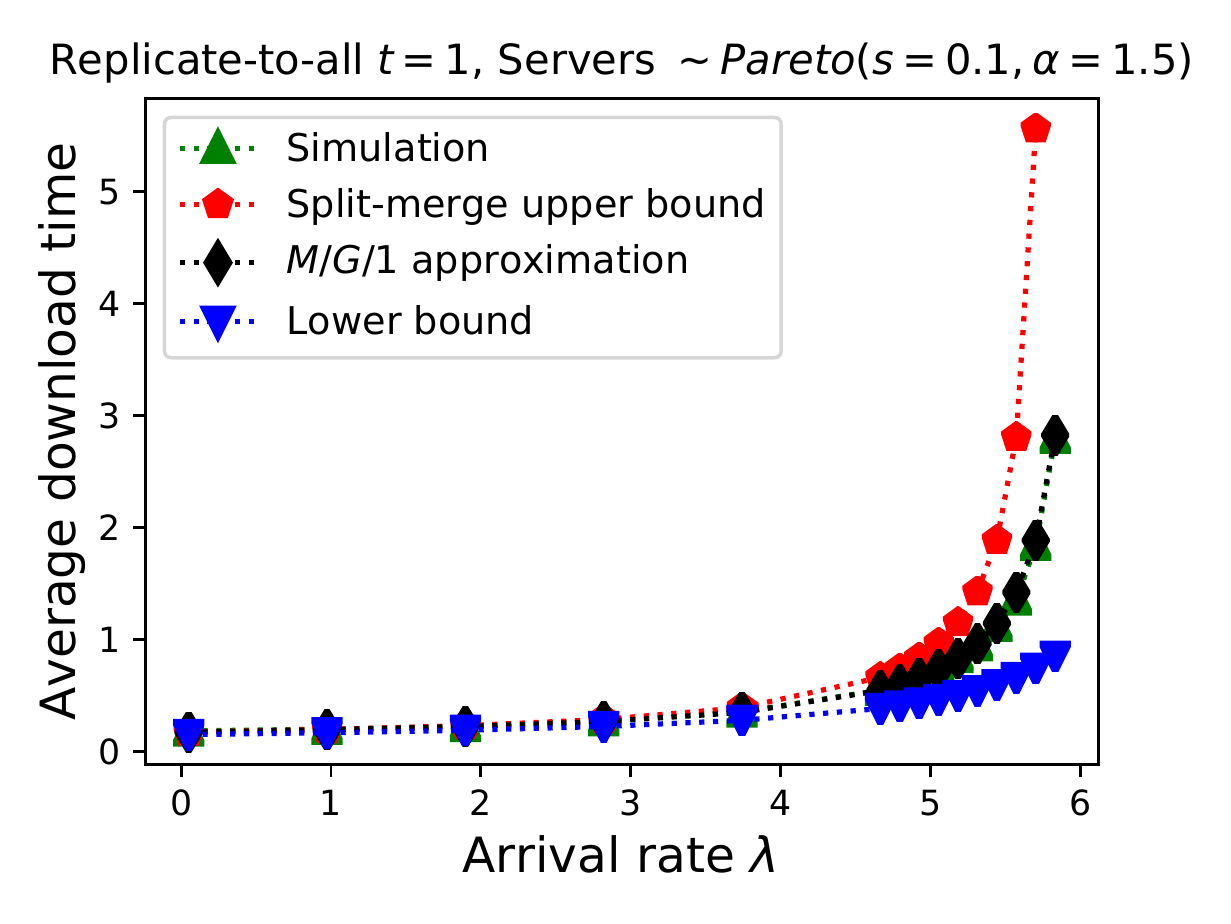}
  \end{subfigure}
  \begin{subfigure}[h]{.4\textwidth}
    \centering
    \includegraphics[width=1\textwidth, keepaspectratio=true]{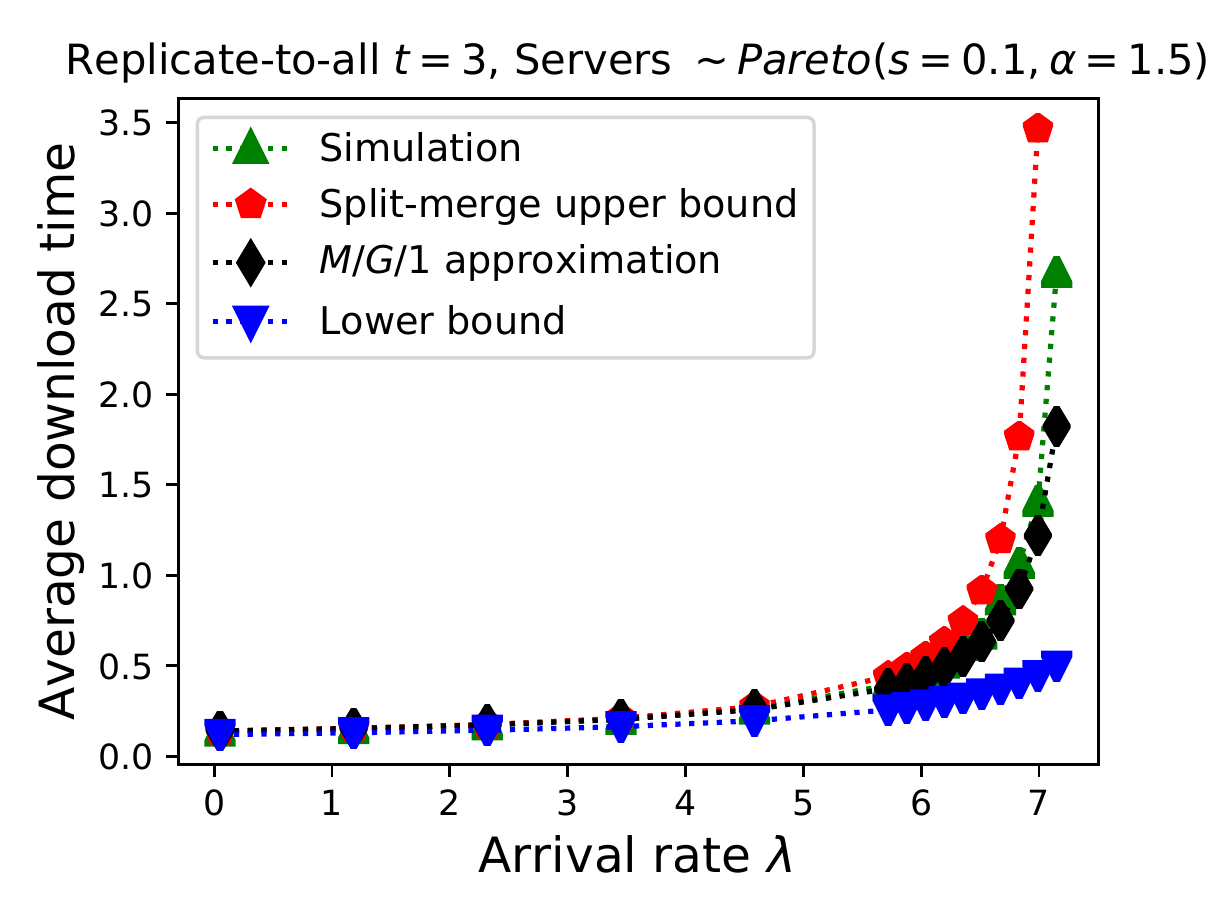}
  \end{subfigure}
  \begin{subfigure}[h]{.4\textwidth}
    \centering
    \includegraphics[width=1\textwidth, keepaspectratio=true]{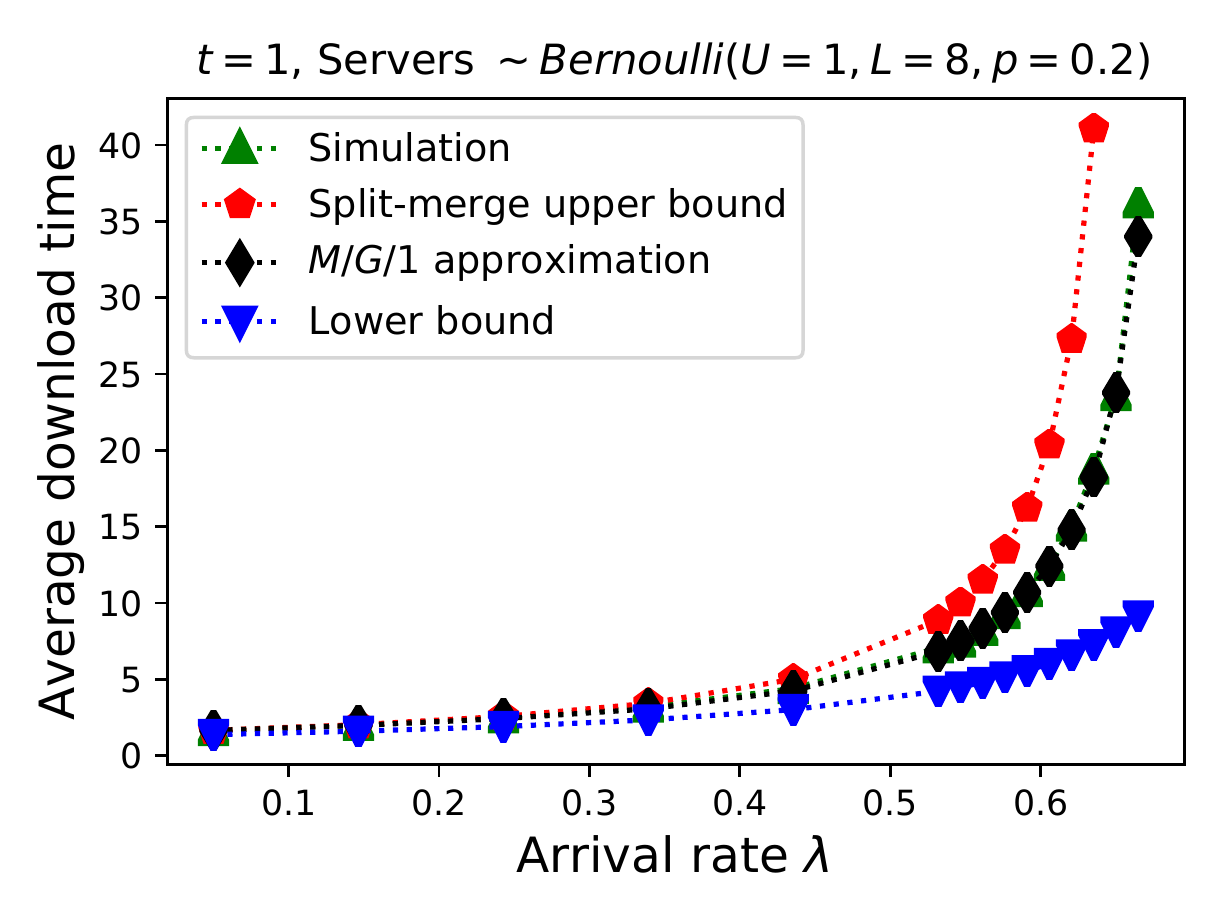}
  \end{subfigure}
  \begin{subfigure}[h]{.4\textwidth}
    \centering
    \includegraphics[width=1\textwidth, keepaspectratio=true]{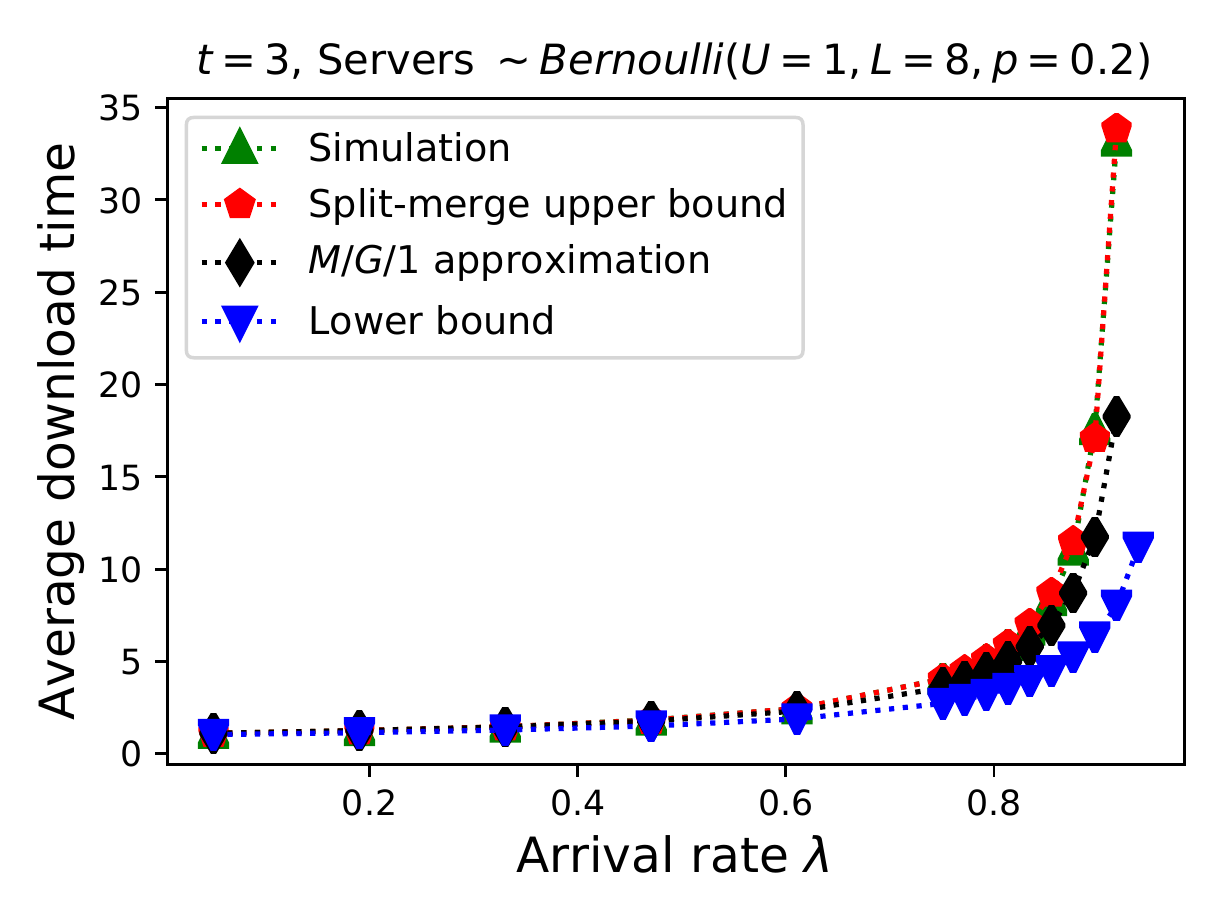}
  \end{subfigure}
  \caption{Comparison of the $M/G/1$ approximation in Cor.~\ref{cor_reptoall_t_approx_w_renewaltheo}, the split-merge upper bound in \cite{kadhe2015availability}, the lower bound given in Thm.~\ref{thm_reptoall_t_lb}, and the simulated average hot data download time for replicate-to-all system under \textit{Pareto} (Top) and \textit{Bernoulli} (Bottom) service times at the servers.}
  \label{fig:plot_reptoall_Gserv_sim_vs_approx}
\end{figure}

\subsection{Hot data download under mixed arrivals}
\label{subsec:subsec_reptoall_t_mixed_arrival}
Recall that in Sec.~\ref{sec:sec_sys_model}, we fixed the system model on the \textit{fixed arrival} scenario where the requests arriving in a busy period ask for only one data symbol, which we refer to as hot data. This was based on the fact that cold data in practice is very rarely requested \cite{CopingWithSkewedContentPopularityInMapreduce:AnanthanarayananAK11}.
We here try to relate the discussion we had so far on the replicate-to-all system to the \textit{mixed arrival} scenario.

Under fixed arrival scenario, the role of each server in terms of storing a systematic or a coded data symbol is fixed for each arriving request. Under mixed arrival scenario, the role of each server depends on the data symbol requested by an arrival. If the requested symbol is $a$ then the systematic server for the request is the one that stores $a$ while the others are recovery servers. When the requested data symbol changes, the role of every server changes.

Under mixed arrival scenario, multiple requests can be simultaneously served at their corresponding systematic servers, hence requests do not necessarily depart in the order they arrive and multiple requests can depart simultaneously. Analysis presented in the previous sections for the replicate-to-all system under fixed arrival scenario mainly depended on the observations that lead us to approximate the system as an $M/G/1$ queue, while the system under mixed arrival scenario surely violates these observations. Therefore, the analysis of the system under mixed arrival scenario turns out to be much harder. However, we can compare the performance of the system under these two arrival scenarios as follows.
\begin{theorem}
  In replicate-to-all system, hot data download time under mixed arrivals is a lower bound for the case with fixed arrivals.
\label{thm_mixed_arr_lb_fixed_arr}
\end{theorem}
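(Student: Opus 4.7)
The plan is to establish the inequality via a sample-path coupling that shares the total arrival process and the service-time samples at each physical server across the two systems. Every arrival appears at the same epoch in both systems; in the fixed-arrival system the arrival is treated as a hot request, while in the mixed-arrival system it is independently assigned a requested symbol drawn from the popularity distribution. Under this coupling, I aim to show that the departure time of every hot arrival in mixed is pointwise no larger than the departure time of the corresponding arrival in fixed, after which taking expectations gives the inequality of average download times.

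The first step is to exploit the structural contrast already highlighted in the preceding subsection. Under fixed arrivals, Observation~\ref{obv_reptoall} gives FIFO departures and at most one request in service at a time in the sense of Definition~\ref{def_req_servstart}, because every arrival contributes to the same tightly synchronized fork-join pipeline consisting of the hot systematic server and the $t$ hot recovery groups. Under mixed arrivals, distinct requested symbols use disjoint systematic servers, so multiple requests can be in service concurrently at different systematic servers, out-of-order departures occur, and a hot request can be completed through a path that would be blocked by earlier requests in fixed.

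The second step is an induction on the hot-arrival index $k$. For the $k$-th hot arrival, I would compare the residual work at the hot systematic server and at each of the hot recovery servers. By the symmetry of the binary simplex code each physical server has the same total arrival rate in the two systems, so the comparison reduces to how much of the already-enqueued work is \emph{effective} to the hot request. In the mixed system, a cold copy sitting in a shared server's queue gets cancelled the moment the cold request completes at one of its own servers, and this cancellation strictly reduces the server's residual workload; in the fixed system the corresponding copy is necessarily hot and therefore stays until the joint completion condition is met. Combining the inductive hypothesis (earlier hot arrivals depart no later in mixed) with this per-server residual-work domination yields the desired pointwise inequality on hot departure times.

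The main obstacle will be the careful bookkeeping of FIFO cancellations, because a cold copy can be cancelled from the middle of a shared server's queue and the resulting work shift must be accounted for consistently across all shared servers. A clean way to handle this is to work with the remaining hot-committed workload at each server, defined as the sum of residual service times of all copies currently belonging to hot requests, and to show by induction on event epochs (arrivals, service completions, cancellations) that this quantity in mixed is pointwise dominated by the corresponding quantity in fixed. Once this invariant is established the pointwise ordering of hot departure times follows, and the statement of the theorem is obtained by taking expectations.
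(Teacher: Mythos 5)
Your strategy is genuinely different from the paper's: the paper constructs no coupling at all, but argues that the completion of a copy at a leading server has a strictly larger payoff under mixed arrivals (the leading server may be the systematic server of a waiting request, so the completion can finish that request outright) than under fixed arrivals (where it can only upgrade the waiting request's service type from $S_{j-1}$ to $S_j$). Your proposal instead claims a pathwise dominance of hot departure times under a coupling that shares arrivals and per-server service samples, and that claim is false. Take $t=1$ with servers storing $a$, $b$, $a+b$ and hot symbol $a$; let two requests arrive at times $0$ and $0^+$; in the mixed system let the first request ask for $b$ and the second for $a$. Couple the service samples so that request $1$'s copies require $1$, $5$, $10$ units at servers $a$, $b$, $a+b$ respectively, and request $2$'s copies require $100$, $1$, $1$. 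In the fixed system request $1$ completes at time $1$ via server $a$, its other copies are cancelled, and the (hot) request $2$ completes at time $1+\min\{100,\max(1,1)\}=2$. In the mixed system request $1$ (for $b$) completes only at time $5$ via server $b$, because its recovery pair $(a,a+b)$ would join at time $10$; hence request $2$'s copies at $b$ and $a+b$ cannot start before time $5$, and the hot request completes at $\min\{101,\max(6,6)\}=6>2$. The root cause is that the join criterion $\min\{V_x,\ \max\text{-of-pairs}\}$ is not monotone in which symbol is designated systematic, so a cold request in the mixed system can depart strictly later than the same arrival does in the fixed system, and its lingering copies delay the hot copies queued behind them.

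The same example breaks your proposed invariant: dominating the \emph{hot-committed} workload at each server is not enough, because the waiting time of a hot copy is governed by the total residual work ahead of it, cold copies included, and cold copies in the mixed system can occupy shared servers longer than the corresponding (hot) copies do in the fixed system. Any correct argument must therefore be made in distribution or in expectation rather than pathwise under the identity coupling --- which is precisely why the paper retreats to the softer comparison of the marginal effect of a leading-server completion in the two systems. If you want a rigorous version you would need either a cleverly randomized coupling or an argument at the level of stationary quantities; the sample-path route as stated cannot be repaired.
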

\begin{proof}
  Under fixed arrivals, one of the servers in each recovery group may be ahead of its sibling at any time and proceed with a copy of a request that is waiting in line. Therefore, departure of the request copies at the leading servers cannot complete a request alone and early departing copies can only shorten the service duration of a request (recall that type-$j$ service time is stochastically less than type-$(j-1)$, see Lm.~\ref{lm_reqservtime_dists}). However under mixed arrivals, a leading server may be the systematic server for a request waiting in line and completion of its copy at the leading server can complete the request. This is a bigger reduction in the overall download time than completion of a mere copy that fasten the service time of a request, which concludes the proof.
\end{proof}
Fig.~\ref{fig:plot_reptoall_t_mixed_traff} illustrates Thm.~\ref{thm_mixed_arr_lb_fixed_arr} by comparing the simulated values of average hot data download time under fixed and mixed arrival scenarios.
\begin{figure}[H]
  \centering
  \begin{subfigure}[h]{.4\textwidth}
    \centering
    \includegraphics[width=1\textwidth, keepaspectratio=true]{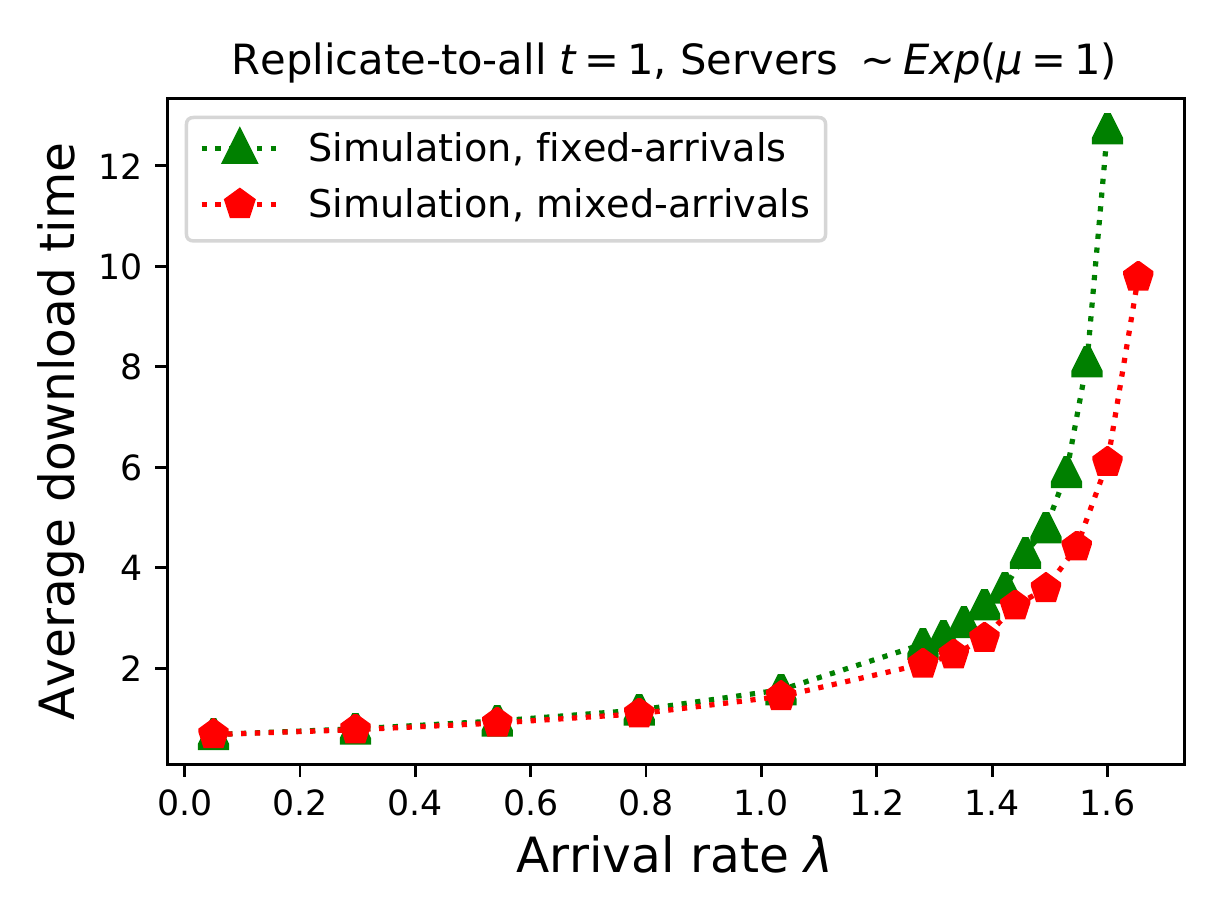}
  \end{subfigure}
  \begin{subfigure}[h]{.4\textwidth}
    \centering
    \includegraphics[width=1\textwidth, keepaspectratio=true]{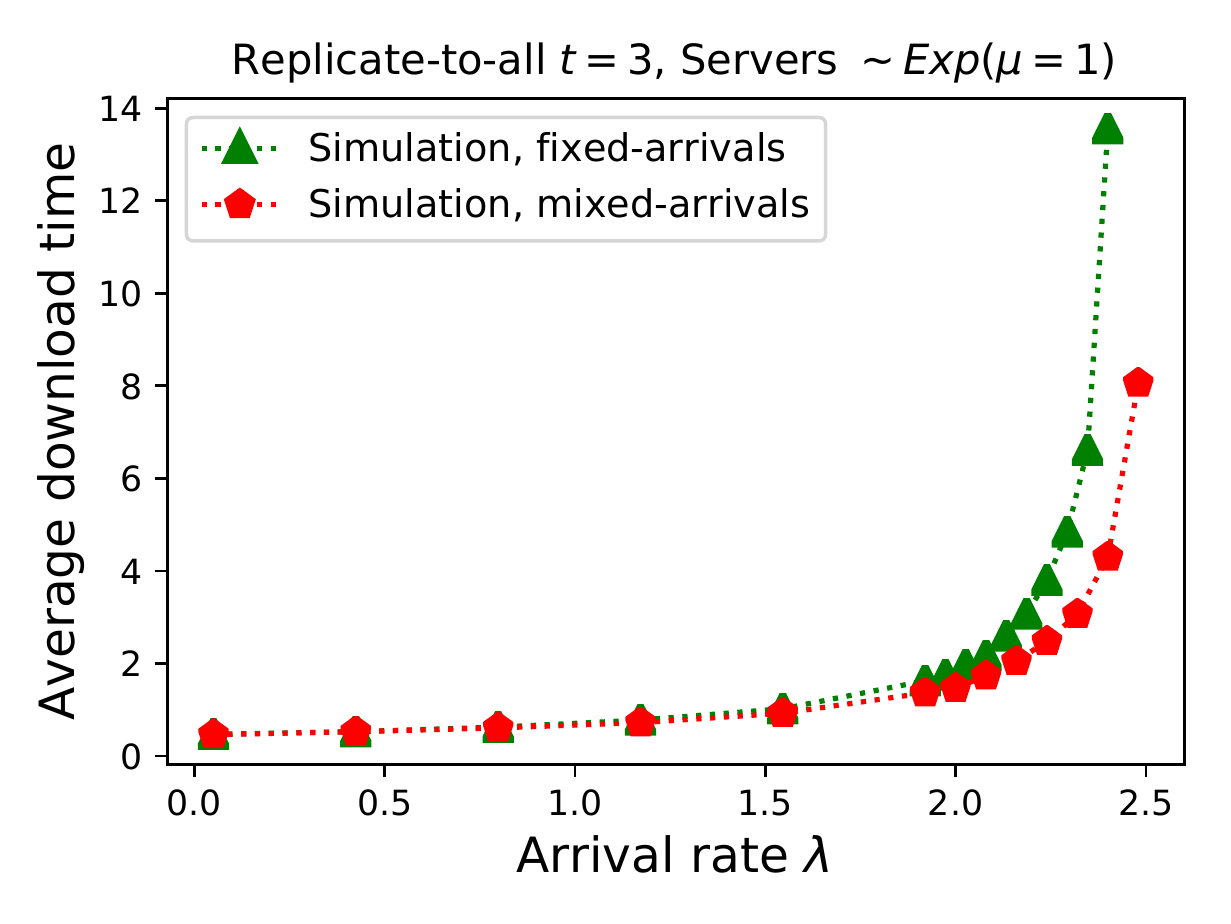}
  \end{subfigure}
  \caption{Comparison of the simulated average hot data download time in replicate-to-all system under fixed (i.e., all requests arriving within a busy period ask for hot data) and mixed arrival scenarios (i.e., arriving requests may ask for any one of the stored data symbols equally likely).}
  \label{fig:plot_reptoall_t_mixed_traff}
\end{figure}

\section{Select-one Scheduling}
\label{sec:sec_selectone}
Storage systems experience varying download load that exhibits certain traffic patterns throughout the day \cite{GFS:GhemawatGL03, WorkloadModelingBook:Feitelson15}. Phase changes in the system load can be anticipated or detected online, which allows for adapting the request scheduling strategy to enable faster data access.
In a distributed setting, data access schemes, hence the request scheduling strategies are either imposed or limited by the availability of data across the storage nodes. The desired scheduling strategies that are optimized for different load phases may require a change in the data availability layout. For instance, hot data may need to be replicated to achieve stable download performance, or replicas for the previously hot but currently cold data may need to be removed in order to open up space in the storage, e.g., see \cite{SurveyOfDynamicRepStrategies:Amjad12} for a survey of adaptive replica management strategies. In a coded storage, data availability can be changed by fetching, re-encoding and re-distributing the stored data. This online modification of data availability puts additional load on the system and introduces additional operational complexity that makes the system prone to operational errors \cite{GFS:GhemawatGL03}.

LRCs, in particular simplex codes, provide the necessary flexibility for the system to seamlessly switch between different request scheduling strategies. Batch codes have been proposed for load balancing purposes in \cite{BatchCodes:IshaiKO04} and their connection to LRCs is studied in \cite{BatchVsLRCs:Skachek18}. A simplex code is a linear batch code. In a simplex coded storage system, the simplest strategy for load balancing, namely {\it select-one}, is to assign an arriving request either to the systematic server or to one of the recovery groups chosen at random according to a scheduling distribution.
Availability of a simplex code allows replicate-to-all and select-one strategies to be used interchangeably. For low or middle traffic regime, replicate-to-all achieves faster data download, however system operates under stability over a greater range of request arrival rate with select-one strategy (see Fig.~\ref{fig:plot_reptoall_t3_vs_selectone}). This implements the capability for the system to switch between the replicate-to-all and select-one schedulers seamlessly depending on the measured system load.

If we assume that arrivals for cold data are negligible (i.e., fixed arrival scenario as defined in Sec.~\ref{subsec:subsec_reptoall_t_mixed_arrival}), select-one scheduler splits the arrival stream of hot data requests into independent Poisson streams flowing to the systematic server and to each recovery group. Then, the systematic server implements an $M/G/1$ queue while each recovery group implements an independent fork-join queue with two servers and Poisson arrivals. Download time for an arbitrary request can then be found as the weighted sum of the response times of each of the sub-systems where the weights are the scheduling probabilities across the sub-systems.

When the service times at the servers are exponentially distributed, we state below an exact expression for the average hot data download time using an exact expression given in \cite{flatto1984two} for the average response time of a two-server fork-join queue.
In general, an approximation on the average download time can be found along the same lines using the approximations available in the literature on the response time of fork-join queues with an arbitrary service time distribution. We refer the reader to \cite{SurveyOnFJQs:Thomasian15} for an excellent survey of the available approximations for fork-join queues.
\begin{theorem}
  Suppose that service times at the servers are exponentially distributed with rate $\mu$.
  Given a scheduling distribution $[p_0, p_1, \dots, p_t]$ respectively over the systematic server and $t$ recovery groups, average hot data download time in the select-one system is given as
  \begin{equation}
    E[T] = \frac{p_0}{\mu-p_0\lambda} + \sum_{i=1}^{t} p_i\frac{12\mu-p_i\lambda}{8\mu(\mu-p_i\lambda)}
  \label{eq:eq_simplex_t_exact_for_select_one}
  \end{equation}
\label{thm_selectone}
\end{theorem}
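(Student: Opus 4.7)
The plan is to exploit Poisson splitting together with known closed-form response time results for the two components that select-one creates, namely a single M/M/1 queue at the systematic server and a two-server fork-join queue at each recovery group.

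First I would argue that, because cold-data traffic is negligible (fixed arrival scenario), every arriving request targets the hot data symbol and is routed to exactly one of the $t+1$ subsystems (the systematic server or one of the $t$ recovery groups) according to the scheduling distribution $[p_0, p_1, \dots, p_t]$. Applying the standard Poisson-splitting property, the aggregate arrival stream of rate $\lambda$ decomposes into $t+1$ independent Poisson streams: the systematic server sees a stream of rate $p_0\lambda$ and recovery group $i$ sees a stream of rate $p_i\lambda$. Crucially, under select-one a request is assigned to only one subsystem and no copies are placed elsewhere, so the subsystems evolve independently, and therefore the response time seen in each subsystem can be analyzed in isolation.

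Next I would handle each subsystem. The systematic server processes its dedicated Poisson stream with an exponential service time of rate $\mu$, giving a standard M/M/1 queue whose mean response time is $1/(\mu-p_0\lambda)$, which is finite under the stability condition $p_0\lambda<\mu$. Each recovery group forks every assigned request into its two servers and requires both forked copies to complete, so it implements a two-server fork-join queue with Poisson arrivals of rate $p_i\lambda$ and i.i.d.\ exponential service times of rate $\mu$ at each server. Invoking the closed-form expression of Flatto and Hahn \cite{flatto1984two} for this exact setting yields an average response time of $(12\mu-p_i\lambda)/\bigl(8\mu(\mu-p_i\lambda)\bigr)$ under the stability condition $p_i\lambda<\mu$.

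Finally I would take the expected download time of an arbitrary arriving request by conditioning on its destination: with probability $p_0$ it experiences the M/M/1 sojourn time at the systematic server and with probability $p_i$ it experiences the fork-join sojourn time at recovery group $i$. Using PASTA to justify that arrivals see time-averages in each subsystem, the law of total expectation gives
\begin{equation*}
E[T] = p_0\cdot\frac{1}{\mu-p_0\lambda} + \sum_{i=1}^{t} p_i\cdot\frac{12\mu-p_i\lambda}{8\mu(\mu-p_i\lambda)},
\end{equation*}
which is precisely \eqref{eq:eq_simplex_t_exact_for_select_one}. The only delicate step is the independence claim underpinning the decomposition: it is immediate here because select-one dispatches each arrival to a single subsystem, but it would fail under replicate-to-all, which is exactly why the analysis in this section is so much simpler than in Sec.~\ref{sec:sec_reptoall}. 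The remaining content is a direct invocation of two standard results, so there is no serious technical obstacle beyond citing them correctly.
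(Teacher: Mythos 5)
Your proposal is correct and follows essentially the same route as the paper: Poisson splitting of the arrival stream, an $M/M/1$ analysis at the systematic server, the exact two-server fork-join mean response time for each recovery group, and a weighted sum over the scheduling distribution. The only cosmetic difference is attribution of the closed-form fork-join expression (the paper's proof cites Nelson and Tantawi's exact formula, while the surrounding text and your write-up credit Flatto and Hahn's analysis), which does not affect the argument.
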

\begin{proof}
  Each arriving request is independently assigned either to the systematic server with probability $p_0$ or to recovery group-$i$ with probability $p_i$ for $i = 1, \dots, t$. Given a $\textit{Poisson}(\lambda)$ hot data request arrival stream, arrivals that flow to the systematic server form an independent $\textit{Poisson}(p_0\lambda)$ stream while those that flow to repair group-$i$ form an independent $\textit{Poisson}(p_i\lambda)$ stream. Then the systematic server is an $M/M/1$ queue and each repair group is a fork-join queue with two servers and Poisson arrivals, for which an exact expression is given in \cite{nelson1988approximate} for its average response time.
\end{proof}

\begin{figure}[H]
  \centering
  \begin{subfigure}[h]{.4\textwidth}
    \centering
    \includegraphics[width=1\textwidth, keepaspectratio=true]{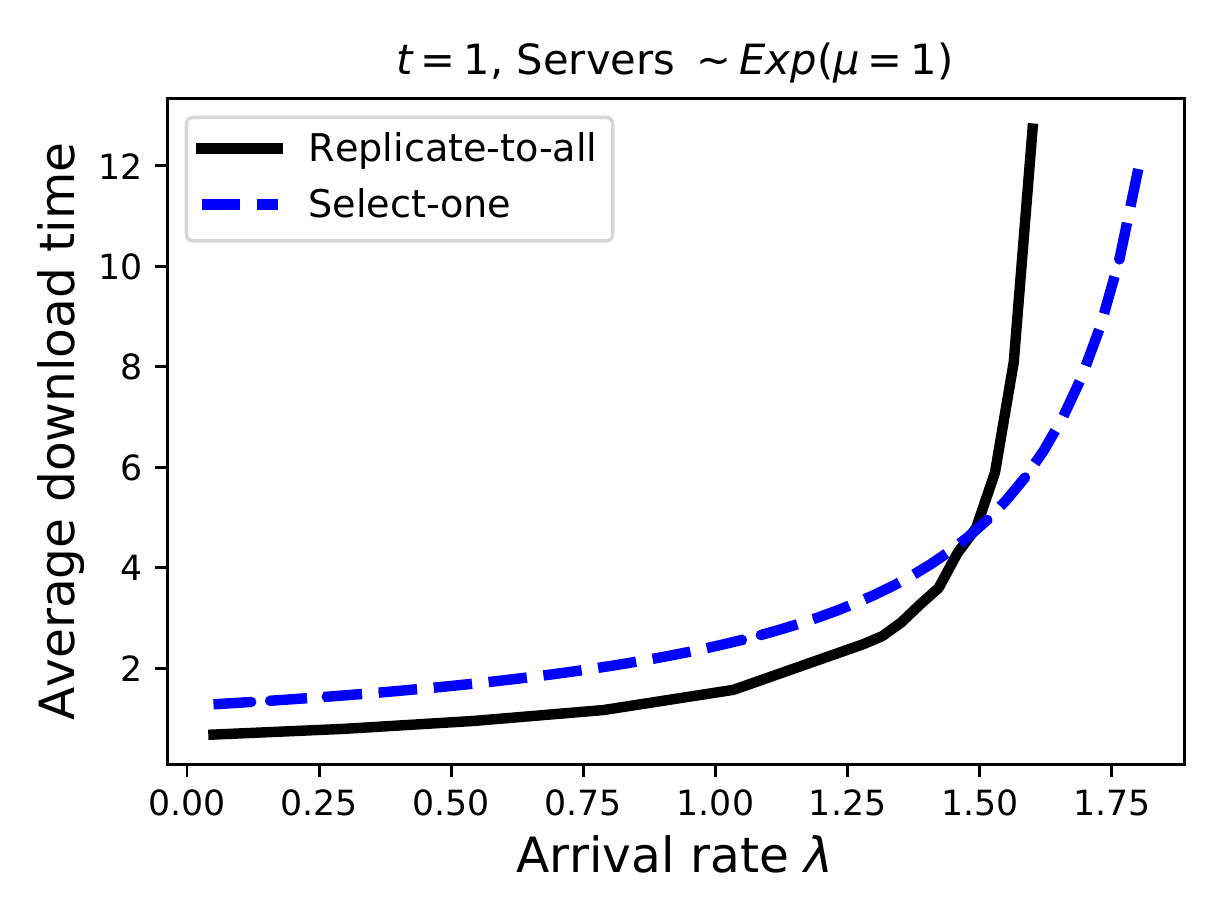}
  \end{subfigure}
  \begin{subfigure}[h]{.4\textwidth}
    \centering
    \includegraphics[width=1\textwidth, keepaspectratio=true]{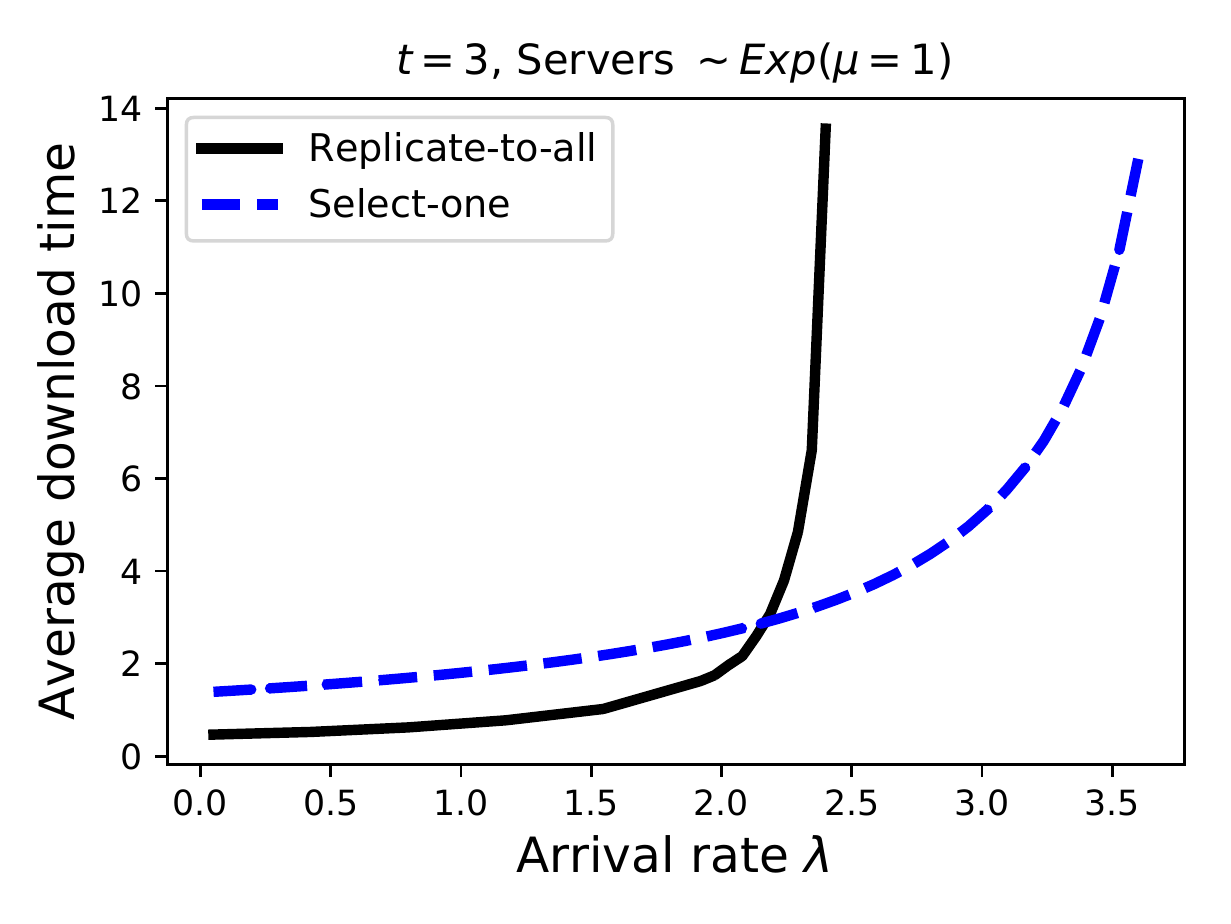}
  \end{subfigure}
  \caption{Comparison of the average hot data download time in replicate-to-all and select-one systems.}
  \label{fig:plot_reptoall_t3_vs_selectone}
\end{figure}

\section{Fairness-First Scheduling}
\label{sec:sec_fairnessfirst}
In a simplex coded storage, we assume that hot and cold data are coded and stored together. Replicate-to-all scheduling that we studied in Sec.~\ref{sec:sec_reptoall} and \ref{sec:sec_reptoall_t1} aims to exploit download with redundancy for all the arriving hot data requests. Our study of replicate-to-all system is centered around the assumption that the traffic for cold data requests is negligible as observed in practice e.g., only 10\% of the stored content is frequently and simultaneously accessed~\cite{CopingWithSkewedContentPopularityInMapreduce:AnanthanarayananAK11}. In this section, we consider the case when the request traffic for cold data is non-negligible.

In replicate-to-all system, hot data requests are replicated to all servers at arrival, including the servers that store cold data. For instance in a $[a, b, a+b]$-system, requests for hot data $a$ are replicated both to its systematic server-$a$ and the server-$b$ that hosts the cold data. Then, the arriving requests for $b$ may end up waiting for the hot data request copies at server-$b$. This increases the cold data download time and is unfair to the cold data requests. In \cite{RedForFairness:Gardner17}, authors propose a fair scheduler for systems with replicated redundancy, where the original requests are designated as primary and assigned with high service priority at the servers while the redundant request copies are designated as secondary and assigned lower service priority.

We here consider the \textit{fairness-first} scheduler for implementing fairness while opportunistically exploiting the redundancy for faster hot data downloads in a simplex coded storage.
In fairness-first system, each arriving request, asking for hot or cold data, is primarily assigned to its systematic server, e.g., requests for $a$ go to server-$a$, requests for $b$ go to server-$b$ in a $[a, b, a+b]$-system.
Redundant copies are launched to mitigate server side variability only for hot data requests, but in a restricted manner as follows. As soon as a hot data request moves into service at its systematic server, its recovery groups are checked to see if they are idle. A redundant copy of it is launched only at the idle recovery groups.
For instance in a $[a, b, a+b]$-system, when a request for hot data $a$ reaches the head of the line at server-$a$, a copy of it will be launched at server-$b$ and server-$a+b$ if both are idle, otherwise, it will be served only at server-$a$.

Fairness-first system aims to achieve perfect fairness towards cold data download. When a cold data request finds its systematic server busy with a redundant hot data request copy, the redundant copy in service will be immediately removed from the system. Thus, hot data requests can exploit redundancy for faster download only opportunistically, with zero effect on cold data download.

We assume the request arrival stream for each stored data symbol is an independent Poisson process. The rate of request arrivals for each cold data symbol is set to be the same and denoted by $\lambda_c$, while the rate of request arrivals for hot data is denoted by $\lambda$. In fairness-first system, redundant copies of hot data requests have zero effect on the cold data download, hence the download traffic for each cold data symbol implements an independent $M/G/1$. Cold data download time can then be analyzed using the standard results on $M/G/1$ queues.

Download traffic for hot data implements a complex queueing system with redundancy. While a hot data request moves into service at its systematic server, it gets replicated to all of its idle recovery groups. A request copy assigned to a recovery group needs to wait for downloading from both servers. In addition, if there exists a cold data server in a recovery group, an assigned hot data request copy will get cancelled as soon as the cold data server receives a request. These interruptions modify the service time distribution of hot data requests on the fly.
The main difficulty in analyzing the hot data download time is that service times of subsequent hot data requests are not independent. The number of recovery groups that are found idle by a hot data request or interruption of its redundant copies by cold data arrivals tells us some information about the service time distribution of the subsequent hot data request.

An upper and lower bound on the hot data download time is easy to find. As a lower bound we can use the split-merge model that is used to upper bound download time in replicate-to-all system, and a lower bound can be found along the same lines of Thm.~\ref{thm_reptoall_t_lb}.
\begin{theorem}
  Hot data download time in fairness-first system is bounded above by an $M/G/1$ queue with service time distribution $S_{t-log_2(t+1)}$ and is bounded below by an $M/G/1$ queue with service time distribution $S_t$, where
  \begin{equation}
    Pr\{S_i > s\} = Pr\{V > s\}\bigl(1 - Pr\{V \leq s\}^2\bigr)^i.
  \label{eq:eq_fairnessfirst_typei_servtime}
  \end{equation}
  $V$ denotes the service time at the servers.
\label{thm_fairnessfirst_ul_bound}
\end{theorem}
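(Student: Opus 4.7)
The plan is to sandwich the hot data download time between two $M/G/1$ queues sharing the $\textit{Poisson}(\lambda)$ hot data arrival stream, using i.i.d.\ service times $S_t$ for the lower bound and $S_{t-\log_2(t+1)}$ for the upper bound. Each bound will follow from a sample-path coupling once I show that the service time of any individual hot data request, at the instant it reaches the head of its systematic queue, stochastically dominates $S_t$ and is stochastically dominated by $S_{t-\log_2(t+1)}$, regardless of system state.

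For the lower bound, the fastest possible service of a hot data request occurs when all $t$ recovery groups are idle at its service start (so a redundant copy launches in each) and no subsequent cold data arrival interrupts any redundant copy before completion. In this ideal case the service time is the minimum of one systematic draw of $V$ and $t$ independent two-server maxima, whose tail is exactly $Pr\{V>s\}(1-Pr\{V\leq s\}^2)^t$, i.e., $S_t$. Since losing a recovery group or suffering an interruption can only delay completion, the actual service time stochastically dominates $S_t$, and coupling to an $M/G/1$ with service $S_t$ yields the lower bound.

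For the upper bound, I exploit the combinatorial structure of the binary $[2^k-1,\,k]$ simplex code. Fixing the hot data symbol, its $2t=n-1$ non-systematic servers split into $t$ disjoint recovery pairs. Exactly $k-1=\log_2(t+1)$ of those servers are the systematic locations for the $k-1$ cold data symbols, while the remaining $2t-(k-1)$ are parity servers that, under fairness-first, never host a cold data request (since cold data is not replicated). Because recovery pairs are disjoint, each cold data systematic server sits in exactly one recovery pair of the hot symbol, so at most $\log_2(t+1)$ recovery pairs can ever be blocked by cold data traffic. The other $t-\log_2(t+1)$ pairs consist entirely of parity servers and are therefore guaranteed idle at every instant a new hot data request reaches head of line (the previous hot request's redundant copies were cleared upon its departure) and immune to interruption throughout its service. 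Hence every hot data request has uninterrupted redundant copies in at least $t-\log_2(t+1)$ recovery groups, its service time is stochastically dominated by $S_{t-\log_2(t+1)}$, and the corresponding $M/G/1$ coupling yields the upper bound.

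The main obstacle is the coupling argument in the presence of correlated service times: successive hot data services in the actual system are not independent, because the residual work of the cold data queues and the cancellation events carry information across requests. A standard monotone coupling handles this by drawing, at each arrival, the surrogate service time jointly with the actual one so that the one-sided inequality holds almost surely; a sample-path comparison of the workload processes of the actual queue and each $M/G/1$ surrogate then gives the claimed ordering of the download times. A secondary point to verify carefully is the idleness claim for the parity-only recovery groups at each service start, which rests on the FIFO discipline at the systematic queue combined with the cancellation rule inherited from Section~\ref{sec:sec_reptoall} that a departing hot data request removes all of its still-running redundant copies.
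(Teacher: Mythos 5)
Your proposal is correct and follows essentially the same route as the paper: the lower bound comes from the best case in which all $t$ recovery groups carry uninterrupted copies, and the upper bound from the worst case in which only the $t-\log_2(t+1)$ recovery groups without a cold data server are usable, each case reducing the hot-data subsystem to an $M/G/1$ queue with the stated service distribution. You supply more detail than the paper does — the explicit simplex-code count of parity-only recovery pairs and the monotone sample-path coupling that converts per-request stochastic domination of service times into domination of download times — but these are elaborations of the same argument, not a different one.
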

\begin{proof}
  The worst case scenario gives the upper bound, in which no hot data request can be replicated to any one of the recovery groups that include a cold data server, i.e., cold data servers are assumed to be busy all the time; $\lambda_c = 1/E[V]$. Remaining number of recovery groups with no cold data server is $t-log_2(t+1)$ (see Sec.~\ref{sec:sec_sys_model} for simplex code properties).
  The best case scenario gives the lower bound, in which each hot data request is fully replicated to all the $t$ recovery groups, i.e., cold data servers are assumed to be idle all the time; $\lambda_c = 0$.
  In either case, hot data download implements an $M/G/1$ queue with a service time distribution given in \eqref{eq:eq_fairnessfirst_typei_servtime}.
\end{proof}

Consider a low traffic regime under which the hot data requests do not wait in queue and go straight into service, i.e., each hot data request completes service before the next one arrives. We refer to this set of conditions at which the system operates as the \textit{low traffic assumption}. Under this assumption, each arriving request independently sees time averages (by PASTA), hence the service times of the subsequent hot data requests become i.i.d. Therefore, hot data download implements an $M/G/1$ queue under the low traffic assumption as stated in the following.
\begin{proposition}
  Let us denote the service times at the servers with $V$ and let the arrivals for each cold data follow a Poisson process of rate $\lambda_c$.
  Under the low traffic assumption, hot data download in fairness-first system implements an $M/G/1$ queue with a service time distribution given as
  \begin{equation}
  \begin{split}
    Pr\{S > s\} = C(s) \bigl[\alpha(s)(1-\rho) + \rho \bigr]^m,
  \end{split}
  \end{equation}
  where
  \begin{equation*}
  \begin{split}
    & m = log_2(t+1), \quad \rho = \lambda_c E[V], \\
    & \alpha(s) = 1 - Pr\{V \leq s\} \int_0^s e^{-\lambda_c v} Pr\{V = v\}dv, \\
    & C(s) = Pr\{V > s\}\bigl(1 - Pr\{V \leq s\}^2\bigr)^{t-m}.
  \end{split}
  \end{equation*}
  
\label{prop_fairnessfirst_lowtraff}
\end{proposition}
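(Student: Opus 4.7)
The plan is to exploit the low-traffic assumption to decouple successive hot requests and then express the service time of an arbitrary hot request as a minimum of independent random variables, one per serving location. Because under low traffic each hot request completes before the next one arrives, the service outcome of a given hot request depends only on independent server service times, independent cold-arrival streams, and the stationary states of the cold-data queues at its arrival epoch. Hence successive hot service times are i.i.d., the hot input is Poisson($\lambda$) by hypothesis, and the hot subsystem is an $M/G/1$ as soon as one identifies the common service-time CCDF.

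First I would write
\[ S \;=\; \min\{V_{\text{sys}},\,R_1,\,\ldots,\,R_t\}, \]
where $V_{\text{sys}}$ is the service time at the systematic server and $R_i\in[0,\infty]$ is the time at which recovery group $i$ would finish the hot download (set to $\infty$ if the group fails to finish it). From the simplex-code structure described in Section~\ref{sec:sec_sys_model}, exactly $m=\log_2(t+1)$ of the $t$ recovery groups of the hot symbol contain a cold systematic server paired with a parity server, while the remaining $t-m$ groups contain two parity servers only; the $2t+1$ involved servers are all distinct and their service times are independent, so (subject to the independence discussed below) $Pr\{S>s\}=Pr\{V>s\}\prod_{i=1}^t Pr\{R_i>s\}$. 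For a parity-only group both servers are idle at the hot arrival (parity servers carry only hot traffic, which never queues under low traffic), so $R_i=\max(V,V')$ with independent copies and $Pr\{R_i>s\}=1-Pr\{V\leq s\}^2$; this gives the factor $C(s)$ together with $Pr\{V>s\}$. For a cold-containing group I would condition on the state of its cold server at the hot arrival epoch. Because the preemption rule ensures that hot replicas never delay cold service, each cold server's workload evolves as an independent $M/G/1$ of load $\rho=\lambda_c E[V]$; applying PASTA to the hot Poisson stream on this independent background gives busy probability $\rho$ (contributing $R_i=\infty$) and idle probability $1-\rho$. Given idleness, the hot copy starts at the cold server and survives only if no cold arrival occurs during its service $V_c$; by memorylessness of the next cold inter-arrival the joint density of survival and $V_c=v$ is $e^{-\lambda_c v}f_V(v)$, while the parity sibling must independently finish by $s$. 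Integrating and multiplying,
\[ Pr\{R_i\le s\mid\text{idle}\}\;=\;Pr\{V\le s\}\int_0^s e^{-\lambda_c v}f_V(v)\,dv\;=\;1-\alpha(s), \]
so $Pr\{R_i>s\}=\rho+(1-\rho)\alpha(s)$, and taking the product over the $m$ cold-containing and $t-m$ parity-only groups together with $Pr\{V>s\}$ yields the stated $Pr\{S>s\}=C(s)\bigl[\alpha(s)(1-\rho)+\rho\bigr]^m$.

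The main delicate step is the PASTA/independence argument: one must verify that, at an arbitrary hot arrival epoch, the $m$ cold-server busy states are jointly distributed as independent Bernoulli($\rho$), and that this joint status is independent of all subsequent service times and cold inter-arrivals appearing in the computation. This rests on observing that the preemption rule makes each cold server's service process identical in law to a stand-alone $M/G/1$ driven only by its own cold Poisson stream, that the $m$ cold-server processes are mutually independent because the cold arrival streams for distinct cold symbols are independent Poisson, and that the hot stream is itself an independent Poisson so PASTA applies to each cold queue simultaneously. Once this independence is established, everything else—survival of a cold-server copy against a fresh exponential cold inter-arrival, the join synchronization within each group, and the final product form—reduces to elementary conditioning on mutually independent service times and Poisson inter-arrivals.
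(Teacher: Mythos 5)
Your proposal is correct and follows essentially the same route as the paper's proof: PASTA gives each cold-containing recovery group an independent idle probability $1-\rho$ with $\rho=\lambda_c E[V]$, the preempted copy survives a fresh $\mathrm{Exp}(\lambda_c)$ cold inter-arrival yielding the $\int_0^s e^{-\lambda_c v}\,Pr\{V=v\}\,dv$ factor, and the final tail is the product over the systematic server, the $t-m$ parity-only groups, and the $m$ cold-containing groups. Your direct per-group product (with $R_i=\infty$ on the busy event) is just the collapsed form of the paper's conditioning on $R\sim\textit{Binomial}(m,1-\rho)$ followed by the binomial expansion, so the two arguments coincide.
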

\begin{proof}
  Cold data downloads at each cold data server implement an independent $M/G/1$ queue. Under stability, fraction of the time that a cold data server is busy with a cold data request is then given as $\rho = \lambda_c E[V]$.
  The number of recovery groups with a cold data server is given as $m = log_2(t+1)$.
  
  Low traffic assumption basically means that request arrivals for hot data do not wait in queue in the hot data server. Since the inter-arrival times are assumed to be Markovian, we can use PASTA and state that each recovery group with a cold data server is independently found idle with probability $1-\rho$ by each arriving hot data request. The number of such idle recovery groups seen by a hot data arrival is then distributed as $R \sim \textit{Binomial}(m, 1-\rho)$.
  
  Given $R = r$, a request will be simultaneously replicated to the systematic hot data server, and to the $t-m$ recovery groups without a cold data server, and to the $r$ recovery groups with a cold data server. Let us denote the service time distribution of such a request as $S_r$.
  Service time of the copy at the systematic server is distributed as $V$. At the recovery groups without a cold data server, we simply wait for downloading from two servers, hence the service time is distributed as $V_{2:2}$. Service time $V^\prime$ at the recovery groups with a cold data server requires a bit more attention since the hot data request copies get removed from service as soon as the cold data server receives a request. Let $X \sim Exp(\lambda_c)$ and denote the service time at the recovery server that stores a coded data (e.g., $a+b$) as $V$ and at the recovery server that stores a cold data (e.g., $b$) as $V_c$. Then, we can write the distribution of $V^\prime$ as
  \begin{equation*}
  \begin{split}
    Pr\{ V^\prime \leq s\} &= Pr\{V^\prime \leq s, V_c \leq X\} + Pr\{V^\prime \leq s, V_c > X\} \\
    & \stackrel{(a)}{=} Pr\{\max\set{V, V_c} \leq s, V_c \leq X\} \\
    &= Pr\{V \leq s\}Pr\{V_c \leq s, V_c \leq X\} \\
    &= Pr\{V \leq s\} \int_0^s e^{-\lambda_c v} Pr\{V_c = v\} dv.
  \end{split}
  \end{equation*}
  where $(a)$ is due to the cancellation of the copy at the cold data server due to an arrival of a cold data request, in other words
  \[ Pr\{V^\prime \leq s, V_c > X\} = 0. \]
  
  Now we are ready to write the distribution of $S_r$ as
  \begin{equation*}
  \begin{split}
    Pr\{ & S_r > s\} = Pr\{\min\{V_0, \{V_{2:2}\}_{i=1}^{t-m}, \{V^{\prime}\}_{i=1}^r\} > v\} \\
    &= Pr\{V > s\}\bigl(1 - Pr\{V \leq s\}^2\bigr)^{t-m} Pr\{V^{\prime} > s\}^r \\
    &= C(s) Pr\{V^{\prime} > s\}^r.
  \end{split}
  \end{equation*}
  
  Since each arriving hot data request independently samples from $R$, hot data downloads implement an $M/G/1$ queue with service time $S$, which is distributed as
  \begin{equation*}
  \begin{split}
    & Pr\{S > s\} = E_R[Pr\{S_R > s\}] \\
    &= \sum_{r=0}^m \binom{m}{r} (1-\rho)^r \rho^{m-r} Pr\{S_r > s\} \\
    &= C(s) \sum_{r=0}^m \binom{m}{r} \bigl[Pr\{V^{\prime} > s\}(1-\rho)\bigr]^r \rho^{m-r} \\
    &= C(s) \bigl[\alpha(s)(1-\rho) + \rho \bigr]^m.
  \end{split}
  \end{equation*}
\end{proof}

The approximation given in Prop.~\ref{prop_fairnessfirst_lowtraff} is exact when the hot data arrival rate is low. Fig.~\ref{fig:plot_simplex_ff_sim_vs_model} gives a comparison of the simulated average hot data download time and the approximation. Approximation seems to be fairly accurate as well when the hot data requests arrive at relatively higher rates. Accuracy of the approximation diminishes with increasing arrival rate. This is because the length of the busy periods in the hot data server increases which makes the independence assumption on the request service times invalid, hence the adopted $M/G/1$ model starts to deviate from the actual behavior.
\begin{figure}[htbp]
  \centering
  \begin{subfigure}[h]{.4\textwidth}
    \centering
    \includegraphics[width=1\textwidth, keepaspectratio=true]{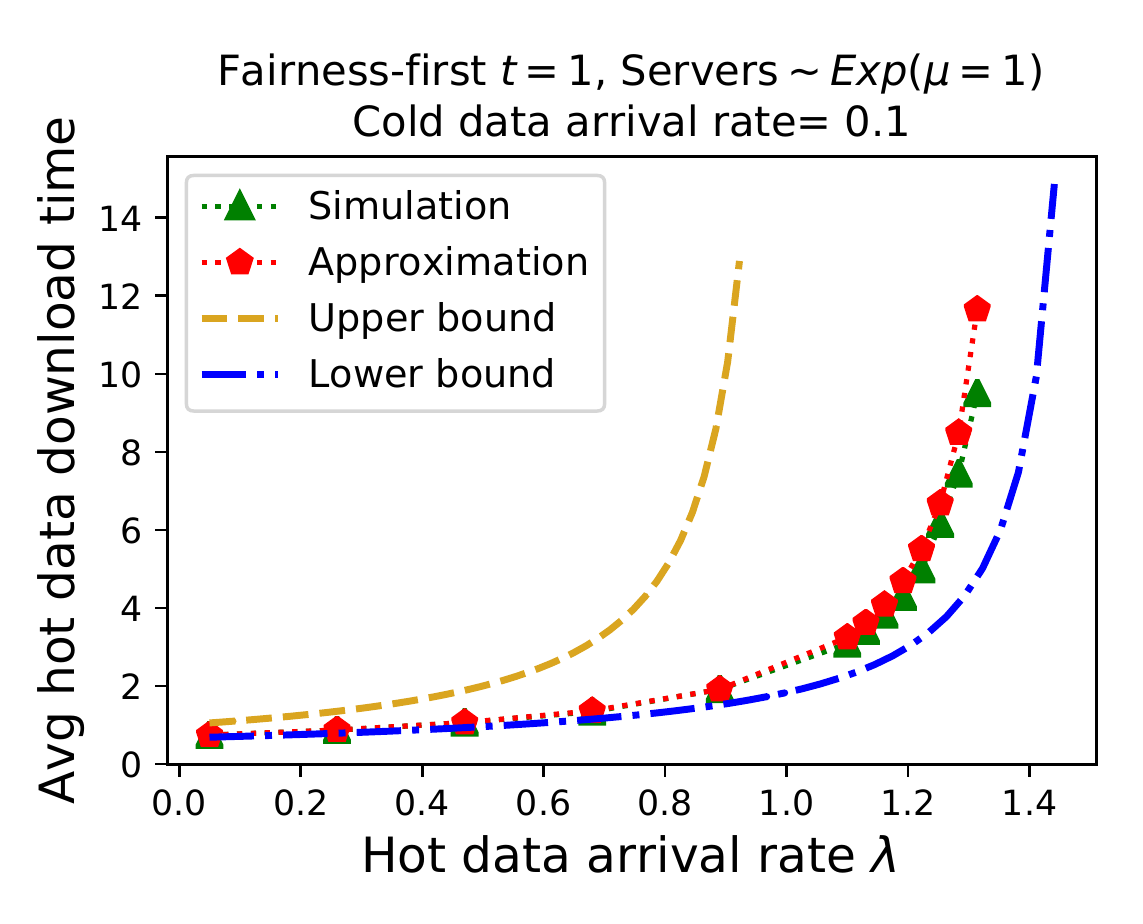}
  \end{subfigure}
  \begin{subfigure}[h]{.4\textwidth}
    \centering
    \includegraphics[width=1\textwidth, keepaspectratio=true]{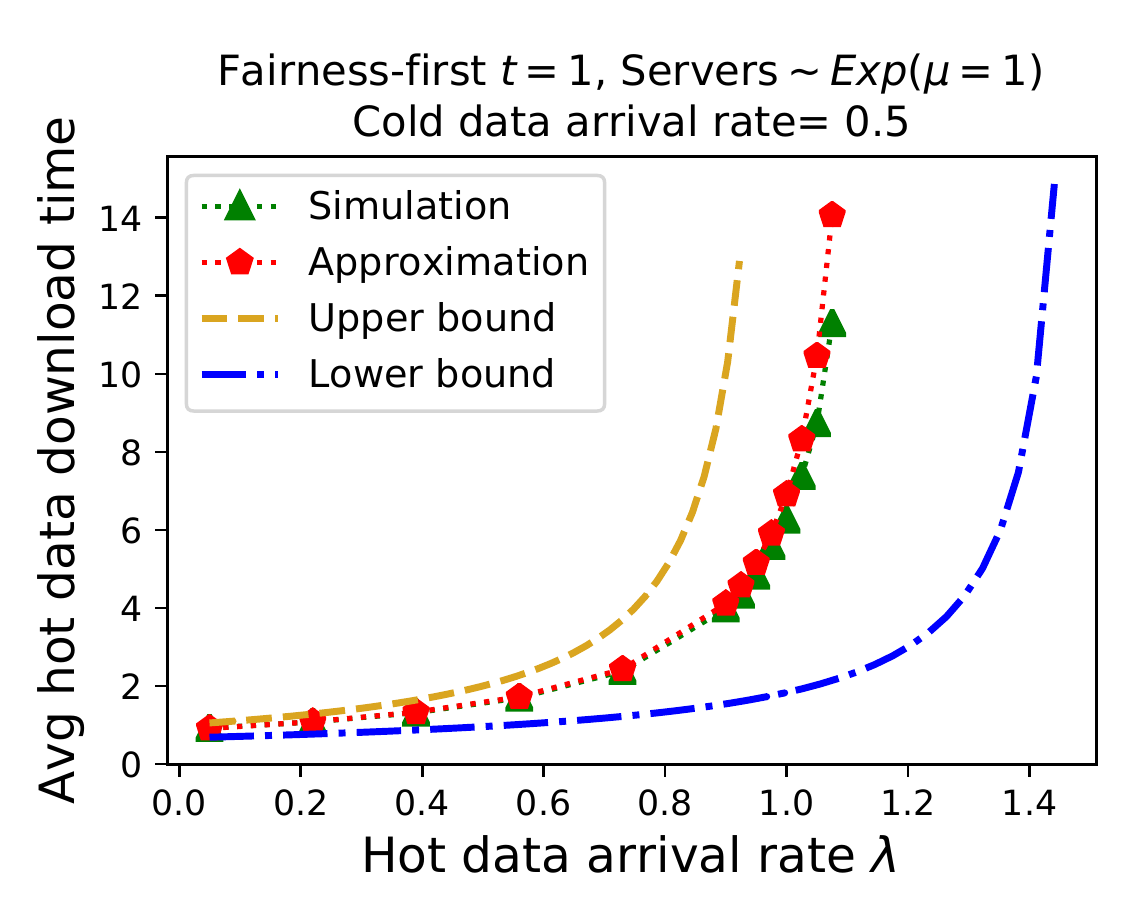}
  \end{subfigure}
  \begin{subfigure}[h]{.4\textwidth}
    \centering
    \includegraphics[width=1\textwidth, keepaspectratio=true]{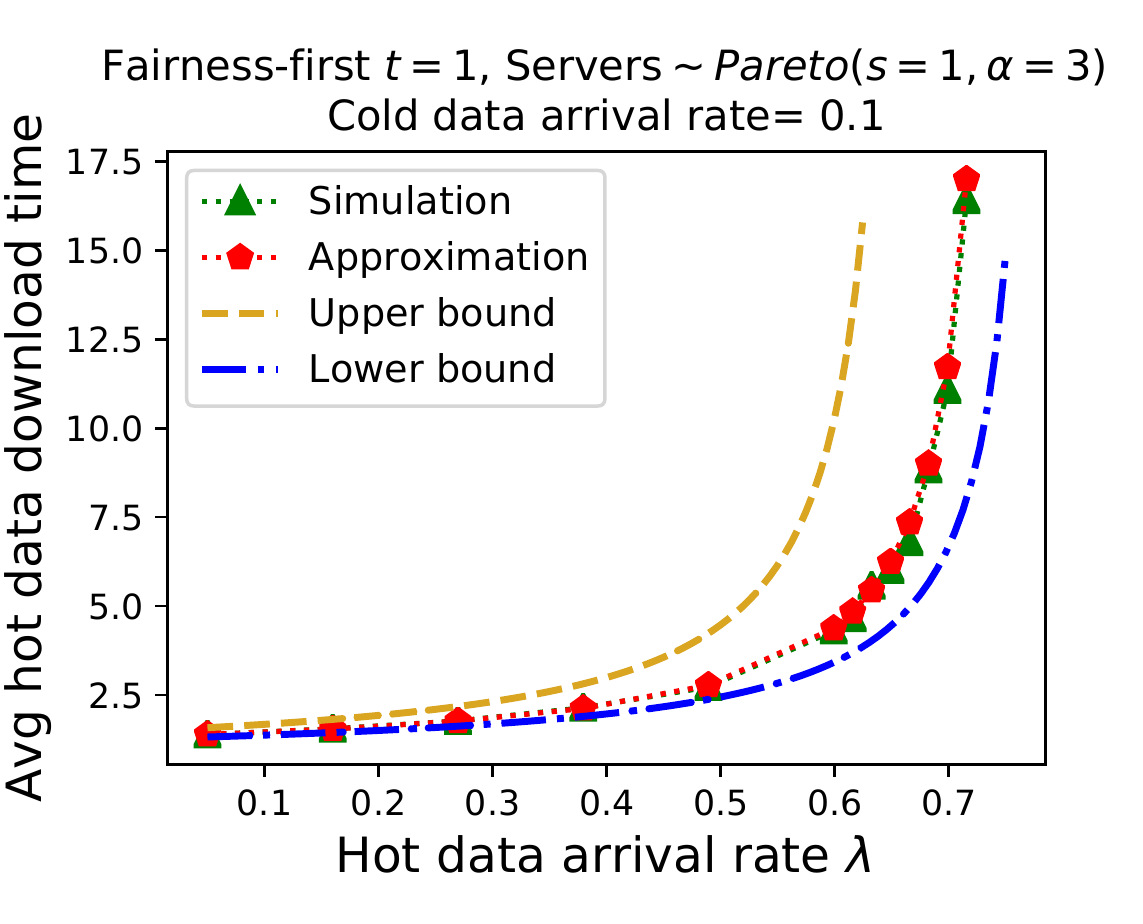}
  \end{subfigure}
  \begin{subfigure}[h]{.4\textwidth}
    \centering
    \includegraphics[width=1\textwidth, keepaspectratio=true]{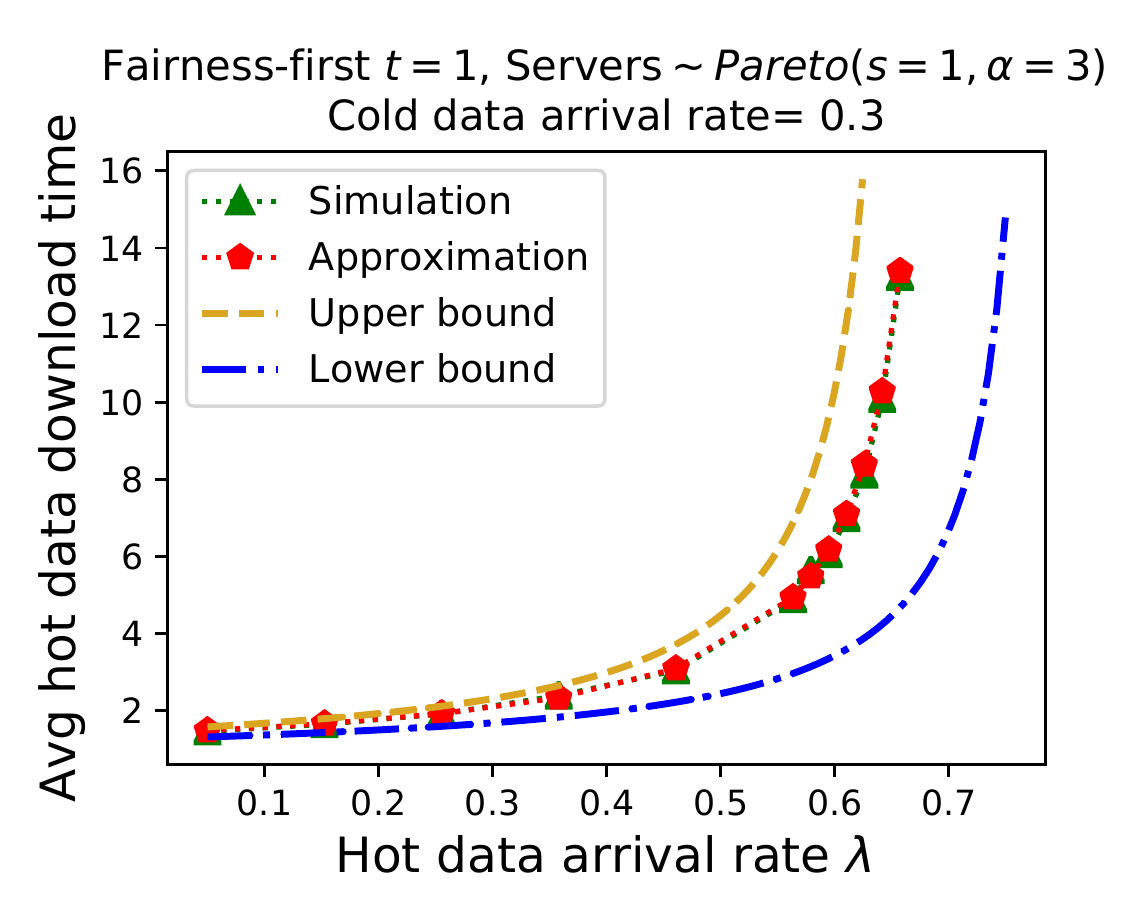}
  \end{subfigure}
  \caption{Comparison of the upper and lower bounds in Thm.~\ref{thm_fairnessfirst_ul_bound}, the $M/G/1$ approximation in Prop.~\ref{prop_fairnessfirst_lowtraff}, and the simulated average hot data download time in fairness-first system.}
  \label{fig:plot_simplex_ff_sim_vs_model}
\end{figure}

To recap, replicate-to-all scheduler aims to exploit download with redundant copies for each request arrival with no distinction. On the other hand, fairness-first scheduler allows for download with redundancy only for hot data requests by eliminating any negative impact caused on the cold data download by the redundant hot data request copies.
At low arrival rate, one would expect replicate-to-all to outperform fairness-first in both hot or cold data download since the queues do not build up much and the redundant request copies do not significantly increase the waiting times of the requests. However at higher arrival regimes, replicate-to-all overburdens the cold data servers with redundant requests and cause great pain in the waiting times experienced by the cold data requests. Fig.~\ref{fig:plot_reptoall_over_ff} gives an illustration of these observations and compares the gain of replicate-to-all over the fairness-first scheduler. After certain threshold in hot data arrival rate, replicate-to-all starts incurring more pain in percentage (negative gain) on cold data download than the gain achieved for hot data download. This is intuitive since download with redundancy has diminishing returns as the arrival rate gets higher, and the high waiting times caused by the excessive number of redundant request copies in the system has a much greater impact on the cold data download time.
\begin{figure}[htbp]
  \centering
  \begin{subfigure}[h]{.4\textwidth}
    \centering
    \includegraphics[width=1\textwidth, keepaspectratio=true]{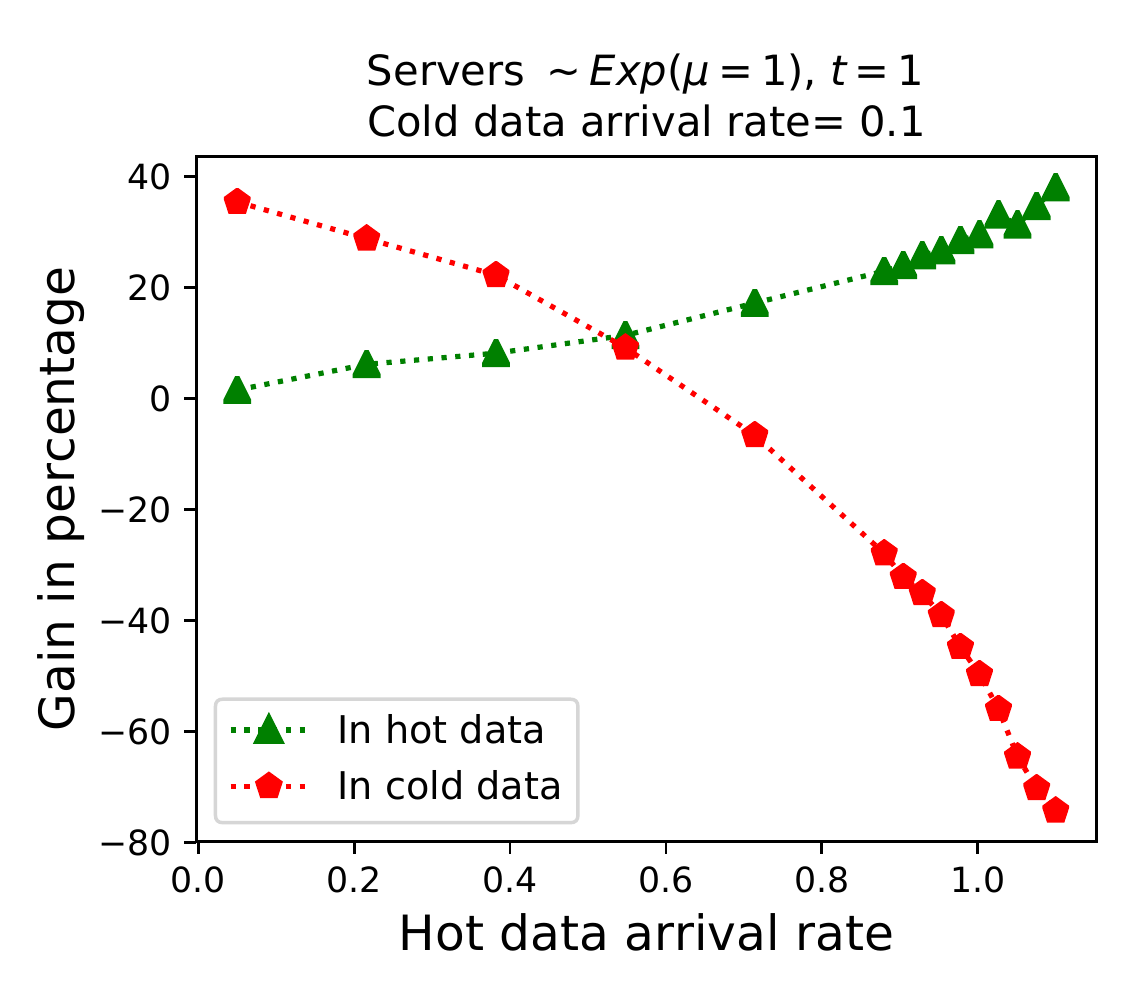}
  \end{subfigure}
  \begin{subfigure}[h]{.4\textwidth}
    \centering
    \includegraphics[width=1\textwidth, keepaspectratio=true]{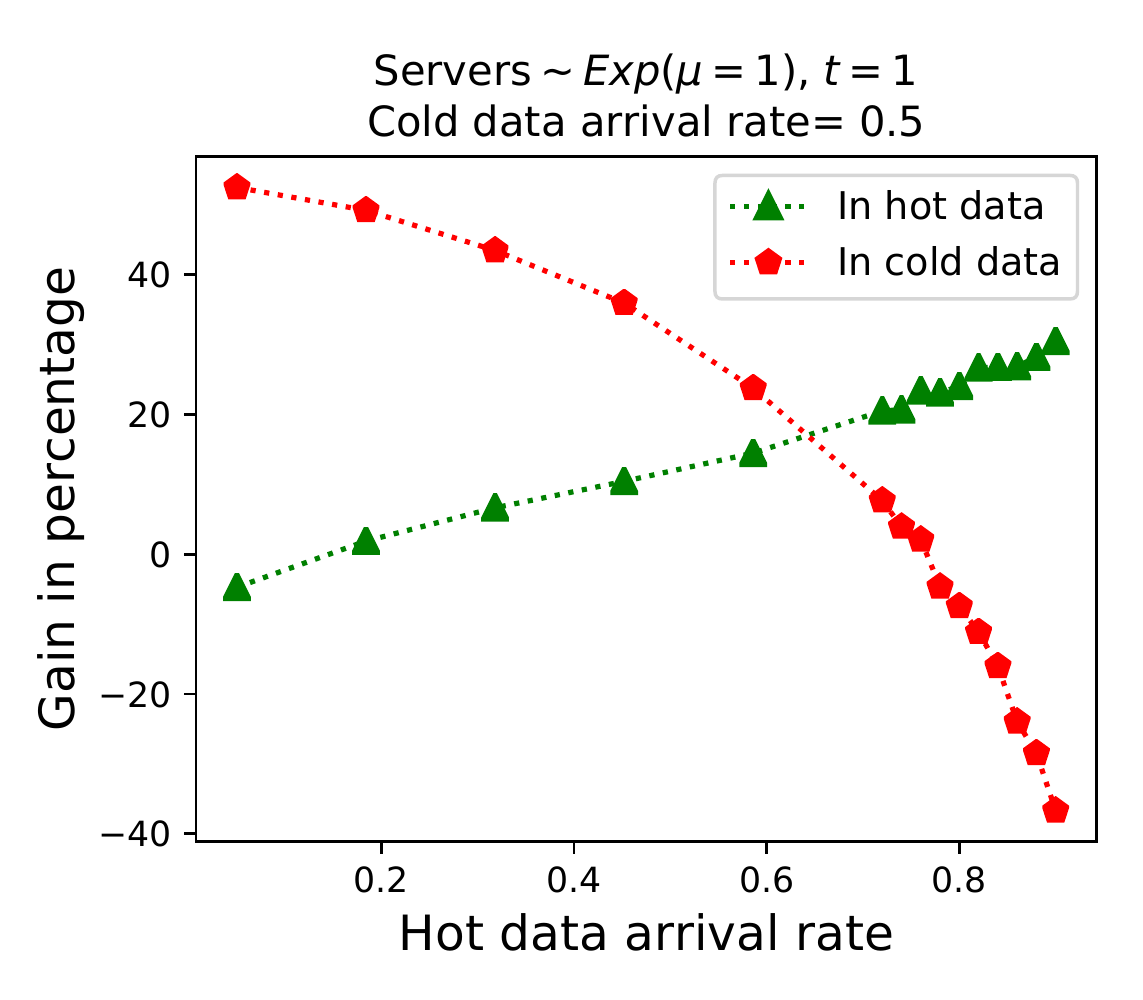}
  \end{subfigure}
  \caption{Gain or pain of replicate-to-all over fairness-first scheduler in average hot and cold data download time. Negative gain means pain, which indicates that replicate-to-all performs worse than fairness-first in terms of the hot or cold data download time. }
  \label{fig:plot_reptoall_over_ff}
\end{figure}

\section*{Acknowledgments}
This material is based upon work supported by the National Science Foundation under Grant No.~CIF-1717314.

\bibliographystyle{unsrt}
\bibliography{references}

\appendix
\section{Appendix}
\subsection{On the Conj.~\ref{conj_reptoall_t_fjs} in Sec.\ref{sec:sec_reptoall}}
\label{subsec:subsec_simplex_t_conjecture}
We here present a Theorem that helps to build an intuition for Conj.~\ref{conj_reptoall_t_fjs}. We firstly assume that the service times at the servers are exponentially distributed with rate $\mu$. Let us redefine the state of the replicate-to-all system $S$ as the type (see Lm.~\ref{lm_reqservtime_dists} for details) of service start for the request at the head of the system, so $S \in \{0, \ldots, t\}$. System state transitions correspond to time epochs at which a request moves into service (see Def.~\ref{def_req_servstart} for the definition of request service start times).

Given that Conj.~\ref{conj_reptoall_t_fjs} holds i.e., $f_j > f_{j+1}$ for $j = 0, \dots, t-1$, one would expect the average drift at state-$j$ to be towards the lower rank state-$i$'s with $i < j$.
In the Theorem below we prove this indeed holds. However, it poses neither a necessary nor a sufficient condition for Conj.~\ref{conj_reptoall_t_fjs}. The biggest in proving the conjecture is that a possible transition exists between every pair of states. Finding transition probabilities requires the exact analysis of the system, which proved to be formidable. For instance, if the transitions were possible only between the consecutive states $j$ and $j+1$ as in a birth-death chain, Theorem below would have implied the conjecture.
\begin{theorem}
  In replicate-to-all system, let $J_i$ be the type of service time distribution for request-$i$. Then $Pr\{J_{i+1} > j|J_i=j\} < 0.5$. In other words, given that a request with type-$j$ service time distribution, the subsequent request is more likely to be served with type-$i$ distribution such that $i < j$.
  \label{thm_necc_cond_for_conj_reptoall_t_fjs}
\end{theorem}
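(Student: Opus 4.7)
The plan is to analyze the joint evolution of the recovery-group leads between request $i$'s service start and request $i+1$'s service start. For each recovery group $l$, let $d_l\geq 0$ denote its \emph{lead}, i.e., how many more requests its leading server has completed than its slow server at the given instant; then $J_i=j$ is equivalent to having exactly $j$ groups with $d_l\geq 1$ at $i$'s service start. By the memoryless property of exponential service times, during $i$'s service the systematic server, both servers in each non-leading group, and in each leading group both the slow server (on $i$'s copy) and the leading server (on its current task) race on independent exponential clocks. Their firings generate three types of events relevant to $J_{i+1}$: (i) $i$'s departure (from the systematic server, from the slow server of a leading group, or when both servers of a non-leading group finish $i$'s copy); (ii) an advance of a leading server, incrementing its lead by one; and (iii) the first completion inside a non-leading group, which converts it into a leading group with $d=1$.

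When $i$ departs, the slow servers' reference request shifts from $i$ to $i+1$, which implicitly decreases every lead by one. Accordingly, a leading group $l$ is still leading for $i+1$ iff $d_l+a_l\geq 2$, where $a_l$ is the number of advances of its leading server during $i$'s service; in particular, a group that entered with $d_l=1$ stays leading only if $a_l\geq 1$. A non-leading group $m$ is leading for $i+1$ iff some server in $m$ completes $i+1$'s copy before $i$ departs, which in turn requires that same server to first complete $i$'s copy. Writing $J_{i+1}=\sum_{l\in L}X_l+\sum_{m\in\bar L}Y_m$ with the obvious indicators, the event $\{J_{i+1}>j\}$ is exactly $\{\sum_m Y_m>\sum_l(1-X_l)\}$: strictly more ``gains'' from non-leading groups than ``losses'' from leading groups.

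The heart of the argument will be the structural asymmetry between gains and losses. Each $Y_m=1$ requires one server in $m$ to win the exponential race \emph{twice in a row}---first finishing $i$'s copy, then finishing $i+1$'s copy---before $i$'s departure clock fires (this clock has rate at least $\gamma+(j+1)\mu$ immediately after the first completion, by memoryless). By contrast, each $1-X_l=1$ for a $d_l=1$ group merely requires the corresponding leading server to fail to fire even once over the same $i$-lifetime. I would make this rigorous by conditioning on the ordered firings of the competing exponentials and telescoping via the strong Markov property, then pairing each candidate gain event with a compound ``two-step advance'' event in a fictitious leading group, so that stochastic dominance on the counts of compound versus simple events yields $\Pr\{\sum_m Y_m>\sum_l(1-X_l)\mid J_i=j\}<1/2$.

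The main obstacle is that the leads $(d_l)_{l\in L}$ are not determined by $J_i=j$ alone; they are hidden state inherited from the process history. The argument therefore has to hold uniformly over all admissible profiles with $d_l\geq 1$---the extremal case being $d_l=1$ for every $l$, which minimizes the number of groups automatically leading for $i+1$---or else be integrated against the stationary distribution of these leads. A second subtlety is that the $X_l$'s and $Y_m$'s are not independent: they share the common stopping time of $i$'s departure. The compound/simple coupling sketched above is what I expect to do the heavy lifting in converting the heuristic ``two events vs.\ one event'' asymmetry into the rigorous bound, and verifying that the coupling respects this common stopping time is the delicate part of the proof.
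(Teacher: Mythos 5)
Your reduction of $\{J_{i+1}>j\}$ to ``more gains from non-leading groups than losses from leading groups,'' and the observation that a gain needs a server to win the race twice while a loss needs only a single non-firing, are both consistent with the dynamics the paper analyzes. The gap is in how you propose to close the argument. First, you have the extremal lead profile backwards: $d_l=1$ for every leading group \emph{maximizes} the number of possible losses, hence makes $J_{i+1}$ stochastically smallest and $\{J_{i+1}>j\}$ \emph{least} likely --- that is the easy case. The binding case is $d_l\geq 2$ for all $l$ (the paper's event $A$), where every leading group automatically stays leading, no losses exist, and the event collapses to $\{\sum_m Y_m>0\}$. The paper reduces to this case with a short monotonicity step ($Pr\{J_{i+1}>j\,|\,J_i=j,A\}\geq Pr\{J_{i+1}>j\,|\,J_i=j\}$) and then works only there. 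In that case your coupling of compound gain events against simple loss events has nothing to pair with --- already at $j=0$ there are no leading groups at all --- so no stochastic dominance between the counts of gains and losses can deliver the bound; indeed ``$\#\text{gains}\leq_{st}\#\text{losses}$'' is simply false when the loss count is identically zero yet gains occur with positive probability.

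What remains in the no-loss case is to show that the probability that some level group records two consecutive completions by one server before request $i$ departs is below $1/2$, and this needs an essentially exact computation rather than a pairing or union bound. The paper does it by following the embedded exponential race: with $j$ groups ahead and all rates $\mu$, the next relevant event is a departure with probability $p_j^D=\frac{1+j}{1+2t}$ or a conversion of a level group to lead one with probability $p_j^{+1}=\frac{2(t-j)}{1+2t}$, giving the recurrence $Pr\{B_j\}=p_j^D+p_j^{+1}Pr\{B_{j+1}\}$ and, by induction, $Pr\{B_j\}>1/2$. The feature your sketch misses is the self-limiting feedback that makes this work: every conversion adds the converted group's slow server to the set of departure clocks, so the departure rate grows by $\mu$ with each potential gain in progress. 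Without tracking that, the bound is out of reach --- at $j=0$ the recurrence gives $Pr\{B_0\}\to 1/2$ as $t\to\infty$, so there is no constant-factor slack for a coupling that pairs each two-step gain against a fictitious one-step event. (Your worry about the common stopping time is real but is dissolved by exactly this sequential race decomposition.)
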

\begin{proof}
Let us define $L_k(t)$ as the absolute difference of queue lengths at repair group $k$ and time $t$. Suppose that the $i$th request moves in service at time $\tau$ and its service time is type-$j$; namely $J_i=j$. Let also $A$ denote the event that $L_k(\tau) > 1$ for every repair group $k$ that has a leading server at time $\tau$, we refer to other recovery groups as non-leading. Then, the following inequality holds
\begin{equation}
  Pr\{J_{i+1} > j|J_i=j, A\} > Pr\{J_{i+1} > j|J_i=j\}
\label{eq:eq_Jip1_g_j_given_Ji_eq_j__ineq}
\end{equation}
Event $A$ guarantees that $J_{i+1} \geq j$ i.e., $Pr\{J_{i+1} \geq j|J_i=j, A\}=1$, because even when none of the leading servers advances before the $i$th request departs, the next request will make at least type-$j$ start.

We next try to find
\[ Pr\{J_{i+1} > j|J_i=j, A\} = 1 - Pr\{J_{i+1}=j|J_i=j, A\}. \]
Suppose that the $i$th request departs at time $\tau^\prime$ and without loss of generality, let the repair group $k$ has a leading server only if $k \leq j$. The event $\{J_{i+1}=j|J_i=j, A\}$ is equivalent to the event
\begin{equation}
  B_j = \{L_k(\zeta) < 2; \zeta \in [\tau, \tau^\prime], ~j < k \leq t\} \text{ for } 0 < j < t-1.
\label{eq:eq_event_Bj}
\end{equation}
This is because for the $(i+1)$th request to have type-$(j+1)$ service time, one of the servers in at least one of the non-leading recovery groups should advance by \textit{at least two} replicas before the $i$th request terminates.

Event $B_j$ can be re-expressed as
\[ B_j = \bigcup\limits_{l=0}^{t-j} C_l; \quad C_l = \{L_{k_i}(\tau)=1; 1 \leq i \leq l, j < k_i \leq t\}. \]
Event $C_l$ describes that $l$ non-leading recovery groups start leading by one replica before the departure of $i$th request. Given that there are currently $j$ leading recovery groups, let $p_j^{+1}$ be the probability that a non-leading repair group starts to lead by one before the departure of $ith$ request, and $p_j^D$ be the probability that $i$th request departs first. Then, we can write
\[ Pr\{C_l\} = p_{j+l}^D\prod\limits_{k=j}^{j+l-1}p_k^{+1}. \]
Since the events $C_l$ for $l = 0, \dots, t-j$ are disjoint, $Pr\{B_j\} = \sum\limits_{l=0}^{t-j} Pr\{C_l\}$ and we get the recurrence relation
\[ Pr\{B_j\} = p_j^D + p_j^{+1}Pr\{B_{j+1}\} \]
Since the service times at the servers are exponentially distributed, the probabilities are easy to find as
\[ p_j^D = \frac{1+j}{1+2t}, \quad p_j^{+1} = \frac{2(t-j)}{1+2t}. \]
Then we find
\begin{equation*}
  \begin{split}
    Pr\{B_{t-1}\} &= p_{t-1}^D + p_{t-1}^{+1}p_t^D \\
    &= \frac{1+(t-1)}{1+2t} + \frac{2}{1+2t}\frac{1+t}{1+2t} \\
    &= \frac{1+t}{1+2t} + \frac{1}{(1+2t)^2} \\
    &= 1 - \frac{t}{1+2t} + \frac{1}{(1+2t)^2} > \frac{1}{2}
  \end{split}
\end{equation*}
Suppose $Pr\{B_{j+1}\} > 0.5$, then we have
\begin{equation*}
  \begin{split}
    Pr\{B_j\} &= \frac{1+j}{1+2t} + \frac{2(t-i)}{1+2t}Pr\{B_{j+1}\} \\
    &> \frac{1+i}{1+2t} + \frac{(t-i)}{1+2t}\\
    &= \frac{1+t}{1+2t} = \frac{1}{2} + \frac{1/2}{1+2t} > \frac{1}{2}
  \end{split}
\end{equation*}
Knowing that $Pr\{B_{t-1}\} > 0.5$ together with $Pr\{B_k\} > 0.5$, and given that $Pr\{B_{k+1}\} > 0.5$ we have $Pr\{B_j\} > 0.5$ for each $j$.

By \eqref{eq:eq_Jip1_g_j_given_Ji_eq_j__ineq} and \eqref{eq:eq_event_Bj}, we find
\begin{equation*}
\begin{split}
  Pr\{J_{i+1} > j|J_i=j\} &< Pr\{J_{i+1} > j|J_i=j, A\} \\
  &= 1 - Pr\{B_j\} < 0.5.
\end{split}
\end{equation*}
which tells us that for any request with type-$j$ service time, the subsequent request is more likely to have service time with type less than $j$.
\end{proof}

\subsection{Approximate analysis of the Markov process for the replicate-to-all system with availability one}
\label{subsec:subsec_reptoall_t1_pyramid_mp_analysis}
Here we give an approximate analysis of the process illustrated in Fig.~\ref{fig:fig_reptoall_t1_mp__high_traff_approx} with a guessing based local balance equations approach. Consider the case where all three servers operate at the same rate, i.e., $\alpha = \beta = \mu$, which makes the pyramid process symmetric around the center column, i.e., $p_{k,(i,0)} = p_{k,(0,i)}$ for $1 \leq i \leq k$. Thus, the following discussion is given in terms of the states on the right side of the pyramid.

Observe that under low traffic load, system spends almost the entire time in states $(0,(0,0))$, $(1,(0,0))$, $(1,(0,1))$ and $(1,(1,0))$. Given this observation, notice that the rate entering into $(1,(0,0))$ due to request arrivals is equal to the rate leaving the state due to request departures. To help with guessing the steady-state probabilities, we start with the assumption that rate entering into a state due to request arrivals is equal to the rate leaving the state due to request departures. This gives us the following relation between the steady-state probabilities of the column-wise subsequent states:
\begin{equation}
  p_{k,(i,0)} = \frac{\lambda}{\gamma+2\mu}p_{k-1,(i,0)},~0 \leq i \leq k.
  \label{eq:eq_geometric_over_column}
\end{equation}
Let us define $\tau = \lambda/(\gamma+2\mu)$. This relation allows us to write $p_{k,(i,0)} = \tau^{k-i}p_{i,(i,0)}$. However this obviously won't hold for higher arrival rates since at higher arrival rates some requests wait in the queue, which requires the rate entering into a state due to request arrivals to be higher than the rate leaving the state due to task completions.
To be used in the following discussion, first we write $p_{1,(1,0)}$ in terms of $p_{0,(0,0)}$ from the global balance equations as the following.
\begin{equation}
  \label{eq:eq_p_0_global_balance}
  \begin{split}
    & \lambda p_{0,(0,0)} = \gamma p_{1,(0,0)} + 2(\gamma+\mu)p_{1,(1,0)}, \\
    & p_{1,(1,0)} = \frac{\lambda-\gamma\tau}{2(\gamma+\mu)}p_{0,(0,0)}
  \end{split}
\end{equation}
For the nodes at the far right side of the pyramid, we can write the global balance equations and solve the corresponding recurrence relation as the following.
\begin{equation}
\begin{split}
  & p_{i,(i,0)}(\lambda+\mu+\gamma) = p_{i,(i-1,0)}\mu + p_{i+1,(i+1,0)}(\mu+\gamma),~i \geq 1, \\
  & p_{i+2,(i+2,0)} = b p_{i+1,(i+1,0)} + a p_{i,(i,0)},~i \geq 0 \text{ where} \\
  &\qquad b=1+\frac{\lambda}{\mu+\gamma}, \qquad a=\frac{-\tau\mu}{\gamma+\mu}, \\
  &\implies p_{i,(i,0)} = \frac{A}{r_0^i} + \frac{B}{r_1^i} \text{ where } \\
  &\qquad B = \frac{r_0p_{0,(0,0)} + (p_{1,(1,0)}-bp_{0,(0,0)})r_0r_1}{r_0-r_1}, \\
  &\qquad A = p_{0,(0,0)}-B \text{ where } \\
  &\qquad (r_0, r_1) = \Bigl(\frac{-b-\sqrt{\Delta}}{2a}, \frac{-b+\sqrt{\Delta}}{2a}\Bigr);~\Delta=b^2+4a, \\
  & p_{k,(i,0)} = p_{k,(0,i)} = \tau^{k-i}\Bigl(\frac{A}{r_0^i} + \frac{B}{r_1^i}\Bigr),~0 \leq i \leq k.~\label{eq:eq_edge_global_balance}
\end{split}
\end{equation}

Even though the algebra does not permit much cancellation, one can find the unknowns $A$ and $B$ above by computing $p_{0,(0,0)}$ as follows.
\begin{longaligned}
  & \sum_{k=0}^{\infty} p_{k,(0,0)} + \sum_{i=1}^{\infty} \sum_{k=i}^{\infty} (p_{k,(i,0)} + p_{k,(0,i)}) \\
  &= \frac{p_{0,(0,0)}}{1-\tau} + \frac{2}{1-\tau}\sum_{i=1}^{\infty}p_{i,(i,0)} \quad (\tau < 1) \\
  &= \frac{p_{0,(0,0)}}{1-\tau} + \frac{2}{1-\tau}\sum_{i=1}^{\infty}\left(\frac{A}{r_0^i} + \frac{B}{r_1^i}\right) \\
  &= \frac{p_{0,(0,0)}}{1-\tau} + \frac{2}{1-\tau}\left(\frac{A}{r_0-1} + \frac{B}{r_1-1}\right) \quad ({\eqref{eq:eq_edge_global_balance}},r_0,r_1 > 1) \\
  &= \frac{p_{0,(0,0)}}{1-\tau} + \frac{2}{1-\tau}\left(\frac{(p_{0,(0,0)}-B)(r_1-1) + B(r_0-1)}{(r_1-1)(r_0-1)}\right) \\
  &= \frac{p_{0,(0,0)}}{1-\tau} + \frac{2}{1-\tau}\left(\frac{B(r_0-r_1)+p_{0,(0,0)}(r_1-1)}{(r_1-1)(r_0-1)}\right) \\
  &= \frac{p_{0,(0,0)}}{1-\tau} + \\
  & \quad \frac{2}{1-\tau}\left(\frac{(r_0p_0+r_0r_1(p_{1,(1,0)}-bp_{0,(0,0)}))+p_{0,(0,0)}(r_1-1)}{(r_1-1)(r_0-1)}\right) \\
  &= p_{0,(0,0)}\left[\frac{1 + 2\Bigl(\frac{r_0 + r_0 r_1\bigl((\lambda-\gamma\tau)/(2\gamma + 2\mu) - b\bigr) + r_1 - 1}{(r_1-1)(r_0-1)} \Bigr)}{1-\tau}\right] = 1.
\end{longaligned}
Simulation results show that the model for $p_{k,(i,0)}$ discussed above is proper in structure i.e., $p_{k,(i,0)}$ decreases exponentially in $k$ and $i$. However, simulations show that $\tau(\lambda) = k(\gamma,\mu)\lambda/(\gamma+2\mu)$. For instance, for $\gamma=\mu$ we can find that $k(\gamma,\mu) \simeq 0.3$. Nevertheless this does not permit to find a general expression for $k(\gamma,\mu)$.

\subsection{Service rate allocation in the replicate-to-all system with availability one}
\label{subsec:subsec_dE_T_dro_less_than_zero}
Here we present the algebra that shows $\partial E[T_{approx}]/\partial\rho < 0$ where $E[T_{approx}]$ is given in Thm.~\ref{thm_reptoall_t1_hightraff_approx}. Define $C = \gamma+2\mu$ and $\rho = \gamma/\mu$, then the followings can be calculated

\begin{longaligned}
  & \hat{f}_0 = \frac{1}{1 + 2/[\rho(\rho+2)]}, \quad\quad \frac{\partial \hat{f}_0}{\partial \rho} = \frac{4(\rho+1)}{(\rho^2+2\rho+2)^2}, \\
  & E[V_1] = \frac{\rho+2}{C(\rho+1)}, \quad\quad E[V_1^2]=\frac{2}{C^2}\Bigl(\frac{\rho+2}{\rho+1}\Bigr)^2, \\
  & E[V_0] = \frac{2(\rho+2)}{C(\rho+1)} - \frac{1}{C}, \quad\quad E[V_0^2]=\frac{4}{C^2}\Bigl(\frac{\rho+2}{\rho+1}\Bigr)^2 - \frac{2}{C^2}, \\
  & \frac{\partial E[V_1]}{\partial \rho} = \frac{-1}{C(\rho+1)^2}, \quad\quad \frac{\partial E[V_1^2]}{\partial \rho} = \frac{-4(\rho+2)}{C^2(\rho+1)^3}, \\
  & \frac{\partial E[V_0]}{\partial \rho} = \frac{-2}{C(\rho+1)^2}, \quad\quad \frac{\partial E[V_0^2]}{\partial \rho} = \frac{-8(\rho+2)}{C^2(\rho+1)^3}, \\
  & (i)\quad E[V_{lb}] = E[V_1] + \hat{f}_0(E[V_0] - E[V_1]), \\
  & \begin{aligned}
    \frac{\partial E[V_{lb}]}{\partial \rho} =& \frac{\partial E[V_1]}{\partial \rho} + \frac{\partial \hat{f}_0}{\partial \rho}(E[V_0]-E[V_1]) \\
    &+ \frac{\partial (E[V_0]-E[V_1])}{\partial \rho}\hat{f}_0 \\
    \end{aligned} \\
  & \begin{aligned}
    = &\frac{-1}{C(\rho+1)^2} + \frac{4(\rho+1)}{(\rho^2+2\rho+2)^2}\frac{1}{C(\rho+1)} \\
    &- \frac{1}{C(\rho+1)^2}\frac{\rho(\rho+2)}{\rho^2+2\rho+2} \\
    \end{aligned} \\
  &= \frac{1}{C}(\frac{-1}{(\rho+1)^2} + \frac{4}{(\rho^2+2\rho+2)^2} - \frac{\rho(\rho+2)}{(\rho+1)^2(\rho^2+2\rho+2)}) \\
  &= \frac{-(\rho^2+2\rho+2)^2 + 4(\rho+1)^2 - (\rho^2+2\rho)(\rho^2+2\rho+2)}{C(\rho+1)^2(\rho^2+2\rho+2)} \\
  &= \frac{-2(\rho+1)^2(\rho^2+2\rho+2) + 4(\rho+1)^2}{C(\rho+1)^2(\rho^2+2\rho+2)} \\
  &= \frac{-2(\rho+1)^2(\rho^2+2\rho)}{C(\rho+1)^2(\rho^2+2\rho+2)} < 0, \\
  & (ii)\quad E[V_{lb}^2] = E[V_1^2] + \hat{f}_0(E[V_0^2] - E[V_1^2]), \\
  & \begin{aligned}
    \frac{\partial E[V_{lb}^2]}{\partial \rho} =& \frac{\partial E[V_1^2]}{\partial \rho} + \frac{\partial \hat{f}_0}{\partial \rho}(E[V_0^2]-E[V_1^2]) \\
    &+ \frac{\partial (E[V_0^2]-E[V_1^2])}{\partial \rho}\hat{f}_0 \\
    \end{aligned} \\
  & \begin{aligned}
    =& \frac{-4(\rho+2)}{C^2(\rho+1)^3} + \frac{8(\rho+1)}{C^2(\rho^2+2\rho+2)^2}((\frac{\rho+2}{\rho+1})^2-1) \\
    &- \frac{4\rho(\rho+2)^2}{C^2(\rho+1)^3(\rho^2+2\rho+2)} \\
   \end{aligned} \\
  & \begin{aligned}
   =& \frac{8}{C^2}(\frac{2\rho+3}{(\rho+1)(\rho^2+2\rho+2)^2} - \frac{\rho+2}{(\rho+1)^3}) \\
   &- \frac{4\rho(\rho+2)^2}{C^2(\rho+1)^3(\rho^2+2\rho+2)} < 0.
   \end{aligned} \\
  & (iii)\quad E[T_{approx}] = E[V_{lb}] + \frac{\lambda E[V_{lb}^2]}{2(1 - \lambda E[V_{lb}])} \\
  & \begin{aligned}
    \frac{\partial E[T_{approx}]}{\partial \rho} =& \frac{\partial E[V_{lb}]}{\partial \rho} \\
    &+ \frac{\lambda}{2}\frac{(\frac{\partial E[V_{lb}^2]}{\partial \rho}(1-\lambda E[V_{lb}]) + \frac{\partial E[V_{lb}]}{\partial \rho}\lambda E[V_{lb}^2])}{(1-\lambda E[V_{lb}])^2} \\
    \end{aligned} \\
  &= \frac{\partial E[V_{lb}]}{\partial \rho}(1 + \frac{\lambda^2 E[V_{lb}^2]}{2(1-\lambda E[V_{lb}])}) + \frac{\partial E[V_{lb}^2]}{\partial \rho}\frac{\lambda}{2(1-\lambda E[V_{lb}])} \\
  & \text{under stability } \lambda E[V_{lb}] < 1 \text{ and we found above } \\
  & \frac{\partial E[V_{lb}]}{\partial \rho} < 0, \frac{\partial E[V_{lb}^2]}{\partial \rho} < 0 \text{ which shows that } \frac{\partial E[T_{approx}]}{\partial \rho} < 0.
\end{longaligned}

\subsection{Matrix Analytic solution for the replicate-to-all system with availability one}
\label{subsec:subsec_reptoall_t1_matrix_analytic}
Defining $\delta = \alpha+\beta+\gamma+\lambda$, the sub-matrices forming $\bm{Q}$ are given below.
\begin{equation*}
  \begin{split}
  \bm{F}_0 &= \left[\begin{array}{cccc}
-\lambda & 0 & \lambda & 0  \\
\alpha+\gamma & \beta-\delta & 0 & 0  \\
\gamma & \beta & -\delta & \alpha \\
\beta+\gamma & 0 & 0 &\alpha-\delta \\
\end{array}\right], \\
  \bm{H}_0 &= \left[\begin{array}{ccccc}
0 & 0 & 0 & 0 & 0 \\
0& \lambda & 0 & 0 & 0 \\
0 & 0 & \lambda & 0 & 0\\
0 & 0 &  0 & \lambda & 0
\end{array}\right], \\
  \bm{L}_0 &= \left[\begin{array}{cccc}
0 & \alpha+\gamma & 0 & 0 \\
0 & 0 & \alpha+\gamma & 0 \\
0 & 0 & \gamma & 0 \\
0 & 0 & \beta+\gamma & 0\\
0 & 0 & 0 & \beta+\gamma
\end{array}\right], \\
  \bm{F} &= \left[\begin{array}{ccccc}
\beta-\delta & 0 & 0 & 0 & 0 \\
\beta & -\delta & 0 & 0 & 0 \\
0 & \beta & -\delta & \alpha & 0\\
0 & 0 & 0 & -\delta & \alpha \\
0 & 0 &  0 & 0 & \alpha-\delta
\end{array}\right], \\
  \bm{L} &= \left[\begin{array}{ccccc}
0 & \alpha+\gamma & 0 & 0 & 0 \\
0 & 0 & \alpha+\gamma & 0 & 0 \\
0 & 0 & \gamma & 0 & 0 \\
0 & 0 & \beta+\gamma & 0 & 0\\
0 & 0 & 0 & \beta+\gamma & 0
\end{array}\right], \\
  \bm{H} &= \left[\begin{array}{ccccc}
\lambda & 0 & 0 & 0 & 0 \\
0 & \lambda & 0 & 0 & 0 \\
0 & 0 & \lambda & 0 & 0 \\
0 & 0 & 0 & \lambda & 0\\
0 & 0 & 0 & 0 & \lambda
\end{array}\right].
  \end{split}
\end{equation*}

\end{document}